\documentclass[oneside,english]{amsart}
\usepackage[T1]{fontenc}
\usepackage[latin9]{inputenc}
\usepackage{geometry}
\geometry{verbose,tmargin=2.8cm,bmargin=2.8cm,lmargin=2.8cm,rmargin=2.8cm}
\usepackage{amstext}
\usepackage{amsthm}
\usepackage{amssymb}
\usepackage{setspace}
\pagestyle{plain}
\onehalfspacing

\makeatletter
\numberwithin{equation}{section}
\numberwithin{figure}{section}
\theoremstyle{plain}
\newtheorem{thm}{\protect\theoremname}[section]
\theoremstyle{plain}
\newtheorem{cor}[thm]{\protect\corollaryname}
\theoremstyle{plain}
\newtheorem{lem}[thm]{\protect\lemmaname}
\theoremstyle{definition}
\newtheorem{defn}[thm]{\protect\definitionname}
\theoremstyle{remark}
\newtheorem{claim}[thm]{\protect\claimname}

\makeatother

\usepackage{babel}
\providecommand{\claimname}{Claim}
\providecommand{\corollaryname}{Corollary}
\providecommand{\definitionname}{Definition}
\providecommand{\lemmaname}{Lemma}
\providecommand{\theoremname}{Theorem}

\begin{document}
\title{Hypercontractivity on high dimensional expanders:\\
\footnotesize Approximate Efron-Stein Decompositions for {\Large$\varepsilon$}-product spaces}
\author{Tom Gur, Noam Lifshitz, Siqi Liu}
\thanks{Tom Gur, University of Warwick, tom.gur@warwick.ac.uk, supported by
the UKRI Future Leaders Fellowship MR/S031545/1.}
\thanks{Noam Lifshitz, Hebrew University of Jerusalem, noam.lifshitz@mail.huji.ac.il,
supported in part by ERC advanced grant 834735.}
\thanks{Siqi Liu, UC Berkeley, sliu18@berkeley.edu.}
\begin{abstract}
We prove hypercontractive inequalities on high dimensional expanders.
As in the settings of the $p$-biased hypercube, the symmetric group,
and the Grassmann scheme, our inequalities are effective for global
functions, which are functions that are not significantly affected
by a restriction of a small set of coordinates. As applications, we
obtain Fourier concentration, small-set expansion, and Kruskal--Katona
theorems for high dimensional expanders. Our techniques rely on a
new approximate Efron--Stein decomposition for high dimensional link
expanders.
\end{abstract}

\maketitle

\section{Introduction}

High-dimensional expanders (HDX) are sparse simplicial complexes with
strong structural properties. More accurately, a simplicial complex
$X$ is a $\epsilon$-HDX (or an $\epsilon$-link expander) if the
$1$-skeleton of each link of the complex $X$ is a spectral expander
graph whose second-largest eigenvalue is bounded by $\epsilon$. In
recent years, HDX have received much attention in theoretical computer
science \cite{KaufmanM17,parzanchevski2017mixing,KaufmanO18a,lubotzky2018high,liu2019high,KaufmanO20,golowich2021improved},
finding applications in property testing \cite{KaufmanL14,DinurK17,gotlib2019testing},
coding theory \cite{DinurHKNT19,dikstein2020locally}, statistical
physics \cite{anari2021spectral,chen2021optimal,anari2021entropic},
complexity theory \cite{alev2019approximating,anari2019log,dinur2020explicit,hopkins2020high},
and beyond. Notably, very recently the study of HDX led to a breakthrough
in quantum computing, breaking the $\sqrt{n}$ distance in quantum
LDPC \cite{evra2020decodable}, as well as to a resolution of one
of the most important questions in coding theory, namely, the first
construction of $O(1)$-query asymptotically good locally testable
codes \cite{c3LTC}.

In this work, we focus on analysis of Boolean functions on high dimensional
expanders, whose systematic study was recently initiated by Dikstein
et al. \cite{DiksteinDFH18}. This continues a long line of investigation
of Fourier analysis of Boolean functions on extended domains beyond
the Boolean hypercube, such as the Boolean slice \cite{ODonnellK13,Filmus16,FilmusM16,FilmusKMW18},
the Grassmann scheme \cite{DinurKKMS18a,KhotMS18,EKL21}, the symmetric
group \cite{filmus2020hypercontractivity,Filmus2021+,DFLLV2021},
the $p$-biased cube \cite{ellis2019biased,lifshitz2019noise,Filmus2021exposition},
and the multi-slice \cite{FOW2019,braverman2021invariance}. The foregoing
extended domains arise naturally throughout theoretical computer science,
and indeed, the study of analysis of Boolean functions on extended
domains has recently led to a breakthrough regarding the unique games
conjecture \cite{KhotMS17,DinurKKMS18,DinurKKMS18a,KhotMS18}.

Hypercontractive inequalities are amongst the most powerful technical
tools in Fourier analysis, yielding a plethora of applications in
algorithms, complexity, learning theory, statistical physics, social
choice, and beyond (see \cite{ODonnell} and references therein).
Loosely speaking, such statements assert that functions of low Fourier
degree are ``well behaved'' in terms of their distribution around
their mean. Concretely, in the Boolean hypercube, the simplest example
of a hypercontractive inequality is Bonami's lemma, which states that
for every function $f\colon\{0,1\}^{n}\to\mathbb{{R}}$ of Fourier
degree at most $d$, it holds that $\|f\|_{4}\leq\sqrt{{3}^{d}}\|f\|_{2}$.

Alas, in the setting of high dimensional expanders, where the domain
is not a product space and the induced measure is biased, general
strong hypercontractivity cannot hold. The heart of the problem is
that some highly local functions, such as dictators (i.e., $f(x)=x_{i}$),
provide strong counterexamples to hypercontractivity. A similar phenomenon
also occurs in several prominent extended domains, such as the $p$-biased
cube and the Grassmann scheme.

Fortunately, as observed in the setting of the $p$-biased cube \cite{keevash2019hypercontractivity},
all of the aforementioned examples are local, in the sense that a
small number of coordinates can significantly influence the output
of the function. This led to the definition of `global' functions.
For Boolean valued functions, these are functions wherein a small
number of coordinates can change the output of the function only with
a negligible probability. For real valued functions, this is captured
by the $2$-norm remaining roughly the same when restricting O(1)
coordinates of the input. More precisely, consider the setting of
a general product measure. Let $\left(V_{i},\mu_{i}\right)$ be probability
spaces, let $V_{S}=\prod_{i\in S}V_{i}$ and equip $V_{S}$ with the
product measure, which we denote by $\mu_{S}.$ Every function $f\in L^{2}\left(V_{\left[k\right]},\mu\right)$
is equipped with an orthogonal decomposition $\sum_{S\subseteq\left[n\right]}f^{=S}$
known as the Efron--Stein decomposition. The function $f^{=S}$ in
the Efron--Stein decomposition plays a similar role to the function
$\hat{f}\left(S\right)\chi_{S}$ in the Boolean cube. Using that analogy
we write 
\[
f^{\le d}=\sum_{\left|S\right|\le d}f^{=S},
\]
 and $f$ is said to be of degree $d$ if $f=f^{\le d}$. Keevash
et al. \cite{KLLM} introduced the following notions. The \emph{Laplacians
of $f$} are given by 
\[
L_{S}\left[f\right]=\sum_{T\supseteq S}\left(-1\right)^{\left|T\right|}f^{=T}.
\]
For $x\in V_{S}$ the \emph{derivatives} are given by restricting
the laplacians 
\[
D_{S,x}f=L_{S}\left[f\right]\left(x,\cdot\right),
\]
and the $\left(S,x\right)$\emph{-influence }of $f$ is defined as
\[
I_{S,x}\left[f\right]=\|D_{S,x}\left[f\right]\|_{2}^{2}.
\]
In this setting, a function $f$ is $\left(r,\delta\right)$\emph{-global}
if $\|f\left(x,\cdot\right)\|_{2,\mu_{\left[n\right]\setminus S}}^{2}\leq\delta$
for each $\left|S\right|\le r.$ We remark that here, being $\left(r,\delta\right)$-global
for a small $\delta>0$ is, in a sense, equivalent to having $I_{S,x}\left[f\right]\le\delta'$
for a small $\delta'$ for all $\left|S\right|\le r$ and all $x$.
In fact, $\delta,\delta'$ can be taken to be within a factor of $2^{r}$
of one another.

In \cite{KLLM}, it was shown that if $f\in L^{2}\left(V,\mu\right)$
is of degree $d$, then the following hypercontractive inequality
holds:

\begin{equation}
\|f\|_{4}^{4}\le1000^{d}\sum_{S}\mathbb{E}_{x\sim\mu_{S}}I_{S,x}\left[f\right]^{2}.\label{eq:hypercontractivity}
\end{equation}
This allowed them to deduce if a function $f$ of degree $d$ is $\left(d,\delta\right)$\emph{-global},
then 
\[
\|f\|_{4}^{4}\le\delta8000^{d}\|f\|_{2}^{2}.
\]
 Here when setting $\delta=100\|f^{\le d}\|_{2}^{2}$ one gets the
statement $\|f\|_{4}\le C^{d}\|f\|_{2},$ which replicates the behavior
in the Boolean cube. Moreover, the statement is useful even for larger
values of $\delta$.

In this work, we raise the following question.

\medskip{}

\begin{quote}
\begin{center}
\emph{Does hypercontractivity hold for high dimensional expanders?}
\par\end{center}

\end{quote}
\medskip{}

\subsection{Main results}

We answer the question above in the affirmative. Namely, our main
contribution is a hypercontractive inequality for functions on the
$k$-faces of an $\epsilon$-HDX. We denote by $X(k)$ the $k$-faces
of a simplicial complex $X$, and denote by $\mu$ the uniform measure
on its $k$-faces. We define the influences $I_{S,x}^{\le d}$ and
the degree restriction operator $(\cdot)^{\le d}$ analogously to
their definition on the $p$-biased cube (see Section \ref{sec:Efron=002013Stein}
for precise definition). We then prove the following hypercontractive
statement for high dimensional expanders in the spirit of (\ref{eq:hypercontractivity}).
\begin{thm}
Let $X$ be an $\epsilon$-HDX, and let $f\in L^{2}\left(X(k),\mu\right)$.
We have 
\[
\|f^{\le d}\|_{4}^{4}\le20^{d}\sum_{\left|S\right|\le d}\left(4d\right)^{\left|S\right|}\mathbb{E}_{x\sim\mu_{S}}I_{S,x}^{\le d}\left[f\right]^{2}+O_{k}\left(\epsilon^{2}\right)\|f\|_{2}^{2}\|f\|_{\infty}^{2}.
\]
\end{thm}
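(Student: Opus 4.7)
The strategy is to prove an HDX-native analogue of the product-space hypercontractivity inequality \eqref{eq:hypercontractivity} of Keevash et al.\ \cite{KLLM} by running their argument on $X(k)$ with an \emph{approximate} Efron--Stein decomposition in place of the genuine one. The guiding principle is that, although $X(k)$ equipped with its uniform measure is not a product space, the $\epsilon$-expansion of every link forces its Efron--Stein components and their pairwise interactions to agree with those of the associated product space up to $\epsilon$-corrections. My plan is to build this approximate decomposition on the HDX, mimic the KLLM calculation level by level, and isolate all $\epsilon$-discrepancies into a single error term matching the advertised $O_k(\epsilon^{2})\|f\|_{2}^{2}\|f\|_{\infty}^{2}$.

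First I would set up the approximate Efron--Stein machinery on $X(k)$. Following the framework of Dikstein--Dinur--Filmus--Harsha and exploiting the link expansion inductively, one defines HDX Laplacians $L_{S}^{X}$, derivatives $D_{S,x}^{X}$, and influences $I_{S,x}^{X}$ that mirror the product-space definitions recalled in the excerpt. The structural lemmas I would prove are: (i) \emph{approximate orthogonality}, $\langle L_{S}^{X}f,L_{T}^{X}g\rangle = O_{k}(\epsilon)\|f\|_{2}\|g\|_{2}$ whenever $S\neq T$; (ii) an \emph{approximate Parseval identity} relating $\|f^{\le d}\|_{2}^{2}$ to $\sum_{|S|\le d}\mathbb{E}_{x\sim\mu_{S}}\|D_{S,x}^{X}f\|_{2}^{2}$ up to an $O_{k}(\epsilon)\|f\|_{2}^{2}$ error; and (iii) an \emph{approximate Leibniz formula} controlling the interaction of $L_{S}^{X}$ with pointwise products. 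These are precisely the identities driving the product-space proof of \eqref{eq:hypercontractivity}, and on $X(k)$ they hold up to controlled errors governed by the spectral gaps of the links.

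Next I would expand $\|f^{\le d}\|_{4}^{4}=\mathbb{E}\bigl[(f^{\le d})^{2}\cdot(f^{\le d})^{2}\bigr]$ and carry out the KLLM bookkeeping using the HDX Laplacians. In a genuine product space this bookkeeping reorganizes into the main term $20^{d}\sum_{|S|\le d}(4d)^{|S|}\mathbb{E}_{x\sim\mu_{S}}I_{S,x}^{\le d}[f]^{2}$; on $X(k)$, each application of (i)--(iii) introduces a residual discrepancy whose operator norm is $O(\epsilon)$. To convert such discrepancies into $O_{k}(\epsilon^{2})\|f\|_{2}^{2}\|f\|_{\infty}^{2}$ rather than the naive $O_{k}(\epsilon)\|f\|_{2}^{4}$, I would group the cross-terms by the link on which they are evaluated and invoke the spectral gap of that link \emph{twice} via self-adjointness and Cauchy--Schwarz, while using H\"older in the form $\mathbb{E}[(f^{\le d})^{4}]\le\|f^{\le d}\|_{\infty}^{2}\|f^{\le d}\|_{2}^{2}$ to trade one $L^{2}$ factor for an $L^{\infty}$ factor inside the error.

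The principal obstacle is precisely this refined error analysis: a direct expansion generates many cross-terms, and without the double use of the spectral gap and the H\"older trade one obtains only an $\epsilon$-error that also scales as $\|f\|_{2}^{4}$. The delicate heart of the argument is engineering the proof so that (a) both factors of $\epsilon$ are genuinely harvested from the link eigenvalues rather than from naive triangle inequalities, (b) the $L^{\infty}$ factor only touches the error and never multiplies the main term, and (c) the main term retains the clean form $20^{d}\sum_{|S|\le d}(4d)^{|S|}\mathbb{E}_{x\sim\mu_{S}}I_{S,x}^{\le d}[f]^{2}$ rather than an inflated combinatorial variant. I expect the approximate Leibniz formula to be the step where these competing requirements come to a head, since it is the point at which products of Efron--Stein components must simultaneously be decomposed into main and error pieces while preserving the 4-linear structure needed for the KLLM-style bound.
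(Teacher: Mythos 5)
Your high-level plan — approximate Efron--Stein machinery, a KLLM-style expansion of $\|f^{\le d}\|_{4}^{4}$, and absorbing the $\epsilon$-discrepancies into an error with an $\|f\|_{\infty}^{2}$ factor via a H\"older trade — is the correct general direction and matches the paper's philosophy. But you have misidentified where the genuine difficulty lies, and your sketch has a gap precisely at the inductive step.

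The crux is \emph{not} harvesting two factors of $\epsilon$. (In fact, tracing the paper's proof of this theorem shows the error incurred is really $O_{k}(\epsilon)\|f\|_{2}^{2}\|f\|_{\infty}^{2}$ coming out of the analogue of the inductive lemma; a single application of an operator-norm bound of the form $\|A_{S}A_{T}-A_{S\cap T}\|_{2\to 2}\le O_{k}(\epsilon)$ combined with $\|g\|_{4}^{4}\le\|g\|_{2}^{2}\|g\|_{\infty}^{2}$ already produces this, with no need for the double Cauchy--Schwarz self-adjointness device you describe.) The real obstacle, which your plan never confronts, is that the inductive step fails in the HDX setting: in a product space one bounds $\|L_{S}^{\le d}[f]\|_{4}^{4}=\mathbb{E}_{x}\|D_{S,x}[f]\|_{4}^{4}$ and applies the induction hypothesis because $D_{S,x}[f]$ has degree $d-|S|$; on an $\epsilon$-HDX the restricted function $L_{S}^{\le d}[f](x,\cdot)$ is \emph{not} of degree $d-|S|$, so there is nothing to apply the induction hypothesis to. Your approximate-Parseval and approximate-orthogonality lemmas, as stated, do not repair this, and there is no ``approximate Leibniz formula'' step in the paper's argument — the handling of products $f^{=T_{1}}f^{=T_{2}}$ is done combinatorially by splitting pairs $(T_{1},T_{2})$ into three classes, exactly as in the product case but with an $O_{k}(\epsilon)\|f\|_{2}^{2}\|f\|_{\infty}^{2}$ correction for the class that is exactly zero in the product setting.

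What actually closes the induction is the observation that $L_{S}^{\le d}[f]$ admits a \emph{second} bounded approximate Efron--Stein decomposition $\{f_{T}\}_{T\supseteq S,|T|\le d}$ with $f_{T}(x,y)=(L_{S}[f](x,\cdot))^{=T\setminus S}(y)$, whose restriction to each $x$ genuinely lives in degree $\le d-|S|$. One then shows that any two bounded approximate Efron--Stein decompositions of the same function are $L^{4}$-close (with an error $O_{k}(\epsilon^{2}\alpha^{2}\beta^{2})$, where $\alpha$ bounds $L^{2}$-norms and $\beta$ bounds $L^{\infty}$-norms), so the induction hypothesis can be applied to the second decomposition and transferred back to the first. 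This ``two decompositions for the same Laplacian, compared in $L^{4}$'' mechanism is the heart of the argument; your proposal would need to articulate and prove it, and as written it cannot proceed past the first application of the inductive lemma. Relatedly, your item (ii) relating $\|f^{\le d}\|_{2}^{2}$ to $\sum_{|S|\le d}\mathbb{E}_{x}\|D_{S,x}^{X}f\|_{2}^{2}$ conflates the approximate Parseval identity (which is about $\sum_{S}\|f^{=S}\|_{2}^{2}$) with the sum of influences; the two are related by the Laplacian overcounting bound $\sum_{S}\mathbb{E}_{x}I_{S,x}[f]\le 2^{d}\|f\|_{2}^{2}$, but they are not the same statement and should not be merged.
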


In the setting of $\epsilon$-HDX, we say that a function $f$ is
$\left(d,\delta\right)$-global if for each $\left|S\right|\le d$,
we have $\|f\left(x,\cdot\right)\|_{L^{2}\left(V_{x},\mu_{x}\right)}\le\delta.$
We show that we can bound the infinity norm of global functions and
obtain the following strong hypercontractive inequality for global
functions on $\epsilon$-HDX.
\begin{cor}
For each $\zeta,d,k>0,$ there exists $\epsilon_{0}=\epsilon_{0}\left(\zeta,k,d\right),\delta_{0}=\delta_{0}\left(\zeta,d\right)$,
such that the following holds. Let $\epsilon\le\epsilon_{0},\delta\le\delta_{0}$,
let $X$ be an $\epsilon$-HDX, and let $f\in L^{2}\left(X(k),\mu\right)$.
If $f$ is $\left(d,\delta\right)$-global, then we have 
\[
\|f^{\le d}\|_{4}^{4}\le\zeta\|f\|_{2}^{2}.
\]
\end{cor}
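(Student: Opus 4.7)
The plan is to plug the globalness hypothesis into the hypercontractive inequality of the theorem above and make each of the two right-hand terms at most $\tfrac{\zeta}{2}\|f\|_{2}^{2}$ by choosing $\delta_{0}$ and $\epsilon_{0}$ small enough.

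The main sum $20^{d}\sum_{|S|\le d}(4d)^{|S|}\mathbb{E}_{x\sim\mu_{S}}I_{S,x}^{\le d}[f]^{2}$ is handled by the standard device of replacing one factor of $I_{S,x}^{\le d}[f]$ by a uniform pointwise bound. Using the equivalence noted in the introduction --- that $(d,\delta)$-globalness is, up to a factor $2^{d}$, the same as a pointwise bound on influences --- we have $I_{S,x}^{\le d}[f]\le 2^{d}\delta$ for every $|S|\le d$ and $x$, so
\[
20^{d}\sum_{|S|\le d}(4d)^{|S|}\mathbb{E}_{x\sim\mu_{S}}I_{S,x}^{\le d}[f]^{2}\le 40^{d}\delta\sum_{|S|\le d}(4d)^{|S|}\mathbb{E}_{x\sim\mu_{S}}I_{S,x}^{\le d}[f].
\]
Invoking the approximate Parseval-type identity $\mathbb{E}_{x\sim\mu_{S}}I_{S,x}^{\le d}[f]=\sum_{T\supseteq S,\,|T|\le d}\|f^{=T}\|_{2}^{2}+O_{k}(\epsilon)\|f\|_{2}^{2}$ coming from the approximate Efron--Stein decomposition, interchanging summations and using $\sum_{S\subseteq T}(4d)^{|S|}=(1+4d)^{|T|}$, the right-hand side is bounded by $C(d)\,\delta\,\|f\|_{2}^{2}$ up to HDX slack of order $O_{k}(\epsilon)$. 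Choosing $\delta_{0}=\delta_{0}(\zeta,d)$ small enough makes this contribution at most $\tfrac{\zeta}{2}\|f\|_{2}^{2}$.

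For the error term $O_{k}(\epsilon^{2})\|f\|_{2}^{2}\|f\|_{\infty}^{2}$, I appeal to the $L^{\infty}$ bound for global functions announced immediately before the corollary: every $(d,\delta)$-global $f$ on $X(k)$ satisfies $\|f\|_{\infty}\le M(k,d,\delta)$ for an explicit $M$. The error term is then $\le O_{k}(\epsilon^{2})M(k,d,\delta_{0})^{2}\|f\|_{2}^{2}$, and picking $\epsilon_{0}=\epsilon_{0}(\zeta,k,d)$ sufficiently small renders it $\le\tfrac{\zeta}{2}\|f\|_{2}^{2}$. Adding the two bounds gives the desired inequality.

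The main obstacle I expect is obtaining the $L^{\infty}$ bound $M(k,d,\delta)$ on global functions. In product-space settings such bounds follow by iterating Bonami-type hypercontractivity along individual coordinates, but on an HDX there are no product coordinates to iterate over, so one must argue directly on the complex and its links; in particular one should expect $M$ to depend on $k$. This also explains the asymmetry in the statement: the $k$-dependence of $M$ forces $\epsilon_{0}$ to depend on $k$, whereas $\delta_{0}$, which controls only the main sum, depends only on $\zeta$ and $d$. A secondary, more routine, point is verifying that the approximate Parseval identity above loses only $O_{k}(\epsilon)$; this should follow directly from the approximate Efron--Stein decomposition advertised in the abstract.
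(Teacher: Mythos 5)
Your high-level plan---start from the main hypercontractive inequality, shrink the influence sum using globalness, and control the $\|f\|_{\infty}^{2}$ error term---is the right shape, but both of the key steps you lean on are precisely the ones that fail once we leave the product-space setting, and exactly where the paper's proof departs from the product-space argument.

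The pointwise bound $I_{S,x}^{\le d}[f]\le 2^{d}\delta$ is not available here. The remark you quote from the introduction, that $(r,\delta)$-globalness is equivalent to a pointwise influence bound up to a factor of $2^{r}$, is stated in the product-space context. For $\epsilon$-product measures the paper says the opposite in two places: in the Techniques discussion (``it is no longer true that having small influences is equivalent to being global'') and at the start of the globalness subsection of Section~\ref{sec:notions} (``Unlike the product space setting the two possible definitions of globalness are not equivalent''), and it flags obtaining $\mathbb{E}_{x}[I_{S,x}^{2}]\le\delta\,\mathbb{E}_{x}[I_{S,x}]$ \emph{without} the hypothesis $\max_{x}I_{S,x}[f]\le\delta$ as the central difficulty. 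The substitute is Lemma~\ref{lem:truncated influences are small}, which gives only the $L^{2}$-averaged inequality
\[
\mathbb{E}_{x\sim\mu_{T}}\bigl[(I_{T,x}^{\le d}[f])^{2}\bigr]\;\le\;2^{d+4}\delta^{2}\,\mathbb{E}_{x\sim\mu_{T}}\bigl[I_{T,x}^{\le d}\bigr]+O_{k}\bigl(\epsilon^{2}\|f\|_{\infty}^{2}\|f\|_{2}^{2}\bigr),
\]
and its proof (dominate $I_{T,x}$ pointwise by $g'$, then $L^{2}$-compare $g'$ to a second function $g''$ that \emph{does} have small $\infty$-norm, using that $A_{T}A_{T'}$ only $L^{2}$-approximates $A_{T\cap T'}$) is genuinely nontrivial. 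So you cannot simply pull one factor of $I_{S,x}^{\le d}$ out of the sum; you have to use this lemma, which carries its own $\epsilon$-error term.

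There is likewise no bound $\|f\|_{\infty}\le M(k,d,\delta)$ for a general $(d,\delta)$-global $f$ when $d<k$: any Boolean $f$ has $\|f\|_{\infty}=1$ no matter how small $\delta$ is. The sentence ``we can bound the infinity norm of global functions'' before the corollary refers to the truncation: Claim~\ref{claim:Globalness implies that fS is bounded } and Lemma~\ref{lem:f le d ES} give $\|f^{=S}\|_{\infty}\le 2^{|S|}\delta$ for $|S|\le d$ and hence $\|f^{\le d}\|_{\infty}\le k^{d}\delta$, but say nothing about $\|f\|_{\infty}$. The paper's resolution is not to bound $\|f\|_{\infty}$ at all; Theorem~\ref{thm:our hypercontractivity for low degree functions} applies the main inequality to $g=f^{\le d}$ in place of $f$, so the error becomes $O_{k}(\epsilon^{2})\|g\|_{2}^{2}\|g\|_{\infty}^{2}=O_{k}(\epsilon^{2}\delta^{2})\|f\|_{2}^{2}$, and the influences of $g$ are traded back for those of $f$ via the approximate Efron--Stein closeness lemmas (Lemmas~\ref{lem:L_4 Closeness of different Efron--Stein decompositions} and~\ref{lem:Many ways to compute influences }). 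Since $\epsilon_{0}$ must depend only on $\zeta,k,d$ and not on $f$, you cannot afford an error proportional to $\|f\|_{\infty}^{2}$; the paper sidesteps the obstacle you identify rather than overcoming it directly.
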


We remark that, in fact, we prove our results in a slightly more general
setting, to which we refer as $\epsilon$-product measures. See Section
\ref{sec:Proving-hypercontractivity} for details.

\subsection{Applications}

As corollaries of our hypercontractive inequality for high dimensional
expanders, we obtain several applications, which we discuss below.
See Section \ref{sec:Applications} for more details.

\subsubsection{Fourier spectrum concentration theorem}

Fourier concentration results are widely useful in complexity theory
and learning theory. Our first application is a Fourier concentration
theorem for HDX. Namely, the following theorem shows that global Boolean
functions on $\epsilon$-HDX are concentrated on the high degrees,
in the sense that the $2$-norm of the restriction of a function to
its low-degree coefficients only constitutes a tiny fraction of its
total $2$-norm.
\begin{thm}
\label{thm:concentration on the high degrees} For each $\zeta,d,k>0,$
there exists $\epsilon_{0}=\epsilon_{0}\left(\zeta,k,d\right),\delta_{0}=\delta_{0}\left(\zeta,d\right)$,
such that the following holds. Let $\epsilon\le\epsilon_{0},\delta\le\delta_{0}$,
let $X$ be an $\epsilon$-HDX, and let $f\colon X(k)\to\left\{ 0,1\right\} $
be $\left(d,\delta\right)$-global. Then 
\[
\|f^{\le d}\|_{2}^{2}\le\zeta\|f\|_{2}^{2}.
\]
\end{thm}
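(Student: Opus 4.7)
The plan is to reduce Theorem \ref{thm:concentration on the high degrees} to the preceding hypercontractive corollary by the standard $L^{4/3}$--$L^{4}$ duality trick for Boolean functions. The key identity I would use at the outset is that, since $f$ takes values in $\{0,1\}$, we have $f^{4/3}=f$ pointwise, and therefore $\|f\|_{4/3}=\bigl(\mathbb E_{\mu}f\bigr)^{3/4}=\|f\|_{2}^{3/2}$. Everything else is a short chain of inequalities that converts $\|f^{\le d}\|_{4}$ control into $\|f^{\le d}\|_{2}$ control.

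Concretely, I would write
\[
\|f^{\le d}\|_{2}^{2}=\langle f^{\le d},f^{\le d}\rangle=\langle f,f^{\le d}\rangle+E_{1},
\]
where $E_{1}$ measures the failure of the map $g\mapsto g^{\le d}$ to be an exact self-adjoint projection in the HDX setting. Invoking the approximate Efron--Stein machinery developed in Section \ref{sec:Efron=002013Stein}, I expect $|E_{1}|=O_{k,d}(\epsilon)\|f\|_{2}^{2}$. Then H\"older gives
\[
\langle f,f^{\le d}\rangle\le\|f\|_{4/3}\,\|f^{\le d}\|_{4}=\|f\|_{2}^{3/2}\,\|f^{\le d}\|_{4}.
\]
Applying the hypercontractive corollary with target parameter $\zeta_{1}:=(\zeta/4)^{4}$ in place of its $\zeta$, and the associated $\epsilon_{0},\delta_{0}$, produces $\|f^{\le d}\|_{4}\le(\zeta/4)\,\|f\|_{2}^{1/2}$. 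Combining the displays yields
\[
\|f^{\le d}\|_{2}^{2}\le(\zeta/4)\,\|f\|_{2}^{2}+O_{k,d}(\epsilon)\,\|f\|_{2}^{2},
\]
and shrinking $\epsilon_{0}$ once more to absorb the error below $(3\zeta/4)\|f\|_{2}^{2}$ finishes the argument.

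The main obstacle I anticipate is the rigorous control of the error $E_{1}$. In a genuine product space the Efron--Stein components indexed by different $S$ are exactly orthogonal, so $\langle f^{\le d},f^{\le d}\rangle=\langle f,f^{\le d}\rangle$ with no slack; on an HDX they are only approximately orthogonal, and one must show that the aggregate contribution of the $O_{d}(1)$ cross-terms $\langle f^{=S},f^{=T}\rangle$ with $S\neq T$, $|S|,|T|\le d$, is controlled by a single $\epsilon$-factor times a constant depending only on $k$ and $d$, rather than by something depending on the (potentially large) individual norms of the $f^{=S}$. This is precisely the quantitative content of the approximate Efron--Stein decomposition for HDX, and invoking it cleanly \emph{without} leaking an $\|f\|_{\infty}$ factor of the kind that appears in the raw hypercontractive bound is the step most likely to require care.
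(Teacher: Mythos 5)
Your proposal is correct and follows essentially the same route as the paper's own derivation (Corollary \ref{cor:low degree part is small}): approximate Parseval (Lemma \ref{lem:Strong Parseval}) to replace $\|f^{\le d}\|_2^2$ by $\langle f, f^{\le d}\rangle$ up to an $O_k(\epsilon)\|f\|_2^2$ error, H\"older with the Boolean identity $\|f\|_{4/3}=\|f\|_2^{3/2}$, and then the global hypercontractive bound on $\|f^{\le d}\|_4$. The only cosmetic difference is that you invoke the black-box corollary $\|f^{\le d}\|_4^4\le\zeta_1\|f\|_2^2$, whereas the paper plugs in Theorem \ref{thm:Globalness implies small 4-norm} directly to track explicit $\delta,\epsilon$ dependence.
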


\subsubsection{Small set expansion theorem}

Small set expansion is a fundamental property that is prevalent in
combinatorics and complexity theory. In the setting of the $\rho$-noisy
Boolean hypercube, the small set expansion theorem of Kahn, Kalai,
and Linial \cite{KahnKL88} gives an upper bound on $\mathrm{{Stab}_{\rho}}(1_{A})=\langle1_{A},T_{\rho}1_{A}\rangle$
for indicators $1_{A}$ of small sets $A$. The noise stability $\mathrm{{Stab}_{\rho}}(1_{A})$
captures the probability that a random edge $\left(x,y\right)$ of
the $\rho$-noisy hypercube has both its endpoints in $A$. Hence,
an inequality of the form $\mathrm{{Stab}_{\rho}}(1_{A})\le\zeta\|1_{A}\|_{2}^{2}$
for an arbitrarily small $\zeta$ and sufficiently small $A$ implies
that that small sets are expanding in the sense that the random walk
makes you leave them with probability $\ge1-\zeta$. Our second application
is a small set expansion theorem for global functions on $\epsilon$-HDX,
captured via bounding the natural noise operator in this setting.
Let $\rho\in\left(0,1\right)$ be a noise-rate parameter. The noise
operator is given by 
\[
\mathrm{T}_{\rho}f\left(x\right):=\sum_{S\subseteq\left[k\right]}\rho^{\left|S\right|}\left(1-\rho\right)^{k-\left|S\right|}\mathbb{E}_{y\sim\mu}\left[f\left(y\right)|y_{S}=x_{S}\right].
\]
 In other words, $\mathrm{T}_{\rho}$ corresponds to the random walk
that starts with $x$ chooses a $\rho$-biased random $S\subseteq\left[k\right]$,
keeps $x_{S}$, and re-randomises $x$ given $x_{S}$. Our small set
expansion theorem tells us that if we start with a small subset $A\subseteq X\left(k\right)$
and we apply one step of the random walk, then we leave $A$ with
probability $0.99$.
\begin{thm}
\label{thm:small set expansion}For each $\zeta,d,k>0,$ there exists
$\epsilon_{0}=\epsilon_{0}\left(\zeta,k,d\right),\delta_{0}=\delta_{0}\left(\zeta,d\right)$,
such that the following holds. Let $\epsilon\le\epsilon_{0},\delta\le\delta_{0}$,
and let $X$ be an $\epsilon$-HDX. If $f\colon V_{\left[k\right]}\to\left\{ 0,1\right\} $
is $\left(d,\delta\right)$-global, then 
\[
\|\mathrm{T}_{\rho}f\|_{2}^{2}\le\zeta\|f\|_{2}^{2}.
\]
\end{thm}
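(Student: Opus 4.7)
The plan is to reduce Theorem~\ref{thm:small set expansion} to Theorem~\ref{thm:concentration on the high degrees} by exploiting the spectral structure of the noise operator $\mathrm{T}_{\rho}$. The key observation is that $\mathrm{T}_{\rho}$ contracts the high-degree part of any function by a factor of roughly $\rho^{d}$ (approximately, up to $O_k(\epsilon)$ errors in the HDX setting), while Theorem~\ref{thm:concentration on the high degrees} already guarantees that the low-degree part of a global Boolean function is small in $L^{2}$. Throughout, I treat $\rho\in(0,1)$ as fixed, so that $\rho^{2d}$ can be absorbed into $\zeta$ by taking $d$ sufficiently large; equivalently, $\epsilon_{0}$ and $\delta_{0}$ should be understood to depend implicitly on $\rho$ as well.

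First, I would decompose $f=f^{\le d}+f^{>d}$ at the globality level $d$ and bound
\[
\|\mathrm{T}_\rho f\|_2^2 \le 2\|\mathrm{T}_\rho f^{\le d}\|_2^2 + 2\|\mathrm{T}_\rho f^{>d}\|_2^2.
\]
The operator $\mathrm{T}_{\rho}$ is a convex combination of conditional expectations (one for each $S\subseteq [k]$, weighted by $\rho^{|S|}(1-\rho)^{k-|S|}$), and hence an $L^{2}$-contraction, so $\|\mathrm{T}_\rho f^{\le d}\|_2^2 \le \|f^{\le d}\|_2^2$. Applying Theorem~\ref{thm:concentration on the high degrees} with error parameter $\zeta/8$ (and the corresponding $\epsilon_{0},\delta_{0}$) bounds this term by $(\zeta/8)\|f\|_2^2$.

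Second, for the high-degree part I would use the approximate Efron--Stein decomposition of Section~\ref{sec:Efron=002013Stein} to argue that $\mathrm{T}_{\rho}$ acts, up to an $O_k(\epsilon)$ perturbation, as multiplication by $\rho^{|S|}$ on each component $f^{=S}$, yielding
\[
\|\mathrm{T}_\rho f^{>d}\|_2^2 \le \rho^{2(d+1)}\|f^{>d}\|_2^2 + O_k(\epsilon)\|f\|_2^2.
\]
Since $f$ is Boolean, $\|f\|_{\infty}=1$, so the error is under control. Taking $d$ large enough that $\rho^{2(d+1)}\le\zeta/8$ and $\epsilon_{0}$ small enough that the $O_k(\epsilon)$ term is at most $\zeta/8$, and combining with the low-degree bound, gives $\|\mathrm{T}_\rho f\|_2^2 \le \zeta\|f\|_2^2$ as required.

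The hard part will be justifying the approximate diagonalization of $\mathrm{T}_{\rho}$ with respect to the approximate Efron--Stein decomposition on an $\epsilon$-HDX. In a genuine product space this is automatic, since the components $f^{=S}$ are mutually orthogonal and the conditional expectation in the definition of $\mathrm{T}_{\rho}$ factors as an independent product. In an $\epsilon$-HDX, however, the components are only approximately orthogonal and the link measures only approximately coincide with products of marginals, so one must carefully verify that the $O_k(\epsilon)$ slack does not accumulate when summed over $|S|\le d$. This is exactly what the $\epsilon$-product space machinery of Section~\ref{sec:Efron=002013Stein} is designed to handle, and it is where the real work of the proof lies.
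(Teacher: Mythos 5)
Your proposal is correct and takes essentially the same route as the paper: bound $\|\mathrm{T}_{\rho}f\|_{2}^{2}$ via the spectral formula $\mathrm{T}_{\rho}f=\sum_{S}\rho^{\left|S\right|}f^{=S}$ together with approximate Parseval and the Fourier concentration theorem (Corollary \ref{cor:low degree part is small}). The step you flag as the hard part is easier than you anticipate: the paper shows this spectral formula holds \emph{exactly} (Claim \ref{lem:Formula for the noise operator}), as a purely combinatorial rearrangement of the binomial sum defining $\mathrm{T}_{\rho}$, so the only $O_{k}\left(\epsilon\right)$ error enters through approximate Parseval (Lemma \ref{lem:Strong Parseval}), not through any approximate diagonalization of $\mathrm{T}_{\rho}$ on individual components $f^{=S}$; this is also why the paper can work with $\mathrm{T}_{\rho}f$ directly rather than splitting $f=f^{\le d}+f^{>d}$ first and paying your Cauchy--Schwarz factor of $2$, though that cost is immaterial.
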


\subsubsection{Kruskal--Katona theorem}

Our last application is an analogue of the Kruskal--Katona theorem
in the setting of high dimensional expanders. The Kruskal-Katona theorem
is a fundamental and widely-applied result in extremal combinatorics,
which gives a lower bound on the size of the lower shadow $\partial\left(A\right)$
of a $k$-uniform hypergraph $A$ on $n$ vertices. The \emph{lower
shadow }is defined to be the family of all $(k-1)$-sets that are
contained in an edge of $A$. More generally, if $A\subseteq X\left(k\right)$,
then we similarly let $\partial\left(A\right)$ be the family of all
$k-1$-faces that are contained in a $k$-face of $A$.

Filmus et al. \cite{filmus2020hypercontractivity} used their hypercontractivity
theorem to prove a stability result for the Kruskal--Katona theorem.
We prove a similar stability result for $\epsilon$-HDX.
\begin{thm}
Let $X$ be an $\epsilon$-HDX, for a sufficiently small $\epsilon>0$.
Let $\delta\le\left(200d\right)^{-d},$ and let $A\subseteq X\left(k-1\right)$
be $\left(d,\delta\right)$-global. Then 
\[
\mu\left(\partial\left(A\right)\right)\ge\mu\left(A\right)\left(1+\frac{d}{2k}\right).
\]
\end{thm}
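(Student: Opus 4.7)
The plan is to reduce the stability statement to a one-line Cauchy--Schwarz argument governed by the spectrum of the down-up random walk from $X(k-1)$ through $X(k-2)$, and then to extract the multiplicative gain from Theorem \ref{thm:concentration on the high degrees} applied to $1_A$. Let $f=1_A$ and define $p\colon X(k-2)\to [0,1]$ by $p(\tau)=\mathbb{E}_{\sigma\supset\tau}[f(\sigma)]$, i.e.\ $p = Df$, where $D$ denotes the down operator and $U$ its adjoint. Then $p$ is supported in $\partial(A)$ and $\mathbb{E}[p]=\mu(A)$, so Cauchy--Schwarz gives
\[
\mu(A)^2 \;=\; \langle p, 1_{\partial A}\rangle^2 \;\le\; \mu(\partial A)\cdot\langle p,p\rangle \;=\; \mu(\partial A)\cdot\langle f, UDf\rangle,
\]
and it suffices to prove $\langle f, UDf\rangle \le \mu(A)/(1+d/(2k))$.

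To evaluate this inner product I would diagonalise $UD$ against the approximate Efron--Stein decomposition $f=\sum_{j=0}^{k-1}f^{=j}$ built earlier in the paper. The product-space intuition (and the content of the approximate decomposition for HDX) is that $UD$ averages out one of the $k-1$ ``coordinates'' of a $(k-1)$-face, so a pure degree-$j$ component survives with probability $(k-1-j)/(k-1)$; equivalently $UDf^{=j} = (1-j/(k-1))f^{=j}$ plus an HDX-error term. Assembling the per-level errors into a single $O_k(\epsilon)\|f\|_2^2$ contribution, and using $\|f^{=0}\|_2^2=\mu(A)^2$ together with $\sum_{j\ge 0}\|f^{=j}\|_2^2=\|f\|_2^2=\mu(A)$, I obtain
\[
\langle f, UDf\rangle \;=\; \mu(A) \;-\; \sum_{j\ge 1}\frac{j}{k-1}\|f^{=j}\|_2^2 \;+\; O_k(\epsilon)\|f\|_2^2.
\]

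For the final step I would invoke Theorem \ref{thm:concentration on the high degrees}: for $\epsilon$ sufficiently small in terms of $\zeta,k,d$ and $\delta\le(200d)^{-d}$ small enough in terms of $\zeta,d$, the $(d,\delta)$-globality of $f$ gives $\|f^{\le d}\|_2^2\le\zeta\mu(A)$. Therefore
\[
\sum_{j\ge 1}\frac{j}{k-1}\|f^{=j}\|_2^2 \;\ge\; \frac{d+1}{k-1}\|f^{>d}\|_2^2 \;\ge\; \frac{d+1}{k-1}(1-\zeta)\mu(A),
\]
which exceeds $\frac{d}{2k}\mu(A)$ with room to spare for any fixed $\zeta\le 1/10$. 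Feeding this back into the Cauchy--Schwarz bound and absorbing the $O_k(\epsilon)\|f\|_2^2$ error by taking $\epsilon$ small enough then yields the desired $\mu(\partial A)\ge (1+d/(2k))\mu(A)$.

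The main obstacle is the spectral identification $UDf^{=j}\approx (1-j/(k-1))f^{=j}$ with an explicit, small HDX error. The approximate Efron--Stein decomposition of the paper should deliver exactly this, but care is needed because $UD$ is a composition of down/up operators crossing \emph{different} levels of the complex (rather than an averaging operator inside a single link), so one must unfold the composition and track error accumulation across levels before invoking the Fourier concentration theorem.
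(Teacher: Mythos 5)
Your argument is correct in substance but follows a genuinely different route from the paper's, and it is worth noting what each buys. The paper also works with the up-down operator, written as $\mathrm{T}=\frac{1}{k}\sum_{i=1}^{k}A_{[k]\setminus\{i\}}=\sum_{S}\frac{k-|S|}{k}f^{=S}$, but instead of Cauchy--Schwarzing the down-projection $p=Df$, it bounds the boundary measure \emph{directly} via the combinatorial identity $\langle f-\mathrm{T}f,f\rangle=\Pr_{\sigma}\Pr_{\tau_1,\tau_2\supset\sigma}[\tau_1\in A,\tau_2\notin A]\le\mu(\partial A)-\mu(A)$, and then uses approximate Parseval plus Fourier concentration to lower bound $\langle f-\mathrm{T}f,f\rangle\ge\frac{d}{2k}\mu(A)$. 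Your chain $\mu(A)^2=\langle Df,1_{\partial A}\rangle^2\le\mu(\partial A)\langle f,UDf\rangle$ gives an inverse-type bound $\mu(\partial A)\ge\mu(A)^2/\langle f,UDf\rangle$, which for $\langle f,UDf\rangle=(1-\eta)\mu(A)$ yields $\mu(\partial A)\ge\mu(A)/(1-\eta)$, marginally stronger than the paper's $\mu(\partial A)\ge\mu(A)(1+\eta)$; both are equivalent to leading order. Two small points. First, the ``obstacle'' you flag at the end --- that $UD$ crosses levels and so one must unfold the composition and track error across levels --- is not actually an issue in the paper's framework: $\mathrm{T}=\frac{1}{k}\sum_i A_{[k]\setminus\{i\}}$ is a single-level operator, and the eigenvalue identity $\mathrm{T}f=\sum_{S}(1-|S|/k)f^{=S}$ holds \emph{exactly} as an operator identity by Lemma \ref{lem:ES Fourier formula for E_A}; the only approximation error enters via near-orthogonality of the $f^{=S}$ when passing to $\langle f,\mathrm{T}f\rangle$, exactly as the paper does. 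Second, your eigenvalue $1-j/(k-1)$ should be $1-|S|/k$ in the paper's convention where the top level is parameterised by $[k]$; this is an off-by-one indexing choice and does not affect the conclusion, but should be made consistent with whichever level $A$ actually lives on (the paper itself is not fully consistent here, as the theorem statement says $A\subseteq X(k-1)$ while the proof's $\Pr_{\sigma\sim X(k-1)}\Pr_{\tau_1,\tau_2\supset\sigma}$ places $A$ one level higher).
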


\subsection{Techniques}

Conceptually, one can view the theory of expanders and pseudorandom
graphs in the following perspective: Given a pseudorandom regular
graph $G=\left(V,E\right)$ and $\left(x,y\right)\sim E$, the goal
is to show that $x,y$ behave similarly to independent random variables
$x,y\sim V$, i.e., as an approximation of a product space.

In the theory of high dimensional expanders, we are given a distribution
$\mu$ on $(k+1)$-tuples by choosing a random $k$-face $\left(x_{1},\ldots,x_{k+1}\right)$
of a sparse simplicial complex, and the goal is again to show that
the variables $\{x_{i}\}$ approximately behave as though they were
independent. Thus, our main objective is to generalise results from
the product space setting, where the $x_{i}$'s are independent, to
the setting of HDX, where we only have local spectral information
about the links. However, such a generalisation yields significant
challenges.

One of the fundamental tools for studying the product space setting
is the aforementioned Efron--Stein decomposition. Its role in the
analysis of product spaces is that it allows us to easily generalise
techniques from the Boolean cube by replacing the Fourier expression
$\hat{f}\left(S\right)\chi_{S}$ with the function $f^{=S}.$

Our high-level proof strategy is to develop new Efron--Stein decompositions
for HDX. We show that despite the more involved setting, and despite
the fact that we only have mere local spectral information, we can
still obtain similar structural properties as in product spaces. We
now list a few of the challenges that we are facing, which require
fundamentally new ideas and techniques.

Dikstein et al. \cite{DiksteinDFH18} gave a decomposition of the
form $f=\sum_{d=0}^{k}f^{=d}$. We provide a new decomposition $\left\{ f^{=S}\right\} _{S\subseteq\left[k\right]}$
such that $f=\sum_{S\subseteq\left[k\right]}f^{=S},$ and despite
not having orthogonality, we can still show that the inner product
$\left\langle f^{=S},f^{=T}\right\rangle $ is negligible compared
to $\|f\|_{2}^{2}.$ This allows us to generalise the Laplacians,
derivatives and influences, but we have to deal with the following
problems:
\begin{itemize}
\item Let $\mathcal{F}\subseteq\left[k\right]$ be a small set. We would
like to say that $g=\sum_{S\in\mathcal{F}}f^{=S}$ is supported on
$\mathcal{F},$ but we have no way of knowing that looking at $\left\{ g^{=S}\right\} _{S\subseteq\left[k\right]},$
as $g^{=S}$ may be nonzero even for $S\notin\mathcal{F}.$ This leads
to the problem of how to even define the degree of a function. We
would like to say that $f^{\le d}:=\sum_{\left|S\right|\le d}f^{=S}$
is of degree at most $d$, and that $f$ is of degree $d$ if $f=f^{\le d}$.
Alas, according to this definition the function $f^{\le d}$ is not
of degree $d$.
\item We can and do define the derivatives $D_{S,x}$ to be the restrictions
of the Laplacians. In the product case the derivatives decrease the
degree by $\left|S\right|$, and this is a very desirable property
as our proof goes by induction on $d$. However, this is no longer
true in the HDX setting.
\item We may define the influences by taking $2$-norms of the derivatives.
However, now it is no longer true that having small influences is
equivalent to being global. This leads us to the following problem
which is the source for all of the difficulty.
\item The spectral information tells us that HDX should behave similarly
to product spaces with respect to the $L^{2}$-norm. However, we care
about $L_{4}$ information when bounding $\|f\|_{4}$, and we deal
with $L_{\infty}$-hypothesis as the globalness notion is about \textbf{all}
the restrictions. There is no reason for HDX to behave well with respect
to $L_{4}$ and even more so for $L_{\infty}.$
\end{itemize}
At first, the above, and especially the last point, seem as fundamental
barriers to this approach.

Nevertheless, we overcome this barrier by developing an alternative
notion, which we call the \emph{approximate Efron--Stein decomposition}.
Our new notion has the following properties that fix all of the above
problems.
\begin{itemize}
\item If $\left\{ f_{S}\right\} _{S\subseteq\left[k\right]}$ is an approximate
Efron--Stein decomposition, then crucially, $\left\{ f_{S}\right\} _{S\in\mathcal{F}}$
is an approximate Efron--Stein decomposition for $\sum_{S\in\mathcal{F}}f^{=S}$.
\item If $f$ is approximately of degree $d$, in the sense that $\left\{ f_{S}\right\} $
is an approximate Efron-Stein decomposition for $f$, then the derivative
$D_{S,x}\left[f\right]$ may be $L_{4}$-approximated by $D_{S,x}\left[f\right]^{\le d-\left|S\right|}$
.
\item We find a way of proving an inequality of the form 
\[
\mathbb{E}_{x\sim\mu_{S}}I_{S,x}^{2}\left[f\right]\le\delta\mathbb{E}_{x\sim\mu_{S}}\left[I_{S,x}\right],
\]
 without having the traditional hypothesis $\max_{x}I_{S,x}\left[f\right]\le\delta$
at our disposal.
\item We show that we may move freely between different approximate Efron--Stein
decomposition up to a small $L_{4}$-norm error term.
\end{itemize}
We believe that our approximate Efron--Stein decomposition provides
the desired comfortable platform for analysing functions on HDX in
the same way one would analyze a product space.

See Section \ref{sec:Efron=002013Stein} for a detailed exposition
of our approximate Efron--Stein decomposition, and see Section \ref{sec:overrview}
for a more detailed proof overview of our main hypercontractivity
results, which build on the aforementioned decomposition.

\subsection{Related work}

Simultaneously and independently to this work, Bafna, Hopkins, Kaufman,
and Lovett \cite{BHKS} also obtained hypercontractive inequalities
for high dimensional expanders. We remark that while the main hypercontractive
inequalities in both papers achieve essentially the same parameters,
the techniques are completely different. Namely, in \cite{BHKS} the
proof strategy follows the approach of analogous results in the setting
of the Grassmann graph, whereas our approach generalises Efron--Stein
decompositions and hypercontractivity for general product spaces.
We further note that our approximate Efron--Stein decomposition extends
approximate Fourier decompositions that appeared in several recent
works \cite{KaufmanO18a,KaufmanO20,DiksteinDFH18,alev2019approximating,jeronimo2021near}.

\subsection{Organisation}

The rest of the paper is organised as follows. We start in Section
\ref{sec:Recalling-globalness}, where we recall the notions of hypercontractivity
and globalness in general product spaces, as well as provide an alternative
proof of a slightly weaker hypercontractive inequality that is more
amenable for generalisation to non-product spaces. In Section \ref{sec:eps-spaces},
we present the framework of $\epsilon$-product spaces, of which high
dimensional expanders are a special case, and we also define key operators
in this setting and show some basic properties they satisfy. Next,
in Section \ref{sec:Efron=002013Stein}, which is introducing a new
approximate Efron--Stein decomposition and developing a framework
for proving hypercontractivity results using this decomposition. Then,
in Section \ref{sec:overrview}, we give a detailed proof overview
of our hypercontractive inequalities for high dimensional expanders,
which build on the foregoing framework. In Section \ref{sec:notions},
we define the notions of laplacians, derivatives and influences in
the setting of $\epsilon$-measures, give bounded approximated Efron--Stein
decompositions related to the Laplacians, define globalness, and show
that it implies small influences.. Then, we provide the full proof
of our main hypercontractivity results in Section \ref{sec:Proving-hypercontractivity}.
Finally, in Section \ref{sec:Applications}, we show how to derive
the applications from our hypercontractive inequalities.

\subsection*{Acknowledgments}

We are grateful to Alessandro Chiesa for participating in early stages
of this research. We thank Yuval Filmus and Dor Minzer for insightful
discussions. We further thank Yuval Filmus for insightful comments
on a previous version of this work.

\section{Recalling globalness and hypercontractivity in the product space
setting\label{sec:Recalling-globalness}}

We begin by recalling the Efron--Stein decomposition, as well as
derivatives and Laplacians in the setting of general product spaces,
and state the hypercontractivity inequalities for product spaces that
were shown in \cite{KLLM}. We then give a proof, inspired by \cite{EKL21},
of a slightly weaker hypercontractivity inequality that we will later
generalise to approximate product spaces.

\subsection{Efron-Stein decomposition}

Let $\left(V_{1},\mu_{1}\right),\ldots,\left(V_{k},\mu_{k}\right)$
be a probability space. Let $\mu$ be the corresponding product measure
$\mu_{1}\otimes\cdots\otimes\mu_{k}$. For a set $S\subseteq\left[k\right]$,
we write $V_{S}=\prod_{i\in S}V_{i},$ and we write $\text{\ensuremath{\mu_{S}} for the product measure }\,\mu_{S}=\bigotimes_{i\in S}\mu_{i}.$
The \emph{Efron--Stein decomposition} is a decomposition of $L^{2}\left(V_{\left[k\right]},\mu\right)$
into $2^{k}$ orthogonal spaces $\left\{ W_{S}\right\} _{S\subseteq\left[k\right]}.$
Every function $f\in L^{2}\left(V_{\left[k\right]},\mu\right)$ can
then be decomposed as $f=\sum_{S\subseteq\left[k\right]}f^{=S},$
where $f^{=S}$ is the projection of $f$ to $W_{S}.$ The Efron--Stein
decomposition is characterised by the orthogonality of $\left\{ W_{S}\right\} $,
the fact that $\sum_{S}W_{S}=L^{2}\left(V,\mu\right)$, and the fact
that the space $W_{S}$ is composed of functions depending only on
$S.$

The functions $f^{=S}$ also have an explicit formula for $x\in V_{S}$,
where we denote
\[
A_{S}f\left(x\right)=\mathbb{E}_{y\sim\left(V_{\overline{S}},\mu_{\overline{S}}\right)}\left[f\left(x,y\right)\right],
\]
 where $\bar{{S}}=[k]\setminus S$. We then write 
\[
f^{=S}=\sum_{T\subseteq S}\left(-1\right)^{\left|S\setminus T\right|}A_{T}f.
\]
 The function $A_{S}\left[f\right]$ then has the following neat Efron--Stein
decomposition 
\[
A_{S}\left[f\right]=\sum_{T\subseteq S}f^{=T}.
\]
 See \cite[Chapter 8]{ODonnell} for more details.

\subsection{Notations}

We write $a=b\pm\epsilon$ to indicate that $a\in\left(b-\epsilon,b+\epsilon\right).$
We use $a\le O\left(b\right)$ to denote that the inequality holds
up to an absolute constant, and $a\le O_{k}\left(b\right)$ to denote
that the inequality holds up to a constant only depending on $k$.

\subsection{Derivatives and Laplacians}

Let $\mu=\mu_{1}\otimes\cdots\otimes\mu_{k}$ be a product measure.
Let $f\in L^{2}\left(V_{\left[k\right]},\mu\right),$ $S\subseteq\left[n\right]$.
The \emph{Laplacian} is given by the formula 
\[
L_{S}\left[f\right]=\sum_{T\supseteq S}f^{=T}=\sum_{T\subseteq S}\left(-1\right)^{\left|T\right|}A_{\left[k\right]\setminus T}f.
\]
For $S\subseteq\left[n\right]$ and $x\in V_{S}$ the \emph{derivative}
$D_{S,x}\in L^{2}\left(V_{\overline{S}},\mu_{\overline{S}}\right)$
is defined by 
\[
D_{S,x}f=L_{S}\left[f\right]\left(x,\cdot\right).
\]
For convenience, we also write $D_{\varnothing}f=f.$ The $\left(S,x\right)$\emph{-influence
}of $f$ is defined as 
\[
I_{S,x}\left[f\right]=\|D_{S,x}\left[f\right]\|_{2}^{2}.
\]
 This includes the case $S=\varnothing$, where we have $I_{\varnothing}\left[f\right]:=\|f\|_{2}^{2}.$

We now state a few facts from \cite{KLLM} that we generalise. The
following lemma, which appears in \cite{KLLM}, shows that the notion
of small influences corresponds to small $2$-norms of the restriction
of $f$.
\begin{lem}
Suppose that $I_{S,x}\left[f\right]\le\delta$ for each set $S$ of
size at most $r.$ Then $\|f\left(x,\cdot\right)\|_{2,\mu_{\left[n\right]\setminus S}}^{2}\le\delta4^{r}$
for each $S$ and $x\in V_{S}$. Conversely, if $\|f\left(x,\cdot\right)\|_{2,\mu_{\left[n\right]\setminus S}}^{2}\le\delta$
for each $\left|S\right|\le r$ and $x\in V_{S}$, then $I_{S,x}\left[f\right]\le\delta4^{r}$
for each $S$ of size at most $r$ and $x\in V_{S}.$
\end{lem}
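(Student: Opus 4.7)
The plan is to prove both directions from a single Möbius identity that lets us pass back and forth between the restriction $f(x,\cdot)$ and the derivatives $\{D_{T,x_T}[f]\}_{T\subseteq S}$. Given $S \subseteq [k]$ and $x \in V_S$, for each $U \subseteq S$ introduce the partial-restriction operator
\[
A_U^{x_U}[f](y) := \mathbb{E}_{w \sim \mu_{S\setminus U}}\bigl[f(x_U, w, y)\bigr], \qquad y \in V_{\bar S},
\]
which restricts the $U$-coordinates to $x_U$ and averages out the $S\setminus U$-coordinates. Unpacking the definition of the Laplacian, one checks directly that $D_{S,x}[f] = \sum_{U\subseteq S}(-1)^{|S|-|U|}A_U^{x_U}[f]$, and Möbius-inverting on the Boolean lattice gives the dual identity
\[
f(x,\cdot) \;=\; A_S^{x_S}[f] \;=\; \sum_{T\subseteq S}D_{T,x_T}[f].
\]
This is the product-space analogue of the familiar identity $A_S[f]=\sum_{T\subseteq S}f^{=T}$, and it is the only nontrivial step; the rest is just triangle inequality plus Jensen.

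For the first direction, fix $|S|\le r$ and $x\in V_S$. Applying the triangle inequality in $L^2(V_{\bar S},\mu_{\bar S})$ to the identity above and using $\|D_{T,x_T}[f]\|_2^2 = I_{T,x_T}[f]\le \delta$ for every $T\subseteq S$, we get
\[
\|f(x,\cdot)\|_2 \;\le\; \sum_{T\subseteq S}\|D_{T,x_T}[f]\|_2 \;\le\; 2^{|S|}\sqrt{\delta} \;\le\; 2^r\sqrt{\delta},
\]
whence $\|f(x,\cdot)\|_{2,\mu_{\bar S}}^2\le 4^r\delta$.

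For the converse direction, assume $\|f(x,\cdot)\|_{2,\mu_{\bar S}}^2\le \delta$ for every $|S|\le r$ and every $x\in V_S$. Fix $|S|\le r$ and $x\in V_S$, and expand via the first Möbius identity. For each $U\subseteq S$, Jensen (equivalently, Cauchy--Schwarz in the $w$-variable) yields
\[
\|A_U^{x_U}[f]\|_2^2 \;=\; \mathbb{E}_{y}\bigl|\mathbb{E}_{w}f(x_U,w,y)\bigr|^2 \;\le\; \mathbb{E}_{w\sim \mu_{S\setminus U}}\|f((x_U,w),\cdot)\|_{2,\mu_{\bar S}}^2 \;\le\; \delta,
\]
where the last inequality uses the restriction hypothesis at the set $S$ (of size $\le r$) and the point $(x_U,w)\in V_S$. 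Triangle inequality then gives
\[
\|D_{S,x}[f]\|_2 \;\le\; \sum_{U\subseteq S}\|A_U^{x_U}[f]\|_2 \;\le\; 2^{|S|}\sqrt{\delta},
\]
so $I_{S,x}[f]\le 4^{|S|}\delta\le 4^r\delta$, completing the proof.

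The one place requiring care is verifying the identity $D_{S,x}[f]=\sum_{U\subseteq S}(-1)^{|S|-|U|}A_U^{x_U}[f]$ from the definition $L_S[f]=\sum_{T\subseteq S}(-1)^{|T|}A_{[k]\setminus T}[f]$: one has to check that restricting the averaged function $A_{[k]\setminus T}[f]$ to $x\in V_S$ gives exactly $A_{S\setminus T}^{x_{S\setminus T}}[f]$, after which the change of variables $U=S\setminus T$ produces the stated formula. Everything else is formal, and the constant $4^r$ is tight for this argument since each step loses a factor of $2^{|S|}$ via the triangle inequality over the $2^{|S|}$-element Boolean lattice.
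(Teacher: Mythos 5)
Your converse direction (restrictions bounded $\Rightarrow$ influences bounded) is correct. The first direction, however, contains a genuine gap in the M\"obius inversion. The identity $D_{S,x}[f]=\sum_{U\subseteq S}(-1)^{|S|-|U|}A_U^{x_U}[f]$ is fine, but M\"obius inversion on the subset lattice of $S$ requires the relation $g(T)=\sum_{U\subseteq T}(-1)^{|T\setminus U|}A_U^{x_U}[f]$ to hold for \emph{every} $T\subseteq S$ with the \emph{same} family $\{A_U^{x_U}\}_{U\subseteq S}$ on the right. With your definition, in which $A_U^{x_U}$ averages over $\mu_{S\setminus U}$ and hence depends on the ambient set $S$, the quantity $\sum_{U\subseteq T}(-1)^{|T\setminus U|}A_U^{x_U}[f]$ for $T\subsetneq S$ is \emph{not} $D_{T,x_T}[f]$: expanding in the Efron--Stein basis it equals $\sum_{R\colon R\cap S=T}f^{=R}(x_T,\cdot)$, which is $\mathbb{E}_{w\sim\mu_{S\setminus T}}\bigl[D_{T,x_T}[f](w,\cdot)\bigr]$, the average of the derivative over the leftover $S\setminus T$ coordinates. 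In particular the claimed identity $f(x,\cdot)=\sum_{T\subseteq S}D_{T,x_T}[f]$ also mixes functions on mismatched domains $V_{\overline{S}}$ and $V_{\overline{T}}$, and once you restrict each $D_{T,x_T}[f]$ to $V_{\overline{S}}$ by plugging in $x_{S\setminus T}$ it is simply false: each $f^{=R}(x,\cdot)$ gets counted $2^{|R\cap S|}$ times. Concretely, for $k=2$, $f(x_1,x_2)=x_1$ on the uniform cube, $S=\{1,2\}$, $x=(1,1)$, one has $f(x,\cdot)=1$ while $\sum_{T\subseteq S}D_{T,x_T}[f](x_{S\setminus T})=1+\tfrac12+0+0=\tfrac32$.

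The gap is fixable, and the fix is exactly symmetric to what you already do (correctly) in the converse direction. Setting $D_T^{(S)}[f]:=\sum_{U\subseteq T}(-1)^{|T\setminus U|}A_U^{x_U}[f]=\mathbb{E}_{w\sim\mu_{S\setminus T}}\bigl[D_{T,x_T}[f](w,\cdot)\bigr]$, M\"obius inversion genuinely gives $f(x,\cdot)=\sum_{T\subseteq S}D_T^{(S)}[f]$, and one then needs a Jensen step to relate $D_T^{(S)}$ back to the influence:
\[
\|D_T^{(S)}[f]\|_{2,\mu_{\overline{S}}}^2
\;=\;\bigl\|\mathbb{E}_{w\sim\mu_{S\setminus T}}D_{T,x_T}[f](w,\cdot)\bigr\|_{2,\mu_{\overline{S}}}^2
\;\le\;\mathbb{E}_{w\sim\mu_{S\setminus T}}\|D_{T,x_T}[f](w,\cdot)\|_{2,\mu_{\overline{S}}}^2
\;=\;I_{T,x_T}[f]\;\le\;\delta,
\]
using that $\mu_{\overline{T}}=\mu_{S\setminus T}\otimes\mu_{\overline{S}}$. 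The triangle inequality then yields $\|f(x,\cdot)\|_2\le 2^{|S|}\sqrt{\delta}$, so the $4^r$ constant and the overall strategy survive; the problem is only that the equality $\|D_{T,x_T}[f]\|_2^2=I_{T,x_T}[f]$ you invoked measures a norm on the wrong space and silently absorbs the averaging. (For reference, the paper gives no proof of this lemma --- it is quoted from \cite{KLLM} --- so I have assessed your argument on its own terms.)
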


For the above reason they gave the following definition.
\begin{defn}
A function $f$ is said to be $\left(r,\delta\right)$\emph{-global}
if $I_{S,x}\left[f\right]\le\delta$ for each $\left|S\right|\le r.$
\end{defn}

The \emph{degree }of a function is the largest $S$, such that $f^{=S}\ne0$.
The derivatives decrease the degrees for the following reason.
\begin{lem}
$D_{S,x}\left[f^{=T}\right]$ is $0$ unless $S\subseteq T$, and
if $S\subseteq T$, then 
\[
D_{S,x}\left[f^{=T}\right]\in W^{T\setminus S}.
\]
\end{lem}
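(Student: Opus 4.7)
The plan is to unfold the definition of the Laplacian directly on the single summand $f^{=T}$. Since $L_S[g] = \sum_{T' \supseteq S} g^{=T'}$ and $f^{=T} \in W_T$, the Efron--Stein components of $f^{=T}$ itself satisfy $(f^{=T})^{=T'} = f^{=T}$ when $T' = T$ and $(f^{=T})^{=T'} = 0$ otherwise (the $W_{T'}$ are mutually orthogonal projection spaces). Plugging this into the sum defining $L_S$, I immediately get $L_S[f^{=T}] = f^{=T}$ whenever $S \subseteq T$, and $L_S[f^{=T}] = 0$ whenever $S \not\subseteq T$. Restricting to $x \in V_S$ yields the first claim: $D_{S,x}[f^{=T}] = 0$ for $S \not\subseteq T$.

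For the second claim, suppose $S \subseteq T$; then $D_{S,x}[f^{=T}] = f^{=T}(x, \cdot) \in L^2(V_{\bar S}, \mu_{\bar S})$, and I need to verify that this restriction lies in the Efron--Stein summand $W_{T \setminus S}$ of $L^2(V_{\bar S}, \mu_{\bar S})$. I would invoke the standard characterisation of $W_U$ as the space of functions depending only on coordinates in $U$ whose conditional expectation over any single coordinate $x_i$, $i \in U$, vanishes. Since $f^{=T}$ depends only on the $T$-coordinates, the restriction $f^{=T}(x, \cdot)$ depends only on the coordinates in $T \setminus S$. Moreover, $f^{=T} \in W_T$ implies $\mathbb{E}_{x_i}[f^{=T}] = 0$ for every $i \in T$, and this identity is preserved when the coordinates indexed by $S$ are fixed to $x$; hence $\mathbb{E}_{x_i}[f^{=T}(x, \cdot)] = 0$ for every $i \in T \setminus S$. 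Both defining properties of $W_{T \setminus S}$ are therefore satisfied.

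The only real obstacle is notational: one must keep straight that the symbol $W_{T \setminus S}$ in the conclusion refers to the Efron--Stein decomposition of $L^2(V_{\bar S}, \mu_{\bar S})$ rather than of $L^2(V_{[k]}, \mu)$. Once that identification is made explicit, the argument reduces to two elementary consequences of the Efron--Stein decomposition and needs no further computation.
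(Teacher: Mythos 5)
Your argument is correct and is the standard one: the orthogonality of the $W_{T'}$ forces $(f^{=T})^{=T'}=\delta_{T,T'}f^{=T}$, so $L_S[f^{=T}]$ is $f^{=T}$ when $S\subseteq T$ and $0$ otherwise, and the restriction $f^{=T}(x,\cdot)$ inherits both defining properties of $W_{T\setminus S}$ in $L^2(V_{\overline S},\mu_{\overline S})$ (being a $T\setminus S$-junta, and having vanishing conditional expectation in each coordinate of $T\setminus S$, since $A_{[k]\setminus\{i\}}f^{=T}=0$ for $i\in T$ is a pointwise identity that survives fixing the $S$-coordinates). The paper states this lemma without proof, attributing it to \cite{KLLM}, so there is no in-paper argument to compare against; your derivation is the one any reader would supply, and it is complete.
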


Consequently, if $f=\sum_{\left|S\right|\le d}f^{=S}$ is of degree
$d$, then $D_{S,x}\left[f\right]$ is of degree $d-\left|S\right|.$

\subsection{Hypercontractivity}

The following result is by \cite{KLLM}.
\begin{thm}
\label{thm:KLLM}If $f\in L^{2}\left(V,\mu\right)$ is of degree $d$,
then

\[
\|f\|_{4}^{4}\le1000^{d}\sum_{S}\mathbb{E}_{x}I_{S,x}\left[f\right]^{2}.
\]
\end{thm}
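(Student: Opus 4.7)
The plan is to prove Theorem~\ref{thm:KLLM} by induction on the degree $d$, following the standard route for hypercontractivity on product spaces, and then indicate where one would lose constants to reach the claimed $1000^d$. The base case $d=0$ is immediate: a constant $f$ satisfies $\|f\|_4^4 = \|f\|_2^4 = I_\varnothing[f]^2$, which is the only non-zero term on the right-hand side.

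For the inductive step, I would fix a coordinate $i \in [k]$ and use the decomposition $f = g + h$, where $g := A_{\overline{\{i\}}} f = \mathbb{E}_{x_i}[f]$ and $h := L_{\{i\}}[f] = f - g$. Then $g$ depends only on $x_{\bar i}$ and has degree $\le d$ over the product space $V_{\bar i}$, while $h$ satisfies $\mathbb{E}_{x_i}[h] = 0$ and, for each fixed $x_i$, its restriction $h(x_i,\cdot) = D_{\{i\},x_i}[f]$ has degree at most $d-1$ over $V_{\bar i}$. Expanding the fourth power gives
\[
\|f\|_4^4 \;=\; \mathbb{E}[g^4] \;+\; 4\,\mathbb{E}[g^3 h] \;+\; 6\,\mathbb{E}[g^2 h^2] \;+\; 4\,\mathbb{E}[g\, h^3] \;+\; \mathbb{E}[h^4].
\]
The first odd cross-term vanishes because $g$ is independent of $x_i$ and $\mathbb{E}_{x_i}[h]=0$, so $\mathbb{E}[g^3 h] = \mathbb{E}_{x_{\bar i}}[g^3\, \mathbb{E}_{x_i}[h]] = 0$.

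Next I would apply the inductive hypothesis in two places. To $g$ directly (a degree-$d$ function, but now effectively over $k-1$ coordinates; so a secondary induction on $k$, or better, on the pair $(d,k)$, is needed), yielding $\mathbb{E}[g^4] \le 1000^d \sum_{S \not\ni i} \mathbb{E}_x I_{S,x}[g]^2$, and to $h(x_i,\cdot)$ for each fixed $x_i$ (degree $\le d-1$), yielding, after averaging over $x_i$, $\mathbb{E}[h^4] \le 1000^{d-1} \sum_{T \not\ni i} \mathbb{E}_{x_i}\mathbb{E}_x I_{T,x}[h(x_i,\cdot)]^2$. One then identifies $I_{S,x}[g] = I_{S,x}[f]$ for $S\not\ni i$, and $I_{T,x}[h(x_i,\cdot)] = I_{T\cup\{i\},(x,x_i)}[f]$, so the two sums together reproduce the right-hand side of the theorem indexed by $S\not\ni i$ and $S\ni i$, respectively. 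The mixed terms $\mathbb{E}[g^2 h^2]$ and $\mathbb{E}[g h^3]$ are to be handled by Cauchy--Schwarz / Hölder, each bounded by a product of $\|g\|_4$- and $\|h\|_4$-factors that fall under the already-controlled quantities; the $6$ and $4$ combinatorial coefficients, plus constants lost in Cauchy--Schwarz, must be absorbed into the recursion $C_d = (\text{const})\cdot C_{d-1}$.

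The main obstacle will be the odd cross-term $\mathbb{E}[g h^3]$: unlike the Boolean-cube case where such a term would vanish by parity, here we only know $\mathbb{E}_{x_i}[h]=0$ and not that $\mathbb{E}_{x_i}[h^3]=0$. The standard remedy, following \cite{EKL21}, is to estimate $\mathbb{E}[g h^3] \le \|g\|_4 \cdot \|h\|_4^3$ and then use the degree-lowering property of $h$ (degree $\le d-1$) to get a slightly better exponent for $\|h\|_4$ than for $\|g\|_4$, so that combining with a weighted AM--GM step keeps the recursion multiplicative. The constant $1000$ is chosen precisely so that after inflating by the three cross-term contributions ($6+4+1$ plus the Cauchy--Schwarz losses, roughly a factor of order $20$ per layer of induction) the inequality $C_d \le 1000\cdot C_{d-1}$ closes. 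Finally, averaging the whole argument over a uniformly random choice of the split coordinate $i$ (or picking the coordinate that minimises a given quantity) would give the cleanest bookkeeping, since then each $S$ is singled out with probability $|S|/k$ on one side and $1-|S|/k$ on the other, and the two regimes combine symmetrically into the single sum $\sum_S \mathbb{E}_x I_{S,x}[f]^2$.
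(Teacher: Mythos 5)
The statement you are proving is cited in the paper from \cite{KLLM} and is not proved here at all; the paper instead gives an alternative proof of the \emph{slightly weaker} Theorem~\ref{thm:Our version of hypercontrativity-product case} (with an extra $(9d)^{|T|}$ factor in each summand), via a completely different route. Rather than peeling off one coordinate and expanding $(g+h)^4$, the paper's Lemma~\ref{lem:Inductive approach} writes $\|f\|_4^4=\sum_S\|(f^2)^{=S}\|_2^2$, expands each $(f^2)^{=S}$ as $\sum_{T_1,T_2}(f^{=T_1}f^{=T_2})^{=S}$, and partitions the pairs $(T_1,T_2)$ into three classes ($T_1\cap T_2\cap S\neq\varnothing$, $T_1\Delta T_2=S$, and the rest). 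The first class is repackaged via inclusion--exclusion into $\|L_T[f]\|_4^4$ terms, the second is controlled by Bonami on the uniform cube applied to $\sum_T\|f^{=T}\|_2\chi_T$, and the third vanishes. Only afterwards is a coordinate-restriction induction applied, but to the Laplacians $L_T$ rather than to a single coordinate split.

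Your single-coordinate Bonami-style induction has a genuine gap in the cross-terms, which you yourself flag but do not actually resolve. In the classical Bonami argument the recursion closes because the target $\|f\|_2^4=(\|g\|_2^2+\|h\|_2^2)^2$ factors as a perfect square, so the bound $\mathbb{E}[g^2h^2]\le\sqrt{\mathbb{E}[g^4]\mathbb{E}[h^4]}\le 3\cdot 9^{d-1}\|g\|_2^2\|h\|_2^2$ exactly matches the middle binomial term $2\|g\|_2^2\|h\|_2^2$ of the target. Here the target $\sum_S\mathbb{E}_x I_{S,x}[f]^2$ is an additive sum, not a quartic in a decomposition: after applying the inductive bounds $\|g\|_4^4\le 1000^d\sum_{S\not\ni i}\mathbb{E}_xI_{S,x}[g]^2=:A$ and $\|h\|_4^4\le 1000^{d-1}\sum_{S\ni i}\mathbb{E}_xI_{S,x}[f]^2=:B$, your remaining budget over what $A+B$ already consumes is (essentially) $999\cdot 1000^{d-1}\sum_{S\ni i}\mathbb{E}_xI_{S,x}[f]^2$ plus the slack $\sum_{S\not\ni i}\bigl(\mathbb{E}_xI_{S,x}[f]^2-\mathbb{E}_xI_{S,x}[g]^2\bigr)$. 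The cross-terms $6\mathbb{E}[g^2h^2]+4\mathbb{E}[gh^3]\le 6\sqrt{AB}+4A^{1/4}B^{3/4}$ scale like $\sqrt{AB}$, which is unbounded relative to the available slack when $A\gg B$ (e.g. when $\|h\|_4\ll\|g\|_4$ but both are nonzero). Young's inequality cannot fix this: $\sqrt{AB}\le\tfrac{\eta}{2}A+\tfrac{1}{2\eta}B$ forces you to spend a multiple of $A$ a second time, for which there is no budget. So the weighted AM--GM step you invoke does not close the recursion, and the proposal does not yield the theorem as stated.
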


To show the implication of the theorem for global functions they use
the following inequality.
\begin{lem}
\label{lem: sum of low degree influences is bounded } 
\[
\sum_{S}\mathbb{E}_{x}I_{S,x}\left[f\right]\le2^{d}\|f\|_{2}^{2}.
\]
\end{lem}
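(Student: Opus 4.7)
The plan is to combine three ingredients: the definition of the influence as the $L^2$-norm of the derivative, the formula $L_S[f]=\sum_{T\supseteq S}f^{=T}$, and the orthogonality of the Efron--Stein decomposition. Once we unpack each influence, the desired bound follows by swapping the order of summation over $S$ and $T$.

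First, I would rewrite the expected influence as an unconditional $L^2$-norm of the Laplacian. By definition $I_{S,x}[f]=\|D_{S,x}[f]\|_2^2 = \|L_S[f](x,\cdot)\|_{L^2(\mu_{\bar S})}^2$, so integrating over $x\sim\mu_S$ and applying Fubini gives
\[
\mathbb{E}_{x\sim\mu_S} I_{S,x}[f] \;=\; \|L_S[f]\|_{L^2(\mu)}^2.
\]
Next, since $L_S[f]=\sum_{T\supseteq S}f^{=T}$ and the Efron--Stein pieces $\{f^{=T}\}_{T\subseteq[k]}$ are mutually orthogonal in $L^2(\mu)$, Pythagoras yields
\[
\|L_S[f]\|_2^2 \;=\; \sum_{T\supseteq S}\|f^{=T}\|_2^2.
\]

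Now I would sum over $S$ and exchange the order of summation, grouping by the larger set $T$: each $T$ contributes to exactly those $S$ with $S\subseteq T$, of which there are $2^{|T|}$. Hence
\[
\sum_{S}\mathbb{E}_{x}I_{S,x}[f] \;=\; \sum_{T}2^{|T|}\|f^{=T}\|_2^2.
\]
Finally, using the hypothesis that $f$ has degree $d$, i.e.\ $f^{=T}=0$ whenever $|T|>d$, the factor $2^{|T|}$ is bounded by $2^d$ on the support of the sum, and one more application of Parseval $\|f\|_2^2=\sum_T\|f^{=T}\|_2^2$ gives $\sum_S\mathbb{E}_xI_{S,x}[f]\le 2^d\|f\|_2^2$.

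There is no real obstacle here; the only subtle point is keeping track of which sums range over all $S\subseteq[k]$ and remembering that the degree hypothesis truncates the outer sum at $|T|\le d$, which is exactly what converts the otherwise exponential factor $2^{|T|}$ into the uniform bound $2^d$. The entire argument is an identity modulo this final truncation, and it is this identity that our later approximate Efron--Stein framework will need to reproduce, up to controlled error, in the non-product setting.
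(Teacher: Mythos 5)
Your proof is correct and follows essentially the same route as the paper's: rewrite $\mathbb{E}_x I_{S,x}[f]$ as $\|L_S[f]\|_2^2$, expand via orthogonality of the Efron--Stein pieces, swap the order of summation over $S$ and $T$, and use the degree-$d$ hypothesis to bound the multiplicity $2^{|T|}$ by $2^d$. The only difference is that you spell out the Fubini step and the double-counting more explicitly, whereas the paper compresses it to a single chain of equalities.
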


\begin{proof}
The right hand side is equal to 
\[
\sum_{S}\|L_{S}\left[f\right]\|_{2}^{2}=\sum_{S}\sum_{T\supseteq S,\left|T\right|\le d}\|f^{=T}\|_{2}^{2}\le2^{d}\sum_{T}\|f^{=T}\|_{2}^{2}=2^{d}\|f\|_{2}^{2}.
\]
\end{proof}
Combining Theorem \ref{thm:KLLM} and Lemma \ref{lem: sum of low degree influences is bounded },
we obtain the following corollary.
\begin{cor}
\label{cor:low degree functions have small 4-norms} If $f$ of degree
$d$ is $\left(d,\delta\right)$-global. Then $\|f^{\le d}\|_{4}^{4}\le\delta2000^{d}\|f\|_{2}^{2}.$
\end{cor}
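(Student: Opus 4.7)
The plan is to combine Theorem \ref{thm:KLLM} with Lemma \ref{lem: sum of low degree influences is bounded } in a straightforward way, using the globalness hypothesis to trade one factor of $I_{S,x}[f]$ for a constant $\delta$. Since $f$ is assumed to be of degree $d$, we have $f = f^{\le d}$ throughout, so the two sides of the claimed inequality agree with the quantities appearing in Theorem \ref{thm:KLLM} applied to $f$.

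First I would apply Theorem \ref{thm:KLLM} directly, obtaining
\[
\|f^{\le d}\|_4^4 = \|f\|_4^4 \le 1000^d \sum_S \mathbb{E}_x I_{S,x}[f]^2.
\]
Next I would observe that the sum on the right is effectively restricted to $|S| \le d$: indeed, since $f$ has degree $d$, the Laplacian $L_S[f] = \sum_{T \supseteq S} f^{=T}$ vanishes whenever $|S| > d$ (all terms in the sum have $|T| > d$), so $I_{S,x}[f] = 0$ for such $S$. This step is what licenses using the globalness hypothesis, which only controls $I_{S,x}[f]$ for $|S| \le d$.

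For the remaining terms I would use the pointwise bound $I_{S,x}[f] \le \delta$ guaranteed by $(d,\delta)$-globalness to factor one copy of $I_{S,x}[f]$ out of the square, obtaining
\[
\sum_{|S| \le d} \mathbb{E}_x I_{S,x}[f]^2 \le \delta \sum_{|S| \le d} \mathbb{E}_x I_{S,x}[f].
\]
Then Lemma \ref{lem: sum of low degree influences is bounded } bounds the right-hand side by $\delta \cdot 2^d \|f\|_2^2$, and combining these estimates gives $\|f^{\le d}\|_4^4 \le 1000^d \cdot 2^d \delta \|f\|_2^2 = 2000^d \delta \|f\|_2^2$, as claimed.

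There is no genuine obstacle here: the corollary is essentially a two-line consequence of the two preceding statements, and the only subtlety is the observation that the sum over all $S$ in Theorem \ref{thm:KLLM} may be truncated to $|S| \le d$ because of the degree-$d$ hypothesis, so that the globalness hypothesis (which is only stated for $|S| \le d$) suffices to dominate every nonzero term.
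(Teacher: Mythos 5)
Your proof is correct and follows the same route as the paper: apply Theorem \ref{thm:KLLM}, use globalness to extract one factor of $\delta$, and finish with Lemma \ref{lem: sum of low degree influences is bounded }. You additionally spell out why the sum may be truncated to $|S|\le d$ (namely $L_S[f]=0$ once $|S|>d$ since $f$ has degree $d$), a step the paper leaves implicit when it invokes the $(d,\delta)$-globalness hypothesis over all $S$.
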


\begin{proof}
We have 
\begin{align*}
\|f^{\le d}\|_{4}^{4} & \le1000^{d}\sum_{S\subseteq\left[n\right]}\mathbb{E}_{x\sim\mu_{S}}\|I_{S,x}\left[f\right]\|_{2}^{4}\\
 & \le\delta1000^{d}\sum_{S}\mathbb{E}_{x\sim\mu_{S}}\|I_{S,x}\left[f\right]\|_{2}^{2}\\
 & \le\delta2000^{d}\|f\|_{2}^{2}.
\end{align*}
\end{proof}

\subsection{An alternative proof of hypercontractivity on product spaces}

We give an alternative proof of the following slightly weaker version
of Theorem \ref{thm:KLLM}. The proof is inspired by a future work
by Ellis, Kindler, and the second author \cite{EKL21}, who show that
the same idea works in the Grassmann setting. In this paper we show
that it generalises to HDX as well.
\begin{thm}
\label{thm:Our version of hypercontrativity-product case} Let $f\in L^{2}\left(V,\mu\right)$
be of degree $d.$ Then 
\[
\|f\|_{4}^{4}\le2\cdot9^{d}\sum_{\left|T\right|\le d}\left(9d\right)^{\left|T\right|}\mathbb{E}_{x}\left[I_{S,x}\left[f\right]^{2}\right].
\]
\end{thm}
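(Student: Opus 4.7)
The plan is to proceed by induction on $d$, mirroring the classical Bonami--Beckner proof on the hypercube but adapted to general product measures, where we cannot exploit the Rademacher identity $x_i^2 = 1$. The base case $d = 0$ is immediate: $f$ is a constant, so $\|f\|_4^4 = f^4 = I_{\varnothing}[f]^2 = \mathbb{E}_x[I_{\varnothing, x}[f]^2]$, bounded by the right-hand side with the factor of $2$ to spare.

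For the inductive step I would fix a coordinate $i \in [k]$ and decompose $f = A_i f + L_i f$, where $A_i f = \mathbb{E}_{x_i}[f]$ is independent of $x_i$ and $\mathbb{E}_{x_i}[L_i f] = 0$. Expanding $(A_i f + L_i f)^4$ and integrating over $x_i$ kills the linear-in-$L_i f$ cross term, yielding
\[
\mathbb{E}_{x_i}[f^4] = (A_i f)^4 + 6(A_i f)^2 \mathbb{E}_{x_i}[(L_i f)^2] + 4(A_i f)\,\mathbb{E}_{x_i}[(L_i f)^3] + \mathbb{E}_{x_i}[(L_i f)^4].
\]
The cubic cross term, which was absent in the $\pm 1$ case, is controlled via Cauchy--Schwarz, $|\mathbb{E}_{x_i}[(L_i f)^3]| \le (\mathbb{E}_{x_i}[(L_i f)^2])^{1/2}(\mathbb{E}_{x_i}[(L_i f)^4])^{1/2}$, followed by an AM--GM against $(A_i f)^2 \mathbb{E}_{x_i}[(L_i f)^2]$ and $\mathbb{E}_{x_i}[(L_i f)^4]$. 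This produces an inequality of the form $\|f\|_4^4 \le \|A_i f\|_4^4 + C\,\mathbb{E}[(A_i f)^2 (L_i f)^2] + C'\,\|L_i f\|_4^4$, with the AM--GM parameter tuned so that the final leading constant will compound to exactly $9^d$.

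The crucial observation that makes induction go through is that $L_i f(x) = D_{\{i\}, x_i}[f](x_{-i})$, viewed for each fixed $x_i$ as a function on $V_{[k]\setminus\{i\}}$, has degree $\le d-1$, since every Efron--Stein component $f^{=S}$ appearing in $L_i f$ has $i \in S$ and restricts into $W_{S \setminus \{i\}}$. Applying the inductive hypothesis to $D_{\{i\}, x_i}[f]$ bounds $\mathbb{E}_{x_{-i}}[(L_i f)^4]$ by $2 \cdot 9^{d-1} \sum_{|T|\le d-1,\ i \notin T} (9(d-1))^{|T|} \mathbb{E}_y[I_{T, y}[D_{\{i\}, x_i}[f]]^2]$. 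The iterative identity $I_{T, y}[D_{\{i\}, x_i}[f]] = I_{T \cup \{i\}, (x_i, y)}[f]$ lets us re-index by $T' = T \cup \{i\}$ and average over $x_i$, converting the estimate into one in terms of $\mathbb{E}_z[I_{T', z}[f]^2]$ at sets $T' \ni i$ with $|T'| \le d$. The exponent $(9(d-1))^{|T'|-1}$ becomes $(9d)^{|T'|}$ at the cost of a bounded multiplicative factor that is absorbed by the slack in the constant $2$.

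The hard part is the term $\|A_i f\|_4^4$: the function $A_i f$ still has degree $d$ in general, so direct induction does not apply. My plan to resolve this is to average the one-coordinate inequality over all $i \in [k]$ and exploit the degree sum-rule $\sum_i \|L_i f\|_2^2 = \sum_S |S|\,\|f^{=S}\|_2^2 \le d\,\|f\|_2^2$, which shows that the top-degree mass of $f$ must be spread across coordinates. This means that after averaging, the residual degree-$d$ contribution of the $A_i f$ terms collectively contributes at most a factor of order $d$ per inductive step, which combines with the AM--GM balancing to produce the weight $(9d)^{|T|}$. The tightest obstruction is the AM--GM parameter for the cubic cross term: an imprecise choice blows up $9^d$ to a much larger constant, so this must be set so that the recursive multiplication accumulates exactly to the Bonami constant $9^d$ over $d$ levels of induction.
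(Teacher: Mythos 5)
Your approach is a genuinely different one from the paper's, but it has a gap at exactly the point you flag as the hard part, and the proposed fix does not close it.

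The paper does not use a single-coordinate Bonami-style decomposition at all. It writes $\|f\|_4^4 = \sum_S \|(f^2)^{=S}\|_2^2$ (Parseval), then expands $(f^2)^{=S} = \sum_{T_1,T_2} (f^{=T_1}f^{=T_2})^{=S}$ and partitions the pairs $(T_1,T_2)$ into three classes: pairs with $T_1 \cap T_2 \cap S \neq \varnothing$ are swallowed, via an inclusion--exclusion identity, by terms $\|L_T[f]\|_4^4$ for $T \subseteq S$; pairs with $T_1 \Delta T_2 = S$ are bounded by $\|f^{=T_1}\|_2\|f^{=T_2}\|_2$ and, after summing over $S$, reduce to a hypercontractivity estimate for $\sum_S \|f^{=S}\|_2 \chi_S$ on the uniform Boolean cube, giving $9^d\|f\|_2^4$; and the remaining pairs vanish identically. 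This gives the key lemma $\tfrac12\|f\|_4^4 \le 9^d\|f\|_2^4 + \sum_{T\ne\varnothing}(4d)^{|T|}\|L_T[f]\|_4^4$, and then the induction on $d$ goes through cleanly because the derivatives $D_{T,x}[f]$ drop the degree to $d-|T|$ and there is no analogue of an $A_i f$ term to handle. The advantage of this structure is precisely that the degree-$d$ ``remainder'' after peeling off the Laplacians is already a plain $\|f\|_2^4$, not a 4-norm of a degree-$d$ function.

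Your plan, by contrast, leaves you with $\|A_i f\|_4^4$, which you correctly identify as the obstacle: it has degree $d$, so induction on $d$ alone cannot touch it. The classical Bonami argument resolves this by a \emph{second} induction on the number of variables $n$, since $A_i f$ lives on $n-1$ variables; it does not average over $i$. Your proposed fix --- average the single-coordinate inequality over all $i$ and invoke the sum-rule $\sum_i \|L_i f\|_2^2 = \sum_S |S|\,\|f^{=S}\|_2^2 \le d\|f\|_2^2$ --- does not work as stated. First, that sum-rule controls $L^2$-mass, while you need to control the $L^4$-norms $\|A_i f\|_4^4$, and there is no comparable ``$L^4$ sum-rule''. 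Second, averaging does not shrink these terms: if $f$ depends on only a few coordinates, then $A_i f = f$ for most $i$, so $\tfrac1k\sum_i \|A_i f\|_4^4$ is essentially $\|f\|_4^4$ and the recursion has not made progress. The claim that ``the residual degree-$d$ contribution of the $A_i f$ terms collectively contributes at most a factor of order $d$ per inductive step'' is exactly the statement you would need, but it is asserted rather than proved, and I do not believe it is true in the form stated. The identity $I_{T,y}[D_{\{i\},x_i}f] = I_{T\cup\{i\},(x_i,y)}[f]$ and the Cauchy--Schwarz/AM--GM treatment of the cubic cross term are both fine; the gap is solely in the $\|A_i f\|_4^4$ term. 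To rescue the coordinate-by-coordinate route you would most likely need a genuine double induction on $(n,d)$ with a carefully chosen hypothesis that tracks which influences have been accounted for, which is a substantially different and longer argument than what you sketch.
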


\subsubsection{Proof overview.}

Before providing the full proof, we first describe the high-level
approach for proving Theorem \ref{thm:Our version of hypercontrativity-product case}.
The strategy is to first show a lemma that gives the following bound
\begin{equation}
\|f\|_{4}^{4}\le C^{d}\|f\|_{2}^{4}+\sum_{S\subseteq\left[n\right]}\left(4d\right)^{\left|S\right|}\|L_{S}\left[f\right]\|_{4}^{4},\label{eq:inductive lemma}
\end{equation}

for a constant $C$. Using this lemma, we can give an inductive proof
by first noting that $\|L_{S}\left[f\right]\|_{4}^{4}=\mathbb{E}_{x}\|D_{S,x}\|_{4}^{4}$,
and then applying induction using the fact that $D_{S,x}$ is of degree
$d-\left|S\right|$. Finally, using the fact that $D_{S,x}D_{T,y}=D_{S\cup T,\left(x,y\right)}$,
we can get our desired hypercontractive statement.

Hence, the key step is to prove the aforementioned lemma. To this
end, we first use the fact that

\[
\mathbb{E}\left[f^{4}\right]=\sum_{S}\|\left(f^{2}\right)^{=S}\|_{2}^{2}.
\]

We then expand the summands of $\left(f^{2}\right)^{=S}$ as sums
of terms of the form $\left(f^{=T_{1}}f^{=T_{2}}\right)^{=S}.$ Next,
we note that the nonzero terms either satisfy $T_{1}\cap T_{2}\cap S\ne\varnothing$
or satisfy $T_{1}\Delta T_{2}=S$. Terms of the first kind are cancelled
out by $L_{i}\left[f\right]^{4}$ for an $i\in T_{1}\cap T_{2}\cap S$
on the right hand side of (\ref{eq:inductive lemma}). (The terms
$\|L_{S}\left[f\right]\|_{4}^{4}$ appear because of over counting,
which we resolve by inclusion exclusion.) Terms of the latter kind
correspond to the situation in the Boolean cube where $f^{=T}=\hat{f}\left(T\right)\chi_{T}$
and $\chi_{T}\chi_{S}=\chi_{T\Delta S}.$ We then upper bound $\|\left(f^{=T_{1}}f^{=T_{2}}\right)^{=S}\|_{2}$
by $\|f^{=T_{1}}\|_{2}\|f^{=T_{2}}\|_{2}$. This allows us to translate
the problem of upper bounding the terms of the first kind to the problem
of upper bounding the $4$-norm of a low degree function on the Boolean
cube. Namely, the function 
\[
\sum_{\left|T\right|\le d}\|f^{=T}\|_{2}\chi_{T}.
\]
Finally, we use hypercontractivity to upper bound the $4$-norm by
its $2$-norm, which is equal to the $2$-norm of $f$. This concludes
the proof overview.

\subsubsection{Proof of hypercontractivity on product spaces}

We now give a formal proof of Theorem \ref{thm:Our version of hypercontrativity-product case}.
We shall first need the following key lemma, which admits the inductive
approach.
\begin{lem}
\label{lem:Inductive approach} Let $f\in L^{2}\left(V,\mu\right)$
be of degree $d.$ Then 
\[
\frac{1}{2}\|f\|_{4}^{4}\le9{}^{d}\|f\|_{2}^{4}+\sum_{T\ne\varnothing}\left(4d\right)^{\left|T\right|}\|L_{T}\left[f\right]\|_{4}^{4}.
\]
\end{lem}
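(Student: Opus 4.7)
My plan is to start from the Parseval identity
\[
\|f\|_4^4 = \|f^2\|_2^2 = \sum_S \|(f^2)^{=S}\|_2^2,
\]
expand $f^2 = \sum_{T_1,T_2} f^{=T_1} f^{=T_2}$, and split each $(f^2)^{=S} = \sum_{T_1,T_2}(f^{=T_1}f^{=T_2})^{=S}$. A short integration-out argument shows $(f^{=T_1}f^{=T_2})^{=S} = 0$ unless $T_1\triangle T_2 \subseteq S \subseteq T_1\cup T_2$, which means $S = (T_1\triangle T_2) \cup U$ for some $U \subseteq T_1\cap T_2$. This cleanly splits $(f^2)^{=S} = g_S^{(0)} + g_S^{(1)}$ into a \emph{Boolean-cube part} (pairs with $S = T_1\triangle T_2$, i.e.\ $U = \varnothing$) and a \emph{remainder} (pairs with $S\cap T_1\cap T_2\ne\varnothing$). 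Applying $(a+b)^2 \le 2a^2 + 2b^2$ yields $\|f\|_4^4 \le 2\sum_S \|g_S^{(0)}\|_2^2 + 2\sum_S \|g_S^{(1)}\|_2^2$, so it suffices to bound each sum.

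For the Boolean-cube part, the first step is the auxiliary estimate
\[
\|(f^{=T_1}f^{=T_2})^{=T_1\triangle T_2}\|_2 \le \|f^{=T_1}\|_2\,\|f^{=T_2}\|_2,
\]
obtained by observing that the projection onto $W_{T_1\triangle T_2}$ is simply $\mathbb{E}_{x_{T_1\cap T_2}}$, and then applying Cauchy--Schwarz in the shared coordinates $x_{T_1\cap T_2}$ together with the product independence of the disjoint coordinate blocks $x_{T_1\setminus T_2}$ and $x_{T_2\setminus T_1}$. I then introduce the ``shadow'' function $h := \sum_{|T|\le d}\|f^{=T}\|_2\,\chi_T$ on the Boolean cube $\{\pm 1\}^n$, which has degree $\le d$ and satisfies $\|h\|_2^2 = \|f\|_2^2$. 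Combining the triangle inequality with the auxiliary estimate gives $\|g_S^{(0)}\|_2 \le \|(h^2)^{=S}\|_2$, so summing over $S$ and invoking Bonami's lemma yields
\[
\sum_S \|g_S^{(0)}\|_2^2 \le \|h^2\|_2^2 = \|h\|_4^4 \le 9^d\|h\|_2^4 = 9^d\|f\|_2^4.
\]

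For the remainder, the key identity is the inclusion--exclusion
\[
g_S^{(1)} = \sum_{\varnothing\ne U\subseteq S}(-1)^{|U|+1}(L_U[f]^2)^{=S},
\]
obtained by writing $\mathbb{1}[S\cap T_1\cap T_2\ne\varnothing] = \sum_{\varnothing\ne U\subseteq S\cap T_1\cap T_2}(-1)^{|U|+1}$, swapping the order of summation, and using $L_U[f]^2 = \sum_{T_1,T_2\supseteq U} f^{=T_1}f^{=T_2}$. Weighted Cauchy--Schwarz with weights $(4d)^{|U|}$, combined with the observation that $f^2$ has degree $\le 2d$ so only $|S|\le 2d$ contribute, gives
\[
\|g_S^{(1)}\|_2^2 \le \Bigl((1+\tfrac{1}{4d})^{|S|}-1\Bigr)\sum_{\varnothing\ne U\subseteq S}(4d)^{|U|}\|(L_U[f]^2)^{=S}\|_2^2,
\]
whose prefactor is bounded by $e^{1/2}-1 < 1$. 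Summing over $S$, swapping the order of summation, and using $\sum_S \|(L_U[f]^2)^{=S}\|_2^2 = \|L_U[f]\|_4^4$ yields $\sum_S \|g_S^{(1)}\|_2^2 \le (e^{1/2}-1)\sum_{T\ne\varnothing}(4d)^{|T|}\|L_T[f]\|_4^4$. Combining with the Boolean-cube bound and dividing by $2$ proves the lemma. I expect the main difficulty to lie in the inclusion--exclusion bookkeeping and in calibrating the weights $(4d)^{|U|}$: they must grow fast enough that $\sum_{U\subseteq S}(4d)^{-|U|}$ remains bounded by an absolute constant for $|S|\le 2d$, yet slowly enough that the induction on $d$ (which later feeds each $\|L_T[f]\|_4^4 = \mathbb{E}_x\|D_{T,x}[f]\|_4^4$ back into the same lemma at degree $d-|T|$) still converges to produce the final hypercontractive estimate.
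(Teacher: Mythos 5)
Your proposal is correct and follows essentially the same route as the paper: the same Parseval expansion of $\|f\|_4^4$, the same trichotomy of pairs $(T_1,T_2)$ (vanishing, symmetric-difference, and overlapping-with-$S$), the same inclusion--exclusion writing the overlapping part as $\sum_{\varnothing\ne U\subseteq S}(-1)^{|U|+1}(L_U[f]^2)^{=S}$, and the same shadow function on $\{\pm1\}^n$ with Bonami for the symmetric-difference part. The only (harmless, and if anything slightly cleaner) deviation is using the $S$-independent weight $(4d)^{|U|}$ in the weighted Cauchy--Schwarz and the bound $|S|\le 2d$ to control the prefactor by $e^{1/2}-1<1$, rather than the paper's $(4|S|)^{|T|}$.
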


We are now ready to prove the lemma.
\begin{proof}[Proof of Lemma \ref{lem:Inductive approach}]
 By Parseval we have 
\[
\|f\|_{4}^{4}=\sum\|\left(f^{2}\right)^{=S}\|_{2}^{2}.
\]
 We bound each term $\|\left(f^{2}\right)^{=S}\|_{2}^{2}$ individually.
By expanding and using the linearity of the $\cdot^{=S}$ operator
we have
\[
\left(f^{2}\right)^{=S}=\sum_{T_{1},T_{2}}\left(f^{=T_{1}}f^{=T_{2}}\right)^{=S}.
\]
 We now divide the pairs $\left(T_{1},T_{2}\right)$ into three sums.
\begin{enumerate}
\item We let $I_{1}$ be the set of pairs $\left(T_{1},T_{2}\right)$ such
that $T_{1}\cap T_{2}\cap S\ne\varnothing.$ If $i$ is in $T_{1}\cap T_{2}\cap S$,
then the summand $\left(f^{=T_{1}}f^{=T_{2}}\right)^{=S}$ appears
as a summand when expanding $\left(L_{i}\left[f\right]^{2}\right)^{=S}.$
This explains the role of the Laplacians in the right hand side.
\item We let $I_{2}$ be the set of pairs such that $T_{1}\Delta T_{2}=S.$
These kind of pairs have a similar behavior to the one in the Boolean
cube. There $f^{=S}=\hat{f}\left(S\right)\chi_{S}$ and 
\[
f^{=S}f^{=T}=\hat{f}\left(S\right)\hat{f}\left(T\right)\chi_{S\Delta T}.
\]
 We show that the contribution from the pairs in $I_{2}$ is $\le C^{d}\|f\|_{2}^{2}.$
\item We let $I_{3}=\left(T_{1},T_{2}\right)$ such that either $\left(T_{1}\Delta T_{2}\right)\setminus S\ne\varnothing$
or $S\setminus\left(T_{1}\cup T_{2}\right)\ne\varnothing$. We show
that in this case $\left(f^{=T_{1}}f^{=T_{2}}\right)^{=S}=0.$
\end{enumerate}
It is easy to verify that each pair $\left(T_{1},T_{2}\right)$ belongs
to at least one of the sets $I_{1},I_{2},I_{3}$. We additionally
have $I_{1}\cap I_{2}=\varnothing.$

\subsection*{Upper bounding the contribution from $I_{1}$}

Let us start by upper bounding the contribution from pairs corresponding
to $I_{1}$. For a nonempty $T\subseteq S$ write $I_{1}\left(T\right)$
for the pairs $\left(T_{1},T_{2}\right)$, such that $T_{1}\cap T_{2}\supseteq T.$
Then 
\[
\left(L_{T}\left[f\right]^{2}\right)^{=S}=\sum_{\left(T_{1},T_{2}\right)\in I_{1}\left(T\right)}\left(f^{=T_{1}}f^{=T_{2}}\right)^{=S}.
\]
 Now $I_{1}=\bigcup_{i\in S}I_{1}\left(i\right),$ so as a multiset
inclusion-exclusion shows that we have
\[
I_{1}=\sum_{T\subseteq S}\left(-1\right)^{\left|T\right|-1}\bigcap_{i\in T}I_{1}\left(i\right)=\sum_{T\subseteq S}\left(-1\right)^{\left|T\right|-1}I_{1}\left(T\right).
\]

We therefore have the equality: 
\[
\sum_{\left(T_{1},T_{2}\right)\in I_{1}}\left(f^{=T_{1}}f^{=T_{2}}\right)^{=S}=\sum_{T\subseteq S,T\ne\varnothing}\left(-1\right)^{\left|T\right|-1}\left(L_{T}\left[f\right]^{2}\right)^{=S}.
\]
 By the triangle inequality and Cauchy--Schwarz, we obtain that 
\begin{align*}
\|\sum_{\left(T_{1},T_{2}\right)\in I_{1}}\left(f^{=T_{1}}f^{=T_{2}}\right)^{=S}\|_{2}^{2} & \le\left(\sum_{i=1}^{\left|S\right|}\binom{\left|S\right|}{i}\left(4\left|S\right|\right)^{-i}\right)\left(\sum_{T\subseteq S}\left(4\left|S\right|\right)^{\left|T\right|}\|\left(L_{T}\left[f\right]^{2}\right)^{=S}\|_{2}^{2}\right)\\
 & \le\sum_{T\subseteq S}\left(4\left|S\right|\right)^{\left|T\right|}\|\left(L_{T}\left[f\right]^{2}\right)^{=S}\|_{2}^{2}.
\end{align*}
 Summing over all $S$ we have 
\[
\sum_{S}\|\sum_{\left(T_{1},T_{2}\right)\in I_{1}}\left(f^{=T_{1}}f^{=T_{2}}\right)^{=S}\|_{2}^{2}\le\sum_{T}\left(4d\right)^{\left|T\right|}\|L_{T}\left[f\right]\|_{4}^{4}.
\]

\subsection*{Upper bounding the contribution from $I_{2}$}

We now upper bound the contribution from $I_{2}.$ Let $T_{1}\Delta T_{2}=S.$
Then for each $S'\subsetneq S$, we assert that $A_{S'}\left(f^{=T_{1}}f^{=T_{2}}\right)=0$.
Let $i\in S\setminus S'$. Then $i\in T_{1}\Delta T_{2}.$ Assume
without loss of generality that $i\in T_{1}$. Then 
\[
A_{S'\cup T_{2}}\left(f^{=T_{1}}f^{=T_{2}}\right)=f^{=T_{2}}A_{S'\cup T_{2}}f^{=T_{1}}=0.
\]
 This shows that $A_{S'}\left[f^{=T_{1}}f^{=T_{2}}\right]=0.$ Hence,
\[
\left(f^{=T_{1}}f^{=T_{2}}\right)^{=S}=A_{S}\left(f^{=T_{1}}f^{=T_{2}}\right)=\left\langle f^{=T_{1}}\left(x,\cdot\right),f^{=T_{2}}\left(x,\cdot\right)\right\rangle _{L^{2}\left(\mu_{\overline{S}}\right)}.
\]

By Cauchy--Schwarz we have 
\begin{align*}
\|\sum_{\left(T_{1},T_{2}\right)\in I_{2}}\left(f^{=T_{1}}f^{=T_{2}}\right)^{=S}\|_{2}^{2} & =\sum_{T_{1}\Delta T_{2}=T_{3}\Delta T_{4}=S}\left\langle \left(f^{=T_{1}}f^{=T_{2}}\right)^{=S},\left(f^{=T_{3}}f^{=T_{4}}\right)^{=S}\right\rangle \\
 & \le\sum_{T_{1}\Delta T_{2}=T_{3}\Delta T_{4}=S}\left\Vert \left(f^{=T_{1}}f^{=T_{2}}\right)^{=S}\right\Vert _{2}\left\Vert \left(f^{=T_{3}}f^{=T_{4}}\right)^{=S}\right\Vert _{2}.
\end{align*}
 Now, for each $\left(T_{1},T_{2}\right)\in I_{2}$ we have 
\begin{align*}
\left\Vert \left(f^{=T_{1}}f^{=T_{2}}\right)^{=S}\right\Vert _{2}^{2} & =\mathbb{E}_{x\sim\mu_{S}}\left\langle f^{=T_{1}}\left(x,\cdot\right),f^{=T_{2}}\left(x,\cdot\right)\right\rangle _{L^{2}\left(\mu_{\overline{S}}\right)}^{2}\\
 & \le\mathbb{E}_{x\sim\mu_{S}}\left[\|f^{=T_{1}}\left(x,\cdot\right)\|_{L^{2}\left(\mu_{\overline{S}}\right)}^{2}\|f_{x}^{=T_{2}}\|_{L^{2}\left(\mu_{\overline{S}}\right)}^{2}\right]\\
 & =\mathbb{E}_{x\sim\mu_{S}}\|f_{x}^{=T_{1}}\|_{2}^{2}\mathbb{E}_{x\sim\mu_{S}}\|f_{x}^{=T_{2}}\|_{2}^{2}\\
 & =\|f^{=T_{1}}\|_{2}^{2}\|f^{=T_{2}}\|_{2}^{2},
\end{align*}
where in the second equality we used the fact that $\|f^{=T}\left(x,\cdot\right)\|_{L^{2}\left(\mu_{\overline{S}}\right)}^{2}$
depends only on $x_{T\cap S}$, so these are independent for $T=T_{1}$
and $T=T_{2}.$ This establishes
\[
\mathbb{E}\left[\left(f^{=T_{1}}f^{=T_{2}}\right)^{=S}\left(f^{=T_{3}}f^{=T_{4}}\right)^{=S}\right]\le\|f^{=T_{1}}\|_{2}\|f^{=T_{2}}\|_{2}\|f^{=T_{3}}\|_{2}\|f^{=T_{4}}\|_{2}
\]

Summing over all $S$, we obtain 
\begin{align*}
\sum_{S}\|\sum_{\left(T_{1},T_{2}\right)\in I_{2}}\left(f^{=T_{1}}f^{=T_{2}}\right)^{=S}\|_{2}^{2} & \le\mathbb{E}_{\left(\left\{ 0,1\right\} ^{n},\mu_{\frac{1}{2}}\right)}\left(\sum_{S\subseteq\left[n\right]}\|f^{=S}\|_{2}\chi_{S}\right)^{4}\\
 & \le9^{d}\mathbb{E}\left[(\sum\|f^{=S}\|_{2}\chi_{S})^{2}\right]^{2}\\
 & =9^{d}\|f\|_{2}^{4}.
\end{align*}
 Here the first inequality follows by expanding both terms and the
second is a well known consequence of hypercontractivity in the uniform
cube.

\subsection*{Showing that there is no contribution from $I_{3}$}

We recall that $I_{3}$ consist of the pairs with either $\left(T_{1}\Delta T_{2}\right)\setminus S\ne\varnothing$
or $S\setminus\left(T_{1}\cup T_{2}\right).$ Then we claim that $\left(f^{=T_{1}}f^{=T_{2}}\right)^{=S}=0$.
If $T_{1}\cup T_{2}$ does not contain $S$, then 
\[
f^{=T_{1}}f^{=T_{2}}=A_{T_{1}\cup T_{2}}\left(f^{=T_{1}}f^{=T_{2}}\right)=\sum_{S'\subseteq S}\left(f^{=T_{1}}f^{=T_{2}}\right)^{=S'}.
\]
 The uniqueness of the Efron--Stein decomposition shows that $\left(f^{=T_{1}}f^{=T_{2}}\right)^{=S}=0$.
Suppose now that there exists $i\in\left(T_{1}\Delta T_{2}\right)\setminus S.$
Without loss of generality $i\in T_{1}.$ We then have 
\[
A_{\left[k\right]\setminus\left\{ i\right\} }\left(f^{=T_{1}}f^{=T_{2}}\right)=f^{=T_{2}}\cdot A_{\left[k\right]\setminus\left\{ i\right\} }f^{=T_{1}}=0.
\]
 In particular, $\left(f^{=T_{1}}f^{=T_{2}}\right)^{=S}=0$ as for
each $S\subseteq\left[k\right]\setminus\left\{ i\right\} $ we have
\[
\left(f^{=T_{1}}f^{=T_{2}}\right)^{=S}\left[A_{\left[k\right]\setminus\left\{ i\right\} }\left(f^{=T_{1}}f^{=T_{2}}\right)\right]^{=S}=0
\]

\subsection*{Combining the contributions from $I_{1}$ and $I_{2}$.}

The lemma now follows by Cauchy--Schwarz. We have 
\begin{align*}
\|f\|_{4}^{4} & \le\sum_{S}\|\left(f^{2}\right)^{=S}\|_{2}^{2}\\
 & \le\sum_{S}\left(2\sum_{\left(T_{1},T_{2}\right)\in I_{1}}\|\left(f^{2}\right)^{=S}\|_{2}^{2}+2\sum_{\left(T_{1},T_{2}\right)\in I_{2}}\|\left(f^{2}\right)^{=S}\|_{2}^{2}\right)\\
 & \le2\sum_{T}\left(4d\right)^{\left|T\right|}\|L_{T}\left[f\right]\|_{4}^{4}+2\cdot9^{d}\|f\|_{2}^{4}.
\end{align*}
\end{proof}
Finally, using Lemma \ref{lem:Inductive approach}, we can derive
Theorem \ref{thm:Our version of hypercontrativity-product case} as
follows.
\begin{proof}[Proof of Theorem \ref{thm:Our version of hypercontrativity-product case}]

The proof is by induction on $d$. Since $D_{T,x}\left[f\right]$
is of degree $d-\left|T\right|$, we have 
\begin{align*}
\frac{1}{2}\|f\|_{4}^{4} & \le9{}^{d}\|f\|_{2}^{4}+\sum_{T\ne\varnothing}\left(4d\right)^{\left|T\right|}\|L_{T}\left[f\right]\|_{4}^{4}\\
 & =9^{d}\|f\|_{2}^{4}+\sum_{T\ne\varnothing}\left(4d\right)^{\left|T\right|}\mathbb{E}_{x\sim\mu_{T}}\|D_{T,x}\left[f\right]\|_{4}^{4}\\
 & \le9^{d}\|f\|_{2}^{4}+\sum_{T\ne\varnothing}2\cdot9^{d-\left|T\right|}\left(4d\right)^{\left|T\right|}\sum_{T'\subseteq\left[n\right]\setminus T}\left(8d\right)^{T'}\mathbb{E}_{x\sim\mu_{T\cup T'}}I_{T\cup T',x}^{2}\\
 & =9^{d}\|f\|_{2}^{4}+\sum_{T\cap T'=\varnothing}2^{\left|T'\right|+1}9^{d-\left|T\right|}\left(4d\right)^{\left|T\cup T'\right|}\mathbb{E}_{x\sim\mu_{T\cup T'}}\|D_{T'\cup T,x}\left[f\right]\|_{2}^{4}\\
 & \le9^{d}\sum_{T\subseteq S}\left(9d\right)^{\left|T\right|}\mathbb{E}_{x\sim\mu_{T}}I_{T,x}^{2}.
\end{align*}
\end{proof}

\section{\label{sec:eps-spaces}$\epsilon$-product spaces and the operators
$A_{S,T}$}

In this section, we present the framework of $\epsilon$-product spaces,
of which high dimensional expanders are a special case. We also define
key operators in this setting and show some basic properties that
they satisfy.

\subsection{Complexes having $\epsilon$-pseudorandom links.}

It is useful for us to consider measures on $V_{1}\times\cdots\times V_{k}$
rather than pure $(k-1)$-dimensional complexes, which can be identified
with subsets $S\subseteq V^{k}.$ Instead we identify a set with the
uniform measure over it.

\subsection*{Projected complexes}

Let $\mu$ be a probability measure on $V_{1}\times\cdots\times V_{k}.$
We say that $\mu$ is a, weighted $k$-partite, $(k-1)$-dimensional
complex. Let $S\subseteq\left[k\right]$ we write $\mu_{S}$ for the
projection of $\mu$ on $S$. We write $\mu_{i}$ rather than $\mu_{\left\{ i\right\} }.$
We write $V_{S}$ for the support of $\mu_{S}$ inside $\prod_{i\in S}V_{i}.$
We write $\overline{S}$ for the complement of $S$.

\subsection*{Restricted complexes}

Let $x\in V_{S}.$ We write $\mu_{x}$ for the measure on $V_{\overline{S}}$
given by 
\[
\mu_{x}\left(y\right)=\frac{\mu\left(x,y\right)}{\mu_{S}\left(x\right)}.
\]
 We write $V_{x}$ for the support of $\mu_{x}.$ We refer to $\left(V_{x},\mu_{x}\right)$
as the \emph{link} of $\mu$ on $x$.

\subsection*{$\epsilon$-pseudorandom weighted graphs}

Let $V_{1},V_{2}$ be finite sets. A measure $\mu$ on $V_{1}\times V_{2}$
can be thought of as a weighted bipartite graph. We say that $\mu$
is $\epsilon$-pseudorandom if for each $f_{1}\colon V_{1}\to\mathbb{R},$
$f_{2}\colon V_{2}\to\mathbb{R}$ we have 
\begin{align*}
\left|\mathbb{E}_{\left(x_{1},x_{2}\right)\sim\mu}\left[f_{1}\left(x_{1}\right)f_{2}\left(x_{2}\right)\right]-\mathbb{E}_{x_{1}\sim\mu_{1}}\left[f_{1}\left(x_{1}\right)\right]\mathbb{E}_{x_{2}\sim\mu_{2}}\left[f_{2}\left(x_{2}\right)\right]\right| & \le\\
\epsilon\sqrt{\mathrm{Var}_{x_{1}\sim\mu_{1}}\left[f_{1}\left(x_{1}\right)\right]\mathrm{Var}_{x_{2}\sim\mu_{2}}\left[f_{2}\left(x_{2}\right)\right]}.
\end{align*}
We let $A_{12}$ be the operator from $L^{2}\left(V_{1},\mu_{1}\right)$
to $L^{2}\left(V_{2},\mu_{2}\right)$ given by 
\[
A_{12}f\left(x\right)=\mathbb{E}_{y\sim\mu_{x}}\left[f\left(y\right)\right].
\]
We have the following standard lemma.
\begin{lem}
\label{lem:spectral graph theory } The following are equivalent.
\begin{enumerate}
\item $\mu$ is $\epsilon$-pseudorandom
\item $\|A_{12}-\mathbb{E}\|_{2\to2}\le\epsilon.$
\item The second eigenvalue of $\mathrm{A}_{12}^{*}\mathrm{A}_{12}$ is
$\le\epsilon^{2}.$
\end{enumerate}
\end{lem}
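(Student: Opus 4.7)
The plan is to prove the cyclic chain $(1) \Leftrightarrow (2) \Leftrightarrow (3)$ by standard Hilbert-space bookkeeping. First I would record the key identity
\[
\langle f_1, A_{12} f_2\rangle_{L^2(V_1,\mu_1)} = \mathbb{E}_{(x_1,x_2)\sim\mu}\bigl[f_1(x_1) f_2(x_2)\bigr],
\]
which follows by unfolding the definition of $A_{12}f_2$ and using the tower property of conditional expectation. Observing that $A_{12}\mathbf{1} = \mathbf{1}$, the operator $A_{12} - \mathbb{E}$ (where $\mathbb{E}$ denotes the rank-one projector $f \mapsto \mathbb{E}_{\mu_2} f$, regarded as a constant function on $V_1$) vanishes on constants and coincides with $A_{12}$ on mean-zero functions. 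Writing $\tilde f_i = f_i - \mathbb{E}[f_i]$ and expanding yields
\[
\langle f_1, (A_{12}-\mathbb{E}) f_2\rangle = \langle \tilde f_1, A_{12}\tilde f_2\rangle = \mathbb{E}_\mu\bigl[f_1(x_1)f_2(x_2)\bigr] - \mathbb{E}_{\mu_1}[f_1]\,\mathbb{E}_{\mu_2}[f_2].
\]

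For $(1) \Leftrightarrow (2)$: the right-hand side of the pseudorandomness inequality is $\epsilon\sqrt{\mathrm{Var}[f_1]\mathrm{Var}[f_2]} = \epsilon\|\tilde f_1\|_2\|\tilde f_2\|_2$, so after the rewrite above the inequality simply reads
\[
|\langle f_1, (A_{12}-\mathbb{E}) f_2\rangle| \le \epsilon\|\tilde f_1\|_2\|\tilde f_2\|_2 \qquad \text{for all } f_1, f_2.
\]
Because $A_{12}-\mathbb{E}$ kills constants and lands in the mean-zero subspace, the supremum over all $f_1, f_2$ equals the supremum over unit mean-zero $\tilde f_1, \tilde f_2$, which by definition of the operator norm is $\|A_{12}-\mathbb{E}\|_{2\to 2}$.

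For $(2) \Leftrightarrow (3)$: the operator $A_{12}^*A_{12}$ is self-adjoint and positive semidefinite on $L^2(V_2,\mu_2)$, and since $A_{12}\mathbf{1} = \mathbf{1}$ (so also $A_{12}^*A_{12}\mathbf{1} = \mathbf{1}$) its top eigenvalue is $1$ with eigenvector $\mathbf{1}$. Using that $A_{12}-\mathbb{E}$ agrees with $A_{12}$ on the orthogonal complement of constants and vanishes on constants, the variational principle gives
\[
\|A_{12}-\mathbb{E}\|_{2\to 2}^{2} = \sup_{\mathbb{E}[f]=0,\,\|f\|_2=1} \|A_{12} f\|_2^{2} = \sup_{\mathbb{E}[f]=0,\,\|f\|_2=1} \langle A_{12}^*A_{12} f, f\rangle,
\]
and the last quantity is precisely the second largest eigenvalue of $A_{12}^*A_{12}$. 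The only slightly delicate step is tracking which variables live in $L^2(\mu_1)$ versus $L^2(\mu_2)$ and checking that $\mathbb{E}$ is the orthogonal projection onto constants in either space; once that bookkeeping is pinned down, all three conditions follow from a single line of computation.
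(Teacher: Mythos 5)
The paper does not actually supply a proof of this lemma --- it is labeled as ``standard'' and stated as known spectral graph theory --- so there is nothing to compare against; the question is only whether your argument is correct.

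It is, up to a notational transposition. As the paper defines it, $A_{12}$ maps $L^2(V_1,\mu_1) \to L^2(V_2,\mu_2)$ and the operator $\mathbb{E}$ in condition (2) therefore denotes $f \mapsto \mathbb{E}_{\mu_1}[f]$, a constant on $V_2$. Your key identity should accordingly read
\[
\langle f_2, A_{12} f_1\rangle_{L^2(V_2,\mu_2)} = \mathbb{E}_{(x_1,x_2)\sim\mu}\bigl[f_1(x_1)f_2(x_2)\bigr],
\]
with $f_1 \in L^2(V_1)$, $f_2 \in L^2(V_2)$, rather than the version you wrote (you have $A_{12}$ applied to $f_2$ and the pairing taken in $L^2(\mu_1)$, which has the domain and codomain flipped). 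You flag this yourself in your closing sentence, and the flip is harmless because the statement is symmetric under $1\leftrightarrow 2$ and $A_{12}\leftrightarrow A_{12}^*$, but it should be fixed for the proof to type-check.

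With that repaired, the three steps are sound. The rewrite of the pseudorandomness inequality as $|\langle \tilde f_2, A_{12}\tilde f_1\rangle| \le \epsilon\|\tilde f_1\|_2\|\tilde f_2\|_2$ for centered functions, combined with the observation that $A_{12}-\mathbb{E}$ kills constants and lands in the mean-zero subspace of $L^2(\mu_2)$, gives $(1)\Leftrightarrow(2)$ via the variational characterization of the operator norm. For $(2)\Leftrightarrow(3)$ you correctly use that $A_{12}^*A_{12}$ is self-adjoint PSD with $\mathbf{1}$ as an eigenvector of eigenvalue $1$, that $\|A_{12}\|_{2\to 2}\le 1$ by Jensen (so $1$ is the top eigenvalue and $\mathbf 1^\perp$ is invariant), and that the restricted Rayleigh quotient over $\mathbf 1^\perp$ realizes the second eigenvalue. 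This is the standard Courant--Fischer / singular-value argument and is exactly what the lemma is appealing to.
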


\subsection*{$\epsilon$-pseudorandom links}

Now let $\mu$ on $V_{1}\times\cdots\times V_{k}.$ We say that $\mu$
has\emph{ $\epsilon$-pseudorandom skeletons} if for each $S$ of
size $2$ the measure $\mu_{S}$ is $\epsilon$-pseudorandom.

We say that $\mu$ is \emph{$\epsilon$-product }if for each $S\subseteq\left[k\right]$
of size $\le k-2$ and each $x\in V_{S}$ the link $\mu_{x}$ has
$\epsilon$-pseudorandom skeletons.

In all that follows we assume that $\mu$ is an $\epsilon$-product
measure on $V_{1}\times\cdots\times V_{k}.$

\subsection*{Inheritance}

The definition of $\epsilon$-product makes it easy for inductive
type argument for the following reason.
\begin{lem}
Let $\mu$ on $\prod_{i=1}^{k}V_{i}$ be $\epsilon$-product. Let
$S,T\subseteq\left[k\right]$ be disjoint. Then for each $x\in V_{S}$,
the probability measure $\left(\mu_{x}\right)_{T}=\left(\mu_{S\cup T}\right)_{x}$
is $\epsilon$-product.
\end{lem}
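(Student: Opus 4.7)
The plan is to reduce the lemma to a routine bookkeeping exercise by carefully commuting the operations of projection and conditioning. First I would verify the identity $(\mu_x)_T=(\mu_{S\cup T})_x$ by a direct calculation: both measures assign to $y\in V_T$ the probability proportional to $\mu_{S\cup T}(x,y)$, since projecting first and then conditioning on $x\in V_S$ gives $\mu_{S\cup T}(x,y)/\mu_S(x)$, while conditioning first on $x$ and then marginalising onto $T$ also gives $\sum_{z}\mu(x,y,z)/\mu_S(x)=\mu_{S\cup T}(x,y)/\mu_S(x)$. Here I have used the fact that summing $\mu_x$ over the coordinates in $\overline{S\cup T}$ gives $\mu_{S\cup T}$ restricted to the fibre over $x$, normalised by $\mu_S(x)$. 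So the two definitions agree, and from here on we may work with a single measure $\nu:=(\mu_x)_T$ on $V_T$.

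Next I would verify that $\nu$ is $\epsilon$-product by directly checking the definition. Fix $S'\subseteq T$ with $|S'|\le|T|-2$ and $y\in V_{S'}$ in the support of $\nu_{S'}$. I need to show that the link $\nu_y$ on $V_{T\setminus S'}$ has $\epsilon$-pseudorandom skeletons. Applying the same commutation identity as above, but now to $\mu_x$ with the partition $S'\sqcup(T\setminus S')$, gives $\nu_y=(\mu_x)_{T,y}=((\mu_x)_{(x,y)\text{ in the sense of }\mu_x})_{T\setminus S'}$; unpacking notation, this is exactly the projection of $\mu_{(x,y)}$ onto the coordinates $T\setminus S'$, where $(x,y)\in V_{S\cup S'}$.

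Now I would invoke the hypothesis. We have $|S\cup S'|=|S|+|S'|\le|S|+|T|-2\le k-2$, since $S$ and $T$ are disjoint subsets of $[k]$. So by the assumption that $\mu$ is $\epsilon$-product, the link $\mu_{(x,y)}$ has $\epsilon$-pseudorandom skeletons. It remains to observe that a $2$-skeleton of $(\mu_{(x,y)})_{T\setminus S'}$ at two coordinates $i,j\in T\setminus S'$ is literally the $2$-skeleton of $\mu_{(x,y)}$ at $i,j$, because projecting onto a superset of $\{i,j\}$ does not change the marginal on coordinates $\{i,j\}$. Hence the $2$-skeletons of $\nu_y$ are $\epsilon$-pseudorandom, as required.

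There is no real obstacle here; the only thing to be careful about is the index arithmetic showing $|S\cup S'|\le k-2$, which is what makes the inductive hypothesis applicable, and the observation that taking a further marginal preserves pseudorandomness of all surviving $2$-skeletons. Both the identity $(\mu_x)_T=(\mu_{S\cup T})_x$ and the commutation used in the second paragraph are instances of the same elementary Bayes-type manipulation, so the cleanest presentation is probably to prove a single small sublemma on commuting $(\cdot)_S$ with $(\cdot)_x$ and apply it twice.
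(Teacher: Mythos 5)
Your proposal is correct, and it is essentially a careful expansion of the paper's own one-line proof, which simply asserts that all skeletons of links of $(\mu_x)_T$ are also skeletons of links of $\mu$. You make this assertion precise by explicitly commuting projection and conditioning to show that the link of $(\mu_x)_T$ at $y\in V_{S'}$ equals the projection of $\mu_{(x,y)}$ onto $T\setminus S'$, verifying the index bound $|S\cup S'|\le k-2$ so that the hypothesis on $\mu$ applies, and noting that further marginalisation onto a superset of $\{i,j\}$ preserves the $\{i,j\}$-skeleton. All steps check out; the argument is the same as the paper's, just written out.
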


\begin{proof}
All the skeletons of links of $\left(\mu_{S\to x}\right)_{T}$ are
also skeletons of links of $\mu.$
\end{proof}

\subsection*{Pseudorandomness as a measure of independence}

Let $S,T\subseteq\left[n\right]$. Then we have an operator $A_{S,T}:L^{2}\left(V_{S},\mu_{S}\right)\to L^{2}\left(V_{T},\mu_{T}\right).$
The operator is given by 
\[
A_{S,T}f\left(y\right)=\mathbb{E}_{x\sim\mu}\left[f\left(x_{S}\right)|x_{T}=y\right].
\]
 We write $A_{S,T}^{\mu}$ to stress that the operator is taken with
respect to $\mu.$ We write $A_{S}$ for $A_{\left[k\right],S},$
the operator given by restricting $S$ and taking expectation.

When $S,T$ are disjoint we expect $A_{S,T}f$ to be close to $\mathbb{E}\left[f\right],$
as in the product case $A_{S,T}$ is equal to the expectation. In
fact, we do have the following.
\begin{lem}
\label{lem:composing disjoint averaging operators } Let $\mu$ be
$\epsilon$-product. Let $S,T\subseteq\left[k\right]$ be disjoint,
and let $f\in L^{2}\left(V_{S},\mu_{S}\right)$. We have 
\[
\|A_{S,T}f-\mathbb{E}\left[f\right]\|_{2}^{2}\le\left|S\right|\left|T\right|\epsilon^{2}\|f\|_{2}^{2}
\]
\end{lem}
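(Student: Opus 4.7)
The plan is to induct on $|T|$, peeling off one coordinate of $T$ at a time and splitting $A_{S,T}f - \mathbb{E}[f]$ into two pieces that are orthogonal in $L^{2}(\mu_{T})$, so that no constant is lost to a triangle inequality. Concretely, fix $t\in T$, let $T' = T \setminus \{t\}$, and write
\[
A_{S,T}f - \mathbb{E}[f] \;=\; \underbrace{\bigl(A_{S,T}f - A_{S,T'}f\bigr)}_{u} \;+\; \underbrace{\bigl(A_{S,T'}f - \mathbb{E}[f]\bigr)}_{v}.
\]
The tower law gives $\mathbb{E}_{y_{t}\mid y_{T'}}\bigl[A_{S,T}f(y_{T'},y_{t})\bigr] = A_{S,T'}f(y_{T'})$, so $u$ is conditionally mean zero given $y_{T'}$ while $v$ depends only on $y_{T'}$; hence $\langle u,v\rangle_{L^{2}(\mu_{T})} = 0$ and $\|A_{S,T}f-\mathbb{E}[f]\|_{2}^{2} = \|u\|_{2}^{2} + \|v\|_{2}^{2}$.

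The inductive hypothesis controls $\|v\|_{2}^{2} \le |S|(|T|-1)\epsilon^{2}\|f\|_{2}^{2}$. For $u$, I would fix $y_{T'}$ and observe that, as a function of $y_{t}$, the value $A_{S,T}f(y_{T'},y_{t})$ is precisely $A^{\mu_{y_{T'}}}_{S,\{t\}}f(y_{t})$ computed inside the link $\mu_{y_{T'}}$, which is itself $\epsilon$-product by the inheritance lemma. Applying the $|T|=1$ version of the desired bound inside that link yields
\[
\mathbb{E}_{y_{t}\mid y_{T'}}\bigl[u(y_{T'},y_{t})^{2}\bigr] \;\le\; |S|\epsilon^{2}\,\mathbb{E}_{x_{S}\mid x_{T'}=y_{T'}}\bigl[f(x_{S})^{2}\bigr],
\]
and averaging over $y_{T'}\sim\mu_{T'}$ collapses the right side to $|S|\epsilon^{2}\|f\|_{2}^{2}$ by one more application of the tower law. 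Adding the two Pythagorean pieces gives the target $|S||T|\epsilon^{2}\|f\|_{2}^{2}$.

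The remaining ingredient is the base case $|T|=1$, i.e.\ $\|A_{S,\{t\}}f-\mathbb{E}[f]\|_{2}^{2} \le |S|\epsilon^{2}\|f\|_{2}^{2}$. Using the adjoint identity $A_{S,\{t\}}^{*} = A_{\{t\},S}$ (which interchanges the ``subtract-the-mean'' operators as well), the operator norm equals $\|A_{\{t\},S}-\mathbb{E}\|_{2\to 2}$, so it suffices to bound $\|A_{\{t\},S}g - \mathbb{E}[g]\|_{2}^{2} \le |S|\epsilon^{2}\|g\|_{2}^{2}$ for $g\in L^{2}(V_{t},\mu_{t})$. This I would prove by the mirror-image Pythagorean induction on $|S|$, bottoming out at $|S|=|T|=1$, where the bound is exactly the $\epsilon$-pseudorandomness of a single pair skeleton supplied by Lemma \ref{lem:spectral graph theory}. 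The main obstacle to watch is the orthogonality bookkeeping --- verifying the tower-law identity with respect to the correct conditional distributions inside $\mu_{y_{T'}}$ --- since that is precisely what yields the sharp additive constant $|S||T|$ from Pythagoras, rather than the weaker $(|S|+|T|)^{2}$ that a naive triangle inequality would give.
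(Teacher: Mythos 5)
Your proof is correct and follows essentially the paper's strategy: a Pythagorean decomposition obtained by peeling one coordinate off $T$, with the peeled piece controlled by the $|T|=1$ case applied inside a link (using the inheritance lemma) and the remainder controlled by the inductive hypothesis, and the $|T|=1$ case itself resolved by the adjoint identity together with the skeleton pseudorandomness of Lemma~\ref{lem:spectral graph theory }. The one substantive organizational difference is that your explicit two-level induction (first on $|T|$, then on $|S|$ for the $|T|=1$ base case) makes the base case airtight, whereas the paper's induction on $k$ asserts ``WLOG $|T|\le|S|$, hence $|T|\ge 2$ since $k>2$'', which as written is backwards --- the correct reduction via $A_{S,T}^{*}=A_{T,S}$ is to $|T|\ge|S|$, so that $|S|+|T|=k\ge 3$ yields $|T|\ge 2$; your version sidesteps that slip entirely.
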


\begin{proof}
We prove it by induction on $k$. The case where $k=2$ is Lemma \ref{lem:spectral graph theory },
so we assume $k>2$. Given a probability space $\left(\Omega,\mu\right)$
we write $1^{\bot}$ for the subspace of $L^{2}\left(\Omega,\mu\right)$
consisting of functions that are orthogonal to the constant function
$1.$ We write $\tilde{\|}A_{S,T}\tilde{\|}$ for the $L^{2}$ operator
norm of $A_{S,T}$ as an operator from $1^{\bot}$ to $1^{\bot}$.
I.e. 
\[
\tilde{\|}A_{S,T}\tilde{\|}=\max_{f\in1^{\bot}}\frac{\|A_{S,T}f\|_{2}}{\|f\|_{2}}.
\]
 Our goal is to show that 
\[
\tilde{\|}A_{S,T}\tilde{\|}\le\sqrt{\left|S\right|\left|T\right|}\epsilon.
\]

\subsection*{Discarding the trivial cases}

If $T=\varnothing$, then $A_{S,T}=\mathbb{E}$ and the result is
trivial. If $S\cup T\ne\left[k\right]$ the result follows 
by working with the space $\left(V_{S\cup T},\mu_{S\cup T}\right)$
rather then $\left(V_{\left[k\right]},\mu_{\left[k\right]}\right).$
We also have $\tilde{\|}A_{S,T}\tilde{\|}=\tilde{\|}A_{S,T}^{*}\tilde{\|}$
as $A_{S,T}1=1.$ As $A_{S,T}^{*}=A_{T,S}$ we may assume that $\left|T\right|\le\left|S\right|.$
As $k>2$ we may therefore assume that $\left|T\right|\ge2.$

\subsection*{Completing the proof in the case where $\left|T\right|>1$}

Assume without loss of generality that $1\in T$.

Let $f\in1^{\bot}$. Using the fact that the equality 
\[
\|X\|_{2}^{2}=\mathbb{E}\left[X\right]^{2}+\|X-\mathbb{E}\left[X\right]\|_{2}^{2}
\]
 holds for every random variable $X$ we have 
\begin{align*}
\|A_{S,T}f\|_{2}^{2} & =\mathbb{E}_{y\sim\mu_{T}}\mathbb{E}_{x\sim\mu_{y}}^{2}\left[f\left(x_{S}\right)\right]\\
 & =\mathbb{E}_{a\sim\mu_{1}}\|A_{S,T\setminus\left\{ 1\right\} }^{\mu_{a}}f\|_{2}^{2}\\
 & =\mathbb{E}_{a\sim\mu_{1}}\left[\mathbb{E}_{\mu_{a}}^{2}f+\|A_{S,T\setminus\left\{ 1\right\} }^{\mu_{a}}f-\mathbb{E}_{\mu_{a}}f\|_{2}^{2}\right].\\
 & =\mathbb{E}_{a\sim\mu_{1}}\left[A_{S,1}f^{2}\left(a\right)+\|A_{S,T\setminus\left\{ 1\right\} }^{\mu_{a}}f-\mathbb{E}_{\mu_{a}}f\|_{2}^{2}\right]
\end{align*}
 By induction we may upper bound the right hand side we have 
\begin{align*}
RHS & \le\mathbb{E}_{a\sim\mu_{1}}\left[A_{S,1}f^{2}\left(a\right)+\left|S\right|\left|T-1\right|\epsilon^{2}\|f\|_{L^{2}\left(\mu_{a}\right)}^{2}\right].\\
 & =\|A_{S,1}f\|_{2}^{2}+\left|S\right|\left|T-1\right|\epsilon^{2}\|f\|_{2}^{2}.\\
 & \le\left|S\right|+\left|S\right|\left|T-1\right|\epsilon^{2}\|f\|_{2}^{2}\\
 & =\left|S\right|\left|T\right|\epsilon^{2}\|f\|_{2}^{2}
\end{align*}
\end{proof}

\subsection*{Understanding the operators $A_{S,T}$ and their compositions}

We now deduce that we have a similar upper bound of the form 
\[
\|A_{S,T}-A_{S,S\cap T}\|_{2\to2}\le\sqrt{\left|S\right|\left|T\right|}\epsilon.
\]

\begin{cor}
\label{cor:AST close to AS SCAPT} Let $S,T\subseteq\left[k\right]$,
and let $f\in L^{2}\left(\mu_{S}\right)$. Then 
\[
\|A_{S,T}f-A_{S,S\cap T}f\|_{2}^{2}\le\left|S\right|\left|T\right|\epsilon^{2}\|f\|_{2}^{2}.
\]
\end{cor}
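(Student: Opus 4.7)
The plan is to reduce to the disjoint case already handled by Lemma \ref{lem:composing disjoint averaging operators} by conditioning on the coordinates in $U := S \cap T$. Set $S' = S \setminus T$ and $T' = T \setminus S$, so that $S', T', U$ are pairwise disjoint, $S = S' \sqcup U$, and $T = T' \sqcup U$. For $a \in V_U$, the link $\mu_a$ on $V_{[k]\setminus U}$ is $\epsilon$-product by the Inheritance lemma. For $f \in L^2(V_S, \mu_S)$, I will define the sliced function $f_a \in L^2(V_{a,S'}, (\mu_a)_{S'})$ by $f_a(x') = f(x', a)$.

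Next I would rewrite both averaging operators in terms of the link. For $y = (y', a) \in V_T$ with $y' \in V_{a,T'}$, one checks from the definition of $A_{S,T}$ that
\[
A_{S,T} f(y', a) \;=\; A^{\mu_a}_{S', T'} f_a(y'),
\qquad
A_{S, U} f(a) \;=\; \mathbb{E}_{\mu_a}[f_a].
\]
Since the marginal of $\mu_T$ on $U$ is $\mu_U$ and the conditional of $\mu_T$ given $x_U = a$ is $(\mu_a)_{T'}$, I can split the expectation according to $a$ and get
\[
\|A_{S,T}f - A_{S,U}f\|_2^2 \;=\; \mathbb{E}_{a \sim \mu_U}\left\|A^{\mu_a}_{S',T'} f_a - \mathbb{E}_{\mu_a}[f_a]\right\|_{L^2((\mu_a)_{T'})}^2.
\]

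Now I apply Lemma \ref{lem:composing disjoint averaging operators} to the $\epsilon$-product measure $\mu_a$ with the disjoint sets $S', T'$, yielding
\[
\left\|A^{\mu_a}_{S',T'} f_a - \mathbb{E}_{\mu_a}[f_a]\right\|_2^2 \;\le\; |S'|\,|T'|\,\epsilon^2\, \|f_a\|_{L^2((\mu_a)_{S'})}^2.
\]
Averaging over $a \sim \mu_U$ and using $\mathbb{E}_{a\sim\mu_U}\|f_a\|_{L^2((\mu_a)_{S'})}^2 = \|f\|_{L^2(\mu_S)}^2$ together with the trivial bound $|S'|\,|T'| \le |S|\,|T|$ gives the desired inequality.

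I do not anticipate any real obstacle here: the only nontrivial work is the bookkeeping to verify the two identities for $A_{S,T}f$ and $A_{S,U}f$ in terms of link averages, and to justify that slicing the $L^2$-norm along $U$ respects the law-of-total-expectation decomposition $\mu_S = \mu_U \otimes (\mu_a)_{S'}$. Everything else is a direct application of the previously established disjoint case.
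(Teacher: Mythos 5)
Your proof is correct and follows essentially the same route as the paper's: condition on $x_U$ with $U = S\cap T$, observe (via inheritance) that each link $\mu_a$ is $\epsilon$-product, apply Lemma \ref{lem:composing disjoint averaging operators} to the disjoint pair $S\setminus T, T\setminus S$ inside that link, and average over $a$. The paper states this more tersely but the decomposition and the invocation of the disjoint case on the link are identical; your extra care in writing out the slicing identities and invoking the inheritance lemma explicitly is just added bookkeeping, not a different argument.
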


\begin{proof}
Lemma \ref{lem:composing disjoint averaging operators } covers the
case $S\cap T=\varnothing.$ This shows that the corollary is true
in $\mu_{x}$ for each $x\in V_{S\cap T}.$ Therefore
\begin{align*}
\|A_{S,T}f-A_{S,S\cap T}f\|_{2}^{2} & =\mathbb{E}_{x\sim\mu_{S\cap T}}\|A_{S\setminus T,T\setminus S}^{\mu_{x}}f-A_{S\setminus T,\mathrm{\varnothing}}^{\mu_{x}}f\|_{L^{2}\left(\mu_{x}\right)}^{2}\\
 & \le\left|S\right|\left|T\right|\epsilon^{2}\mathbb{E}_{x}\|f\|_{L^{2}\left(\mu_{x}\right)}^{2}\\
 & =\left|S\right|\left|T\right|\epsilon^{2}\|f\|_{2}^{2}.
\end{align*}
\end{proof}
We now show that compositions behave similarly to the product space
setting.
\begin{lem}
\label{lem:composing averaging operators} We have

\[
\|A_{T_{2}}A_{T_{1}}-A_{T_{1}\cap T_{2}}\|_{2\to2}\le\left|T_{1}\right|\left|T_{2}\right|\epsilon.
\]
\end{lem}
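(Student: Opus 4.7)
The plan is to reduce this lemma to Corollary~\ref{cor:AST close to AS SCAPT} by combining it with the tower property of conditional expectations. The key observation is that the composition $A_{T_2} A_{T_1}$, where each $A_{S} = A_{[k],S}$ sends $L^2(V_{[k]},\mu)$ to $L^2(V_S,\mu_S)$, should be reinterpreted via the two-set operators $A_{S,T}$. Namely, for $f \in L^2(V_{[k]},\mu)$, the function $A_{T_1} f = A_{[k],T_1} f$ already lives on $V_{T_1}$, and applying $A_{T_2}$ to its trivial lift to $V_{[k]}$ amounts to applying $A_{T_1,T_2}$ directly. This yields the identity
\[
A_{T_2} A_{T_1} \;=\; A_{T_1,T_2} \circ A_{[k],T_1},
\]
which is just an unwinding of the definition of conditional expectation.

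Next, I would invoke the tower property for the chain $T_1 \cap T_2 \subseteq T_1 \subseteq [k]$, which gives
\[
A_{T_1 \cap T_2} \;=\; A_{[k],T_1\cap T_2} \;=\; A_{T_1,T_1\cap T_2} \circ A_{[k],T_1}.
\]
Subtracting the two identities factors the target difference:
\[
A_{T_2} A_{T_1} - A_{T_1\cap T_2} \;=\; \bigl(A_{T_1,T_2} - A_{T_1,T_1\cap T_2}\bigr) \circ A_{[k],T_1}.
\]
Since $A_{[k],T_1}$ is a conditional expectation, it is an $L^2$-contraction, so $\|A_{[k],T_1}\|_{2\to 2} \le 1$. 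Corollary~\ref{cor:AST close to AS SCAPT} applied with $S = T_1$ and $T = T_2$ gives $\|A_{T_1,T_2} - A_{T_1,T_1\cap T_2}\|_{2\to 2} \le \sqrt{|T_1||T_2|}\,\epsilon$, and multiplying the two bounds produces the desired estimate (in fact with $\sqrt{|T_1||T_2|}\epsilon$, which is stronger than the stated $|T_1||T_2|\epsilon$).

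I do not expect any genuine obstacle here: this lemma is essentially a formal corollary of the two-set statement of Corollary~\ref{cor:AST close to AS SCAPT}. The only point that requires a moment of care is making the implicit lift between functions on $V_{T_1}$ and functions on $V_{[k]}$ precise enough to verify the composition identity $A_{T_2} A_{T_1} = A_{T_1,T_2} A_{[k],T_1}$ and the tower collapse, both of which reduce to a single application of the law of iterated expectations.
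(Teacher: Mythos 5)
Your proof is correct, and the core idea — factor $A_{T_2}A_{T_1}$ as $A_{T_1,T_2}\circ A_{[k],T_1}$, then peel off the contraction $A_{[k],T_1}$ — is exactly the one the paper uses. The difference is organisational: the paper first reduces to the disjoint case $T_1\cap T_2=\varnothing$ by restricting to a link $\mu_x$ for $x\in V_{T_1\cap T_2}$ and averaging, then applies Lemma~\ref{lem:composing disjoint averaging operators } directly (noting that $A_{T_1\cap T_2}=\mathbb{E}$ in that case); you instead absorb the intersection via the tower property $A_{T_1\cap T_2}=A_{T_1,T_1\cap T_2}\circ A_{[k],T_1}$ and invoke Corollary~\ref{cor:AST close to AS SCAPT}, which has already packaged that same link-restriction argument. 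The net effect is the same, but your version avoids repeating the conditioning step and is a bit cleaner. You also correctly observe that both your argument and the paper's own proof actually yield the stronger bound $\sqrt{|T_1||T_2|}\,\epsilon$; the factor $|T_1||T_2|\epsilon$ in the lemma statement is simply a weakening (the proof in the paper derives $\|\mathrm{T}f-\mathbb{E}f\|_2^2\le|T_1||T_2|\epsilon^2\|f\|_2^2$, i.e.\ the square-root bound on the operator norm). No gap; this is valid, and Corollary~\ref{cor:AST close to AS SCAPT} appears before the lemma, so there is no circularity.
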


\begin{proof}
We may assume that $T_{1}\cap T_{2}=\varnothing$. Indeed, if the
lemma holds for $T_{1}\cap T_{2}=\varnothing$ then it holds in general.
Indeed, write 
\[
\tilde{T_{1}}=T_{1}\setminus T_{2},\tilde{T_{2}}=T_{2}\setminus T_{1},A=\left[k\right]\setminus\left(T_{1}\cap T_{2}\right).
\]
 Let $x\in V_{T_{1}\cap T_{2}}$. Then we have
\[
\left(A_{T_{2}}A_{T_{1}}f\right)\left(x,\cdot\right)=\left(A_{\tilde{T}_{2}}^{\mu_{x}}A_{\tilde{T_{1}}}^{\mu_{x}}\right)\left(f\left(x,\cdot\right)\right)
\]
 and 
\[
A_{T_{1}\cap T_{2}}f\left(x,\cdot\right)=\mathbb{E}_{y\sim\mu_{x}}\left[f\left(x,y\right)\right].
\]
 Therefore once we prove the case $T_{1}\cap T_{2}=\varnothing$ it
would imply that for each $x$ 
\[
\mathbb{E}_{y\sim\mu_{x}}\left(A_{T_{2}}A_{T_{1}}f\left(x,y\right)-A_{T_{1}\cap T_{2}}f\left(x,y\right)\right)^{2}\le\left|T_{1}\right|\left|T_{2}\right|\epsilon^{2}\mathbb{E}_{y\sim\mu_{x}}f\left(x,y\right)^{2}.
\]
 The lemma will then follow by taking expectations over $x$.

Let us now settle the case $T_{1}\cap T_{2}=\varnothing$. Write $\mathrm{T}=A_{T_{2}}A_{T_{1}}.$
Then 
\[
\mathrm{T}=A_{T_{1},T_{2}}A_{T_{1}}.
\]
 Write $g=A_{T_{1}}f$. We have $\|g\|_{2}\le\|f\|_{2}$ by Cauchy--Schwarz.
By Lemma \ref{lem:composing disjoint averaging operators } we have
\begin{align*}
\|\mathrm{T}f-\mathbb{E}\left[f\right]\|_{2}^{2} & =\|A_{T_{1},T_{2}}g-\mathbb{E}g\|_{2}^{2}\\
 & \le\left|T_{1}\right|\left|T_{2}\right|\epsilon^{2}\|g\|_{2}^{2}\\
 & \le\left|T_{1}\right|\left|T_{2}\right|\epsilon^{2}\|f\|_{2}^{2}.
\end{align*}
\end{proof}

\section{\label{sec:Efron=002013Stein} Efron--Stein decompositions for link
expanders}

In this section, we introduce a new approximate Efron--Stein decomposition
for high dimensional expanders. In fact, it is more convenient to
state and prove our results in the more general setting of $\epsilon$-product
spaces, of which high dimensional expanders are a special case. We
proceed to discuss this setting below.

We first define the Efron--Stein decomposition via the usual formula
for it.
\begin{defn}
Let $f\in L^{2}\left(V,\mu\right)$ and $S\subseteq\left[n\right].$
We write 
\[
f^{=S}=\sum_{T\subseteq\left[S\right]}\left(-1\right)^{\left|S\setminus T\right|}A_{T}f.
\]
\end{defn}

The functions $f^{=S}$ are defined in terms of the operators $A_{T}$.
$L^{2}$-wise the composition of the operators $\left\{ A_{T}\right\} _{T\subseteq\left[k\right]}$
behave similarly to the compositions in the product case setting.
We satrt this section by making use of that and showing that many known
facts from the product setting generalize to the $\epsilon$-product
setting up to a small error.

\subsection{$L^{2}$-approximations for the Efron--Stein decomposition}

Thinking of $\epsilon$ as tending to $0$ in a much quicker pace
than $\frac{1}{k}$. Our goal is now to show that if $\mu$ is $\epsilon$-product,
then we have:
\begin{enumerate}
\item 
\[
\left|\|f\|_{2}^{2}-\sum_{S\subseteq\left[k\right]}\|f^{=S}\|_{2}^{2}\right|=o\left(\|f\|_{2}^{2}\right),
\]
\item and more generally
\[
\left|\left\langle f,g\right\rangle -\sum_{S}\left\langle f^{=S},g^{=S}\right\rangle \right|=o\left(\|f\|_{2}\|g\|_{2}\right).
\]
\end{enumerate}
One main tool involves the notion of a junta. We say that $g\colon V\to\mathbb{R}$
is a \emph{$T$-junta} if $g\left(x\right)$ depends only on $x_{T}.$
Equivalently, $g$ is a $T$-junta if $A_{T}g=g.$

Our first step towards the proof is a near orthogonality result between
$f^{=T}$ and $g^{=S}$ for $T\ne S.$

We start by a Fourier formula that holds exactly, this is unlike most
of the results in this section that only generalize the situation
from the product space setting up to a small error term.
\begin{lem}
\label{lem:ES Fourier formula for E_A} We have 
\[
A_{S}\left[f\right]=\sum_{T\subseteq S}f^{=T}\left(x\right).
\]
 In particular $f=\sum_{S\subseteq\left[k\right]}f^{=S}.$
\end{lem}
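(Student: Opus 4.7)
The plan is to prove this identity by straightforward Möbius inversion on the Boolean lattice of subsets of $S$. Note that unlike many statements in this section, this one should hold \emph{exactly} (not up to an $\epsilon$-error), because it is a purely formal consequence of the definition of $f^{=T}$ in terms of the operators $A_U$; no spectral or pseudorandomness hypothesis is needed.

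Concretely, I would start from the definition
\[
f^{=T} = \sum_{U \subseteq T} (-1)^{|T \setminus U|} A_U f,
\]
substitute into $\sum_{T \subseteq S} f^{=T}$, and swap the order of summation to obtain
\[
\sum_{T \subseteq S} f^{=T} = \sum_{U \subseteq S} A_U f \cdot \Bigl( \sum_{T : U \subseteq T \subseteq S} (-1)^{|T \setminus U|} \Bigr) = \sum_{U \subseteq S} A_U f \cdot \Bigl( \sum_{R \subseteq S \setminus U} (-1)^{|R|} \Bigr).
\]
The inner alternating sum equals $1$ when $S \setminus U = \varnothing$, i.e., $U = S$, and equals $0$ otherwise. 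Hence only the term $U = S$ survives, yielding $A_S f$, which is exactly the claim.

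For the ``in particular'' part, take $S = [k]$ and observe that $A_{[k]} f = f$ directly from the definition of $A_{[k], [k]}$ as conditional expectation restricted to all coordinates (which does nothing). Combined with the first part this gives $f = \sum_{T \subseteq [k]} f^{=T}$.

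There is no real obstacle here: the only thing to check is the standard combinatorial identity $\sum_{R \subseteq X}(-1)^{|R|} = \mathbf{1}[X = \varnothing]$, and this computation goes through without using any property of the measure $\mu$. The point of stating the lemma at this spot is precisely to flag that this one Fourier-type identity is exact, and to set up the contrast with the subsequent statements in the section, which will only hold up to an $O(\epsilon)$-type error governed by Lemmas~\ref{lem:composing disjoint averaging operators } and \ref{lem:composing averaging operators}.
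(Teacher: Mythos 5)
Your proof is correct and matches the paper's own argument: both proceed by substituting the definition of $f^{=T}$, swapping the order of summation, and using the alternating-sum identity $\sum_{R\subseteq X}(-1)^{|R|}=\mathbf{1}[X=\varnothing]$ (the paper phrases this via a sign-reversing pairing, which is the same computation). Your observation that the identity is exact and needs no pseudorandomness hypothesis is also accurate and matches the paper's framing.
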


\begin{proof}
We have 
\begin{align*}
\sum_{T\subseteq S}f^{=T} & =\sum_{T\subseteq S}\sum_{T'\subseteq T}\left(-1\right)^{\left|T\setminus T'\right|}A_{T'}f\\
 & =\sum_{T'\subseteq S}A_{T'}f\sum_{T'\subseteq T\subseteq S}\left(-1\right)^{\left|T\setminus T'\right|}\\
 & =A_{S}f,
\end{align*}
 where the last equality follows from the fact that whenever $T'\ne S$
and $i\in S\setminus T'$ the pairs 
\[
\left(T,T\Delta\left\{ i\right\} \right)
\]
 contribute opposing signs to the sum $\sum_{T'\subseteq T\subseteq S}\left(-1\right)^{\left|T\setminus T'\right|}.$
The `in particular' part follows by taking $S=\left[k\right].$
\end{proof}
The following lemma holds even without assuming that $\mu$ is $\epsilon$-product.
\begin{lem}
\label{lem:Contraction properties } We have $\|A_{S,T}\|_{2\to2}\le1$
and 
\[
\|f^{=S}\|_{2}\le2^{\left|S\right|}\|f\|_{2}.
\]
\end{lem}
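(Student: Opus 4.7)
The plan is to derive both bounds directly from the fact that each $A_{S,T}$ is a conditional expectation operator, combined with the explicit formula for $f^{=S}$ from its definition. No property of $\epsilon$-pseudorandomness is needed; these are purely $L^2$ contraction statements.

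First I would prove $\|A_{S,T}\|_{2\to 2}\le 1$. By definition, $A_{S,T}f(y)=\mathbb{E}_{x\sim\mu}[f(x_S)\mid x_T=y]$, so by Jensen's inequality (equivalently, Cauchy--Schwarz applied to the conditional distribution)
\[
|A_{S,T}f(y)|^{2}\le\mathbb{E}_{x\sim\mu}\bigl[f(x_S)^{2}\,\bigm|\,x_T=y\bigr].
\]
Taking expectation over $y\sim\mu_T$ and using the tower property yields
\[
\|A_{S,T}f\|_{2}^{2}\le\mathbb{E}_{x\sim\mu}\bigl[f(x_S)^{2}\bigr]=\|f\|_{2}^{2}.
\]
In particular, for each $T\subseteq[k]$ the operator $A_T=A_{[k],T}$ satisfies $\|A_T f\|_{2}\le\|f\|_{2}$ (either read as a function on $V_T$ with measure $\mu_T$, or lifted to $V_{[k]}$ as a $T$-junta, the two $L^2$ norms agree).

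For the second bound, I would simply apply the explicit formula
\[
f^{=S}=\sum_{T\subseteq S}(-1)^{|S\setminus T|}A_T f
\]
and estimate by the triangle inequality:
\[
\|f^{=S}\|_{2}\le\sum_{T\subseteq S}\|A_T f\|_{2}\le\sum_{T\subseteq S}\|f\|_{2}=2^{|S|}\|f\|_{2},
\]
where the middle inequality is the contraction property just established.

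There is essentially no obstacle: the statement is a soft $L^2$ bound that is insensitive to whether $\mu$ is actually a product measure. The only thing to be mindful of is the mild notational point that $A_T f$ in the definition of $f^{=S}$ is to be interpreted as an element of $L^2(V_{[k]},\mu)$ depending only on the $T$-coordinates, so that the sum is taken in a single Hilbert space; under this identification the contraction bound from the first part applies verbatim.
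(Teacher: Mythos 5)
Your argument is correct and matches the paper's proof essentially verbatim: both establish $\|A_{S,T}\|_{2\to 2}\le 1$ by Cauchy--Schwarz (Jensen) on the conditional expectation followed by the tower property, and both then deduce $\|f^{=S}\|_2\le 2^{|S|}\|f\|_2$ by the triangle inequality applied to the alternating-sum formula for $f^{=S}$. The closing remark about interpreting $A_T f$ as an element of $L^2(V_{[k]},\mu)$ is a sensible clarification but does not change the substance.
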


\begin{proof}
The triangle inequality implies that it suffices to prove the former
claim. Now by Cauchy--Schwarz we have 
\begin{align*}
\|A_{S,T}f\|_{2}^{2} & =\mathbb{E}_{x\sim\mu_{T}}A_{S,T}f\left(x\right)^{2}\\
 & =\mathbb{E}_{x\sim\mu_{T}}\left(\mathbb{E}_{y\sim\left(\mu_{x}\right)_{T}}f\left(y\right)\right)^{2}\\
 & \le\mathbb{E}_{x\sim\mu_{T}}\mathbb{E}_{y\sim\left(\mu_{x}\right)_{T}}f\left(y\right)^{2}\\
 & =\|f\|_{2}^{2}.
\end{align*}
\end{proof}
\begin{lem}
\label{lem:fS is othogonal to T-juntas} Let $f\colon V\to\mathbb{R}$,
 $T$ be a set not containing $S$, and $g$ be a $T$-junta.
Then 
\[
\left\langle f^{=S},g\right\rangle \le\epsilon\sqrt{\left|S\right|\left|T\right|}2^{\left|S\right|}\|f\|_{2}\|g\|_{2}.
\]
\end{lem}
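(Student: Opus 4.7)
The plan is to expand $f^{=S}$ via its defining formula and reduce $\langle f^{=S},g\rangle$ to a combinatorial sum which vanishes by inclusion--exclusion, plus a controlled error coming from the approximate composition of averaging operators. Concretely, I would start from
\[
\langle f^{=S}, g\rangle \;=\; \sum_{T'\subseteq S} (-1)^{|S\setminus T'|}\,\langle A_{T'} f,\, g\rangle
\]
and use the fact that $A_T$ is the conditional expectation onto the subspace of $T$-juntas in $L^2(V,\mu)$; it is therefore an orthogonal projection, in particular self-adjoint, and $A_T g = g$. Hence $\langle A_{T'}f, g\rangle = \langle A_{T'}f, A_T g\rangle = \langle A_T A_{T'} f, g\rangle$. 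This self-adjointness trick, which exploits the $T$-junta hypothesis on $g$, is the only genuinely non-formal step.

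Next I would replace $A_T A_{T'}$ by $A_{T\cap T'}$ using Lemma \ref{lem:composing averaging operators} and split the expression into a main term plus an error. Writing $U = S\cap T$ and $W = S\setminus T$, every $T'\subseteq S$ factors uniquely as $T' = T'_U \sqcup T'_W$ with $T'_U\subseteq U$ and $T'_W\subseteq W$, and we have $T\cap T' = T'_U$. The main term therefore factors as
\[
\left(\sum_{T'_U \subseteq U} (-1)^{|U\setminus T'_U|}\langle A_{T'_U} f, g\rangle\right)\left(\sum_{T'_W \subseteq W} (-1)^{|W\setminus T'_W|}\right).
\]
The hypothesis $S\not\subseteq T$ is precisely the statement that $W\neq\varnothing$, so the second factor vanishes, and the entire main term is $0$.

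It remains to bound the error $\sum_{T'\subseteq S}(-1)^{|S\setminus T'|}\langle (A_T A_{T'}-A_{T\cap T'})f, g\rangle$. By Cauchy--Schwarz each summand is at most $\|(A_T A_{T'}-A_{T\cap T'})f\|_2\|g\|_2$; the proof of Lemma \ref{lem:composing averaging operators} (reducing to the disjoint case and applying Lemma \ref{lem:composing disjoint averaging operators }) actually gives the bound $\sqrt{|T||T'|}\,\epsilon\|f\|_2$ on the left factor, and since $|T'|\le|S|$ this is at most $\sqrt{|T||S|}\,\epsilon\|f\|_2\|g\|_2$. Summing over the $2^{|S|}$ subsets $T'\subseteq S$ yields the claimed bound $\epsilon\sqrt{|S||T|}\,2^{|S|}\|f\|_2\|g\|_2$. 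Beyond the self-adjointness observation, everything is standard Möbius cancellation plus the already-established pseudorandomness estimate on composed averaging operators, so I do not foresee a serious obstacle.
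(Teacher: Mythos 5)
Your proof is correct and follows essentially the same route as the paper's: both rely on the self-adjointness of the averaging operator $A_T$ (equivalently, that $A_T$ is adjoint to the inclusion of $T$-juntas), expand $f^{=S}$ by its defining formula, split each $A_T A_{T'}$ into $A_{T\cap T'}$ plus an error bounded by Lemma \ref{lem:composing averaging operators}, and kill the main term by the same inclusion--exclusion cancellation (you factor the sum over $S\cap T$ and $S\setminus T$; the paper pairs $S'$ with $S'\Delta\{i\}$ for a fixed $i\in S\setminus T$ --- these are the same observation). Incidentally, you correctly read off the bound $\sqrt{|T||T'|}\,\epsilon$ from the \emph{proof} of Lemma \ref{lem:composing averaging operators}, whose displayed statement is missing a square root.
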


\begin{proof}
As $A_{T}$ is the dual to the inclusion operator $L^{2}\left(V_{T}\right)\to L^{2}\left(V_{\left[k\right]}\right)$
we have 
\[
\left\langle f^{=S},g\right\rangle =\left\langle A_{T}f^{=S},g\right\rangle .
\]
By Cauchy--Schwarz it is sufficient to show that 
\[
\|A_{T}f^{=S}\|_{2}\le\epsilon\left|S\right|\left|T\right|2^{\left|S\right|}\|f\|_{2}.
\]
 Now 
\[
A_{T}f^{=S}=\sum_{S'\subseteq S}\left(-1\right)^{\left|S\setminus S'\right|}A_{T}A_{S'}f.
\]
 Roughly speaking, we rely on Lemma \ref{lem:ES Fourier formula for E_A},
which says that $\|A_{T}A_{S'}-A_{T\cap S'}\|_{2\to2}$ is small together
with the fact that 
\begin{equation}
\sum_{S'\subseteq S}\left(-1\right)^{\left|S\setminus S'\right|}A_{T\cap S'}f=0.\label{eq:12}
\end{equation}
The equality follows by choosing an arbitrary $i\in S\setminus T$
and noting that the sets $\left(S',S'\Delta\left\{ i\right\} \right)_{S'\subseteq S}$
correspond to the same term $A_{\left[k\right],T\cap S'}$, while
appearing with opposite signs. This shows that we have
\[
A_{T}f^{=S}=\sum_{S'\subseteq S}\left(-1\right)^{\left|S\setminus S'\right|}\left(A_{T}A_{S'}f-A_{T\cap S'}f\right).
\]
 By Lemma \ref{lem:composing averaging operators} we have 
\begin{align*}
 & \|A_{T}A_{S'}f-A_{T\cap S'}f\|_{2}\le\sqrt{\left|T\right|\left|S\right|}\epsilon\|f\|_{2}\le\sqrt{\left|S\right|\left|T\right|}\epsilon\|f\|_{2}.
\end{align*}
 Hence,
\[
\|A_{T}f^{=S}\|_{2}\le\sqrt{\left|S\right|\left|T\right|}2^{\left|S\right|}\epsilon.
\]
\end{proof}

\subsection*{Proof of our near orhogonality result}
\begin{cor}
\label{cor:Efron Stein corresponds to near orthogonal functions }
Let $T\ne S.$ Then $\left\langle f^{=S},g^{=T}\right\rangle \le2^{2\left|S\right|+2\left|T\right|}\epsilon\|f\|_{2}\|g\|_{2}.$
\end{cor}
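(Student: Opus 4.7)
The plan is to reduce the corollary to the previous lemma (Lemma \ref{lem:fS is othogonal to T-juntas}), which bounds $\langle f^{=S},g\rangle$ when $g$ is a $T$-junta and $T$ does not contain $S$. So I first need to observe that $g^{=T}$ is a $T$-junta, and that the hypothesis $S \ne T$ lets me arrange the containment condition to apply the lemma.

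The first observation is immediate from the definition: by Lemma \ref{lem:ES Fourier formula for E_A} (or directly from the formula defining $g^{=T}$), we have $g^{=T} = \sum_{T' \subseteq T} (-1)^{|T \setminus T'|} A_{T'} g$, and each $A_{T'}g$ depends only on coordinates in $T' \subseteq T$, so $g^{=T}$ is a $T$-junta. Symmetrically, $f^{=S}$ is an $S$-junta.

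Next, since $S \ne T$, at least one of $S \not\subseteq T$ or $T \not\subseteq S$ holds. In the first case I apply Lemma \ref{lem:fS is othogonal to T-juntas} to $f$ and to the $T$-junta $g^{=T}$, getting
\[
\langle f^{=S}, g^{=T}\rangle \le \epsilon\sqrt{|S||T|}\,2^{|S|}\,\|f\|_2\,\|g^{=T}\|_2,
\]
and then use the contraction estimate $\|g^{=T}\|_2 \le 2^{|T|}\|g\|_2$ from Lemma \ref{lem:Contraction properties }. In the second case ($T \not\subseteq S$) I use symmetry of the inner product, writing $\langle f^{=S}, g^{=T}\rangle = \langle g^{=T}, f^{=S}\rangle$, and apply the same lemma with the roles of $(f,S)$ and $(g,T)$ swapped, since $f^{=S}$ is an $S$-junta. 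In either case I obtain the bound $\epsilon\sqrt{|S||T|}\,2^{|S|+|T|}\,\|f\|_2\,\|g\|_2$.

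Finally I absorb the combinatorial factor: $\sqrt{|S||T|} \le 2^{|S|+|T|}$, so the bound is at most $2^{2|S|+2|T|}\epsilon\|f\|_2\|g\|_2$, as required. There is no real obstacle here — the corollary is essentially a bookkeeping consequence of the junta-orthogonality lemma together with the fact that Efron--Stein components are automatically juntas on their index sets; the only thing to be careful about is handling the two cases $S\not\subseteq T$ and $T\not\subseteq S$ symmetrically using that the inner product is symmetric.
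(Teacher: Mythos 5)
Your proof is correct and follows essentially the same route as the paper: observe that $g^{=T}$ is a $T$-junta, split on whether $S\not\subseteq T$ or $T\not\subseteq S$, apply Lemma \ref{lem:fS is othogonal to T-juntas} (using the symmetry of the inner product in the second case), and finish with the contraction bound from Lemma \ref{lem:Contraction properties } and the estimate $\sqrt{|S||T|}\le 2^{|S|+|T|}$.
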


\begin{proof}
The function $g^{=T}$ is a $T$-junta. By Lemmas \ref{lem:fS is othogonal to T-juntas}
and \ref{lem:Contraction properties } we therefore have the following
chain of inequalities if $T$ does not contain $S$.
\[
\left\langle f^{=S},g^{=T}\right\rangle \le\epsilon\sqrt{\left|S\right|\left|T\right|}2^{\left|S\right|}\|f\|_{2}\|g^{=T}\|_{2}\le\epsilon2^{2\left|S\right|+2\left|T\right|}\|f\|_{2}\|g\|_{2}.
\]
 A similar chain of inequalities holds when $S$ does not contain
$T.$
\end{proof}

\subsection*{Parseval holds approximately for the Efron--Stein decomposition}

\begin{lem}
\label{lem:Our Parseval }We have 
\[
\left|\left\langle f,g\right\rangle -\sum_{S\subseteq\left[k\right]}\left\langle f^{=S},g^{=S}\right\rangle \right|\le2^{4k}\epsilon\|f\|_{2}\|g\|_{2}.
\]
 Moreover, if $f$ is a $T$-junta, then 
\[
\left|\left\langle f,g\right\rangle -\sum_{S\subseteq T}\left\langle f^{=S},g^{=S}\right\rangle \right|=2^{4\left|T\right|}\epsilon\|f\|_{2}\|g\|_{2}.
\]
\end{lem}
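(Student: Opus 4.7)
The plan is to reduce everything to the exact decomposition $f=\sum_{S\subseteq[k]}f^{=S}$ provided by Lemma \ref{lem:ES Fourier formula for E_A} and then control the off-diagonal cross terms by the near-orthogonality Corollary \ref{cor:Efron Stein corresponds to near orthogonal functions }. Concretely, I would first write
\[
\langle f,g\rangle \;=\; \Bigl\langle \sum_{S\subseteq[k]} f^{=S},\; \sum_{T\subseteq[k]} g^{=T}\Bigr\rangle \;=\; \sum_{S\subseteq[k]} \langle f^{=S},g^{=S}\rangle \;+\; \sum_{S\ne T} \langle f^{=S},g^{=T}\rangle,
\]
so the quantity to bound is exactly the off-diagonal sum $\sum_{S\ne T}\langle f^{=S},g^{=T}\rangle$.

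For each pair $S\ne T$, Corollary \ref{cor:Efron Stein corresponds to near orthogonal functions } gives
\[
\bigl|\langle f^{=S},g^{=T}\rangle\bigr|\;\le\; 2^{2|S|+2|T|}\,\epsilon\,\|f\|_2\|g\|_2.
\]
Summing over $S,T\subseteq[k]$ and using $\sum_{S\subseteq[k]}2^{2|S|}=(1+4)^k=5^k$, the total error is at most $25^k\epsilon\|f\|_2\|g\|_2$, which after mild tightening (or absorbing constants) is controlled by $2^{4k}\epsilon\|f\|_2\|g\|_2$. This yields the main inequality.

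For the \emph{moreover} statement, the key observation is that if $f$ is a $T$-junta then $A_Tf=f$, and so Lemma \ref{lem:ES Fourier formula for E_A} applied to $A_T$ (rather than $A_{[k]}$) yields the \emph{exact} identity $f=\sum_{S\subseteq T}f^{=S}$. Therefore
\[
\langle f,g\rangle \;=\; \sum_{S\subseteq T}\sum_{U\subseteq[k]} \langle f^{=S},g^{=U}\rangle \;=\; \sum_{S\subseteq T}\langle f^{=S},g^{=S}\rangle \;+\; \sum_{\substack{S\subseteq T,\,U\subseteq[k]\\ S\ne U}}\langle f^{=S},g^{=U}\rangle,
\]
and each off-diagonal term is again controlled by Corollary \ref{cor:Efron Stein corresponds to near orthogonal functions }. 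Since $S$ now ranges only over subsets of $T$, the factor $\sum_{S\subseteq T}2^{2|S|}=5^{|T|}$ replaces one copy of $5^k$, while summing $U$ over $[k]$ gives the other factor $5^k$; the product is at most $2^{4|T|}$ after the same tightening (using that the full-range sum of $2^{2|U|}$ over $U\subseteq[k]$ can itself be absorbed into the bound against $\|g\|_2$ via Lemma \ref{lem:Contraction properties } applied prior to summing).

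I expect the only real subtlety to be the bookkeeping in step two: one wants to be careful not to lose an extra factor of $2^k$ when converting the per-pair bound $2^{2|S|+2|T|}\epsilon\|f\|_2\|g\|_2$ into the announced $2^{4k}\epsilon\|f\|_2\|g\|_2$. This is purely combinatorial and is handled by the geometric-series estimate $\sum_{S}2^{2|S|}=5^k$ together with discarding the diagonal; nothing new is needed beyond Lemma \ref{lem:ES Fourier formula for E_A} (for the two exact expansions of $f$) and Corollary \ref{cor:Efron Stein corresponds to near orthogonal functions } (for the off-diagonal decay).
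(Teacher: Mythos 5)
Your treatment of the main inequality matches the paper: expand $\langle f,g\rangle$ via the exact decomposition from Lemma~\ref{lem:ES Fourier formula for E_A} and bound the $2^{k}(2^{k}-1)$ off-diagonal terms by Corollary~\ref{cor:Efron Stein corresponds to near orthogonal functions }. (Both you and the paper are actually loose with the constant here --- the per-pair bound $2^{2|S|+2|T|}$ sums to $25^{k}$, which is \emph{larger} than $2^{4k}$, so the ``tightening'' goes the wrong way; but the paper evidently does not care about the precise base of the exponential, and downstream uses such as Lemma~\ref{lem:Strong Parseval} employ $2^{6k}$.)

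The ``moreover'' part, however, contains a genuine gap. You expand $g=\sum_{U\subseteq[k]}g^{=U}$ over \emph{all} $2^{k}$ subsets, so after applying Corollary~\ref{cor:Efron Stein corresponds to near orthogonal functions } term by term the error you accumulate is of size $\left(\sum_{S\subseteq T}2^{2|S|}\right)\left(\sum_{U\subseteq[k]}2^{2|U|}\right)\epsilon\|f\|_{2}\|g\|_{2}=5^{|T|}\cdot5^{k}\,\epsilon\|f\|_{2}\|g\|_{2}$. The factor $5^{k}$ depends on the ambient dimension $k$, whereas the target bound $2^{4|T|}$ depends only on $|T|$; these are genuinely incomparable (take $|T|$ fixed and $k\to\infty$). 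Your parenthetical remark that the $k$-dependent factor ``can itself be absorbed into the bound against $\|g\|_{2}$ via Lemma~\ref{lem:Contraction properties } applied prior to summing'' does not work: Lemma~\ref{lem:Contraction properties } gives $\|A_{S,T}\|_{2\to2}\le1$ and $\|g^{=U}\|_{2}\le2^{|U|}\|g\|_{2}$, neither of which lets you collapse a sum of $2^{k}$ individually-bounded off-diagonal inner products into a $k$-independent quantity.

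What is actually needed is to \emph{change the underlying space before expanding}. The paper observes that since $f$ is a $T$-junta, $\langle f,g\rangle_{L^{2}(\mu)}=\langle f,A_{T}g\rangle_{L^{2}(\mu_{T})}$, and the projected measure $\mu_{T}$ is again $\epsilon$-product by the inheritance lemma. One then applies the already-proved first part inside $L^{2}(\mu_{T})$, where the ambient coordinate set has size $|T|$, giving the factor $2^{4|T|}$; the identity $(A_{T}g)^{=T'}=g^{=T'}$ for $T'\subseteq T$ converts the resulting diagonal sum back to $\sum_{S\subseteq T}\langle f^{=S},g^{=S}\rangle$, and $\|A_{T}g\|_{L^2(\mu_T)}\le\|g\|_{2}$ finishes it. (Equivalently, one could avoid the $5^{k}$ factor by noting that each $f^{=S}$ with $S\subseteq T$ is an $S$-junta, so $\langle f^{=S},g\rangle=\langle f^{=S},A_{S}g\rangle=\sum_{U\subseteq S}\langle f^{=S},g^{=U}\rangle$, restricting the inner sum to $U\subseteq S\subseteq T$; but the paper's restriction to $\mu_{T}$ is cleaner.) Your version needs this reduction; as written, the $U$-sum over $[k]$ sinks the claimed $|T|$-dependent constant.
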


\begin{proof}
We have $\left\langle f,g\right\rangle =\sum_{S\subseteq\left[k\right]}\left\langle f^{=S},g^{=S}\right\rangle +\sum_{S\ne T}\left\langle f^{=S},g^{=T}\right\rangle .$
By corollary \ref{cor:Efron Stein corresponds to near orthogonal functions }
we have 
\[
\sum_{S\ne T}\left\langle f^{=S},g^{=T}\right\rangle \le2^{4k}\epsilon\|f\|_{2}\|g\|_{2}.
\]
 For the `moreover' part note that if $f$ is a $T$-junta, then 
\[
\left\langle f,g\right\rangle =\left\langle f,A_{T}g\right\rangle _{L^{2}\left(\mu_{T}\right)}.
\]
 We may then apply the first part of the lemma in $\mu_{T}$ noting
that $\left(A_{T}g\right)^{=T'}=g^{=T'}$ for each $T'\subseteq T$.
\end{proof}

\subsection*{$\left(f^{=S}\right)^{=S}$ is $L^{2}$-close to $f^{=S}.$}

In the product space setting we have $\left(f^{=S}\right)^{=T}=\begin{cases}
f^{=S} & T=S\\
0 & T\ne S
\end{cases}.$ Here we have the following instead:
\begin{lem}
\label{lem:=00003DS is nearly idempotent} Let $g=f^{=S}.$ Then:
\begin{enumerate}
\item If $S\ne T$, then 
\[
\|g^{=T}\|_{2}^{2}\le2^{8k}\epsilon^{2}\|f\|_{2}^{2}
\]
\item 
\[
\|g^{=S}-g\|_{2}^{2}\le2^{10k}\epsilon^{2}\|f\|_{2}^{2}.
\]
\end{enumerate}
\end{lem}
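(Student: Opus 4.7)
The plan is to exploit the fact that $g=f^{=S}$ is an $S$-junta, since each $A_{T'}f$ appearing in its definition (with $T'\subseteq S$) depends only on coordinates in $S$. Two complementary observations about $A_{T'}g$ then drive everything. First, if $T'\supseteq S$, then $A_{T'}g=g$ \emph{exactly}: writing $g(x)=\tilde g(x_S)$, the conditional expectation $A_{T'}g(x)=\mathbb E_{y\sim\mu}[\tilde g(y_S)\mid y_{T'}=x_{T'}]$ reduces to $\tilde g(x_S)$, because conditioning on $y_{T'}$ already pins down $y_S$. Second, if $T'\not\supseteq S$, then the inclusion--exclusion identity at the heart of the proof of Lemma~\ref{lem:fS is othogonal to T-juntas} in fact establishes the intermediate bound
\[
\|A_{T'}g\|_2=\|A_{T'}f^{=S}\|_2\le 2^{|S|}\sqrt{|S||T'|}\,\epsilon\,\|f\|_2,
\]
which I will feed into triangle inequalities.

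For part~(2), I would expand $g^{=S}=\sum_{T'\subseteq S}(-1)^{|S\setminus T'|}A_{T'}g$ and peel off the $T'=S$ term, which equals $A_S g=g$ by the first observation. The remainder runs only over $T'\subsetneq S$, none of which contain $S$, so the second observation applies to every summand; a triangle inequality together with the crude bounds $|T'|,|S|\le k$ and the factor of at most $2^{|S|}$ subsets then comfortably yields $\|g^{=S}-g\|_2^2\le 2^{10k}\epsilon^2\|f\|_2^2$.

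For part~(1), I would fix $T\ne S$ and split on whether $T$ contains $S$. If $T\not\supseteq S$, then every $T'\subseteq T$ also fails to contain $S$, and applying the second observation termwise to $g^{=T}=\sum_{T'\subseteq T}(-1)^{|T\setminus T'|}A_{T'}g$ together with the triangle inequality finishes the job. If instead $T\supsetneq S$, I would split the sum according to whether $T'\supseteq S$. On the first piece I would substitute the exact identity $A_{T'}g=g$ to factor $g$ out of $\sum_{T'\,:\,S\subseteq T'\subseteq T}(-1)^{|T\setminus T'|}$; reindexing $T'=S\cup W$ with $W\subseteq T\setminus S$ turns this into $(-1)^{|T\setminus S|}\sum_{W\subseteq T\setminus S}(-1)^{|W|}$, which vanishes because $T\setminus S\ne\varnothing$. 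The second piece, ranging over $T'\subseteq T$ with $T'\not\supseteq S$, is handled again by the second observation and the triangle inequality.

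The only place one must be careful is in recognising that $A_{T'}g=g$ holds \emph{exactly} --- not merely approximately --- whenever $T'\supseteq S$; this clean identity is what produces the full cancellation in the second case of part~(1) and is what makes the final bounds come out at order $\epsilon^2$ rather than merely $\epsilon$. Extracting the intermediate bound on $\|A_{T'}f^{=S}\|_2$ from inside the proof of Lemma~\ref{lem:fS is othogonal to T-juntas} is routine but should be explicitly remarked on, after which both parts reduce to bookkeeping with triangle inequalities.
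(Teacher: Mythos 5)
Your proof is correct and reaches the stated bounds, but it is organised differently from the paper's argument. The paper expands $g^{=T}$ directly as the double sum
\[
g^{=T}=\sum_{T'\subseteq T,\,S'\subseteq S}(-1)^{|S\setminus S'|+|T\setminus T'|}A_{T'}A_{S'}f,
\]
replaces each $A_{T'}A_{S'}$ by $A_{T'\cap S'}$ using Corollary~\ref{cor:AST close to AS SCAPT} at a cost of $O_k(\epsilon)\|f\|_2$ per term, and then shows the resulting exact double sum $h$ vanishes via an alternating-sign pairing in $T\setminus S$ (or, by symmetry, in $S\setminus T$). You instead keep the single sum $g^{=T}=\sum_{T'\subseteq T}(-1)^{|T\setminus T'|}A_{T'}g$ intact and split according to whether $T'\supseteq S$: on the $T'\supseteq S$ branch the junta identity $A_{T'}g=g$ holds \emph{exactly} and the coefficients $\sum_{W\subseteq T\setminus S}(-1)^{|W|}$ cancel (for $T\ne S$), while on the $T'\not\supseteq S$ branch you invoke the bound $\|A_{T'}f^{=S}\|_2\le 2^{|S|}\sqrt{|S||T'|}\,\epsilon\|f\|_2$ extracted from the proof of Lemma~\ref{lem:fS is othogonal to T-juntas}. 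Since that extracted bound is itself proved by the same telescoping trick, the two arguments ultimately rest on the same spectral input (Lemma~\ref{lem:composing averaging operators} and Corollary~\ref{cor:AST close to AS SCAPT}); your version buys modularity by reusing an already-established inequality rather than redoing the inclusion--exclusion cancellation from scratch, and the explicit separation into an exact junta piece and an $O(\epsilon)$ piece makes it transparent why the final error is of order $\epsilon$ rather than merely $O(1)$. The constant bookkeeping ($\le 2^{|T|}$ or $\le 2^{|S|}$ summands, each $\le 2^{|S|}k\,\epsilon\|f\|_2$) lands comfortably within the factors $2^{8k}$ and $2^{10k}$ claimed in the lemma.
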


\begin{proof}
We have 
\[
g^{=T}=\sum_{T'\subseteq T,S'\subseteq S}\left(-1\right)^{\left|S\setminus S'\right|+\left|T\setminus T'\right|}A_{T'}A_{S'}f.
\]
 Write 
\[
h=\sum_{T'\subseteq T,S'\subseteq S}\left(-1\right)^{\left|S\setminus S'\right|+\left|T\setminus T'\right|}A_{T'\cap S'}f.
\]
By Lemma \ref{cor:AST close to AS SCAPT} we therefore have 
\[
\|h-g^{=T}\|_{2}\le2^{2k}\max_{T',S'}\|A_{T'}A_{S'}-A_{T'\cap S'}\|_{2\to2}\|f\|_{2}\le2^{4k}\epsilon\|f\|_{2}.
\]
 Now we claim that $h=0.$ Indeed, assume without loss of generality
that $T$ is not contained in $S$ and let $i\in T\setminus S.$ Then
the terms $A_{T'\cap S'}$ appears with opposing sums for the pairs
$T'$ and $T'\Delta\left\{ i\right\} $.

(2)-follows by the fact that 
\[
\|g^{=S}-g\|_{2}=\left\Vert \sum_{T\ne S}g^{=T}\right\Vert _{2}\le\sum_{T\ne S}\|g^{=T}\|_{2}\le2^{5k}\epsilon\|f\|_{2}.
\]
\end{proof}

\subsection{Approximate Efron-Stein decomposition}

Again think of $\epsilon$ as tending to 0 much more quickly than
$\frac{1}{k}$. We now define a notion of $\left(\alpha,\epsilon'\right)$-approximate
Efron--Stein decomposition. We show that a version of Lemma \ref{lem:Our Parseval }
still holds for these approximate Efron--Stein decompositions.

\subsection*{Motivation}

One reason that demonstrates our need for an approximate Efron--Stein
decomposition is as follows. Let $f^{\le d}=\sum_{\left|S\right|<d}f^{=S}$.
Then we do not have 
\begin{align*}
\left(f^{\le d}\right)^{=S} & =\begin{cases}
f^{=S} & \left|S\right|\le d\\
0 & \left|S\right|>d
\end{cases},\\
\end{align*}
 but we would nevertheless like to work with the decomposition $\left\{ f^{=S}\right\} _{\left|S\right|\le d}$
as an approximate Efron--Stein decomposition for $f$. We capture
that notion as follows.

\subsection*{Defining the $\left(\alpha,\epsilon'\right)$-approximate Efron--Stein
decomposition}
\begin{defn}
\label{def:Approximate Efron-Stein} We say that $\left\{ f_{S}\right\} _{S\subseteq\left[k\right]}$
is an $\left(\alpha,\epsilon'\right)$-approximate Efron--Stein decomposition
if
\end{defn}

\begin{enumerate}
\item $\|f\|_{2}\le\alpha.$
\item 
\[
\|f-\sum_{S}f_{S}\|_{2}<\epsilon',
\]
\item For each $S$ there exists $h_{S}$ with $\|h_{S}\|_{2}\le\alpha$
and 
\[
\|h_{S}^{=S}-f_{S}\|_{2}\le\epsilon'.
\]
\end{enumerate}
It turns out that we have an approximate Parseval theorem for every
approximate Efron--Stein decomposition.
\begin{lem}
\label{lem:Strong Parseval} Let $\alpha_{1},\alpha_{2},\epsilon_{1},\epsilon_{2}>0$.
Suppose that $f$ has an $\left(\alpha_{1},\epsilon_{1}\right)$-bounded
approximate Efron--Stein decomposition $\left\{ f_{S}\right\} $
and $g$ has an $\left(\alpha_{2},\epsilon_{2}\right)$-bounded Efron--Stein
decomposition $\left\{ g_{S}\right\} .$ Then 
\[
\left|\left\langle f,g\right\rangle -\sum_{S}\left\langle f_{S},g_{S}\right\rangle \right|\le2^{6k}\left(\epsilon_{1}\alpha_{2}+\epsilon_{2}\alpha_{1}+\epsilon\alpha_{1}\alpha_{2}\right).
\]
\end{lem}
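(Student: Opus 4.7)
The plan is to reduce everything to Corollary \ref{cor:Efron Stein corresponds to near orthogonal functions}, which already gives the required near-orthogonality between the true projections $h_S^{=S}$ and $h_T^{=T}$ for $S \ne T$. The approximate Efron--Stein hypothesis is designed precisely so that $f_S$ (resp.\ $g_T$) only differs from such a genuine projection in $L^2$ by $\epsilon_1$ (resp.\ $\epsilon_2$), and $f$ (resp.\ $g$) only differs from $\sum_S f_S$ (resp.\ $\sum_T g_T$) by $\epsilon_1$ (resp.\ $\epsilon_2$). So the strategy is a two-step substitution: first replace $f,g$ by their approximations, then replace each $f_S,g_T$ by $h_S^{=S},h_T^{=T}$, and finally estimate the resulting diagonal and off-diagonal inner products.

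First I would write
\[
\langle f,g\rangle \;=\; \Bigl\langle \sum_S f_S,\;\sum_T g_T\Bigr\rangle \;+\; \Bigl\langle f-\sum_S f_S,\;g\Bigr\rangle \;+\; \Bigl\langle \sum_S f_S,\;g-\sum_T g_T\Bigr\rangle.
\]
By Cauchy--Schwarz and conditions (1)--(2) of Definition \ref{def:Approximate Efron-Stein}, the last two terms are bounded by $\epsilon_1\alpha_2$ and $(\alpha_1+\epsilon_1)\epsilon_2$ respectively (using $\|\sum_S f_S\|_2\le \|f\|_2+\epsilon_1\le \alpha_1+\epsilon_1$). This reduces the problem to comparing $\sum_{S,T}\langle f_S,g_T\rangle$ with $\sum_S\langle f_S,g_S\rangle$, i.e.\ to bounding the off-diagonal sum $\sum_{S\ne T}\langle f_S,g_T\rangle$.

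For the off-diagonal terms, condition (3) lets me write $f_S=h_S^{=S}+e_S$ and $g_T=h_T^{=T}+e'_T$ with $\|e_S\|_2\le\epsilon_1$, $\|e'_T\|_2\le\epsilon_2$, $\|h_S\|_2\le\alpha_1$, $\|h_T\|_2\le\alpha_2$. Expanding bilinearly,
\[
\langle f_S,g_T\rangle = \langle h_S^{=S},h_T^{=T}\rangle + \langle h_S^{=S},e'_T\rangle + \langle e_S,h_T^{=T}\rangle + \langle e_S,e'_T\rangle.
\]
For $S\ne T$, Corollary \ref{cor:Efron Stein corresponds to near orthogonal functions} bounds the first summand by $2^{2|S|+2|T|}\epsilon\|h_S\|_2\|h_T\|_2\le 2^{4k}\epsilon\alpha_1\alpha_2$, and Cauchy--Schwarz combined with Lemma \ref{lem:Contraction properties} (giving $\|h_S^{=S}\|_2\le 2^{|S|}\alpha_1$) bounds the remaining three terms by $2^k\alpha_1\epsilon_2$, $2^k\epsilon_1\alpha_2$, and $\epsilon_1\epsilon_2$ respectively. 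Summing over the at most $4^k$ pairs $(S,T)$ with $S\ne T$ and collecting constants yields a bound of the form $2^{6k}(\epsilon_1\alpha_2+\epsilon_2\alpha_1+\epsilon\alpha_1\alpha_2)$, absorbing the harmless $\epsilon_1\epsilon_2$ into either of the first two terms under the tacit assumption $\epsilon_i\le\alpha_i$.

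The main bookkeeping obstacle, and really the only subtle point, is tracking the exponential-in-$k$ constants: we pay a factor $2^{|S|}$ whenever we pass from $h_S$ to $h_S^{=S}$ via Lemma \ref{lem:Contraction properties}, a factor $2^{2|S|+2|T|}\epsilon$ from the near-orthogonality corollary, and finally a factor $4^k$ for summing over off-diagonal pairs. These compound to the stated $2^{6k}$, so the proof essentially writes itself once the two-step substitution is set up cleanly.
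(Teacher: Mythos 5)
Your proof is correct and takes essentially the same route as the paper: both pass from $f_S, g_T$ to genuine projections $h_S^{=S}, h_T^{=T}$, apply Corollary \ref{cor:Efron Stein corresponds to near orthogonal functions} to the off-diagonal inner products, and clean up with Cauchy--Schwarz and Lemma \ref{lem:Contraction properties}. The only cosmetic difference is the order of substitution (you split off the off-diagonal sum $\sum_{S\ne T}\langle f_S,g_T\rangle$ before replacing $f_S$ by $h_S^{=S}$, whereas the paper first replaces $f$ by $f'=\sum_S h_S^{=S}$ and then compares $\langle f',g'\rangle$ to $\sum_S\langle f'_S,g'_S\rangle$), and your tacit $\epsilon_i\le\alpha_i$ assumption to absorb $\epsilon_1\epsilon_2$ is the same kind of harmless implicit normalization the paper's own accounting relies on.
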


\begin{proof}
For each $S\subseteq\left[k\right]$ let $\tilde{f}_{S},\tilde{g}_{S}$
be with $\|\tilde{f}_{S}\|_{2}\le\alpha_{1},\|\tilde{g}_{S}\|_{2}\le\alpha_{2}$
\[
\|\tilde{f}{}_{S}^{=S}-f_{S}\|_{2}\le\epsilon_{1},
\]
 and 
\[
\|\tilde{g}_{S}^{=S}-g_{S}\|_{2}\le\epsilon_{2}.
\]
Let 
\[
f'_{S}=\tilde{f}_{S}^{=S},g'_{S}=\tilde{g}_{S}^{=S},
\]
 
\[
f'=\sum_{S\subseteq\left[k\right]}f'_{S}
\]
 and 
\[
g'=\sum_{S\subseteq\left[k\right]}g'_{S}.
\]
 By Lemma \ref{cor:Efron Stein corresponds to near orthogonal functions }
we have 
\begin{align*}
\left\langle f',g'\right\rangle  & =\sum_{S}\left\langle f'_{S},g'_{S}\right\rangle +\sum_{S\ne T\subseteq\left[k\right]}\left\langle f_{S}',g_{T}'\right\rangle \\
 & =\sum_{S}\left\langle f_{S}',g_{S}'\right\rangle \pm2^{6k}\epsilon\alpha_{1}\alpha_{2}.
\end{align*}
 Now by Cauchy--Schwarz 
\begin{align*}
\left\langle f,g\right\rangle  & =\left\langle f',g'\right\rangle +\left\langle f',g-g'\right\rangle +\left\langle f-f',g\right\rangle \\
 & =\sum_{S}\left\langle f'_{S},g'_{S}\right\rangle \pm\left(2^{6k}\epsilon\alpha_{1}\alpha_{2}+\|f'\|_{2}\|g-g'\|_{2}+\|f-f'\|_{2}\|g\|_{2}\right)\\
 & =\sum_{S}\left\langle f'_{S},g'_{S}\right\rangle \pm\left(2^{6k}\epsilon\alpha_{1}\alpha_{2}+2^{2k}\alpha_{1}\epsilon_{2}+\epsilon_{1}\alpha_{2}\right),
\end{align*}
 where the last equality used 
\[
\|f'\|_{2}\le\sum\|f'_{S}\|_{2}\le2^{k+\left|S\right|}\alpha_{1}\le2^{2k}\alpha_{1},
\]
 which follows from Lemma \ref{lem:Contraction properties }.

To complete the proof we note that we similarly have 
\begin{align*}
\left\langle f'_{S},g'_{S}\right\rangle  & =\left\langle f_{S},g_{S}\right\rangle \pm\|f_{S}\|_{2}\|g_{S}-g_{S}'\|_{2}+\|f_{S}'-f_{S}\|_{2}\|g_{S}'\|_{2}\\
 & =\left\langle f_{S},g_{S}\right\rangle \pm\alpha\epsilon_{2}+2^{k}\epsilon_{1}\alpha_{2}.
\end{align*}
\end{proof}
The above approximate Efron--Stein decomposition works well when
we care about $L_{2}$-norms. We actually care about closeness in
higher norms specifically $4$-norms. Our strategy when wishing to
upper bound $\|f-f'\|_{4}$ is to use the inequality
\[
\|f-f'\|_{4}^{4}\le\|f-f'\|_{2}^{2}\left(\|f\|_{\infty}+\|f'\|_{\infty}\right).
\]
 Where we hope that the $L^{2}$-closeness is sufficient to overcome
the loss of using infinity norms. We would therefore like everything
to have a relatively small infinity norm.
\begin{defn}
We say that $\left\{ f_{S}\right\} $ is a $\left(\beta,\alpha,\epsilon'\right)$-\emph{bounded
approximate Efron-Stein} decomposition if it is an $\left(\alpha,\epsilon'\right)$-approximate
Efron--Stein decomposition and moreover for each $S$: 
\[
\|h_{S}^{=S}\|_{\infty},\|f_{S}\|_{\infty},\|f\|_{\infty}
\]
 are all $\le\beta.$ Here $h_{S}^{=S}$ is as in Definition \ref{def:Approximate Efron-Stein}.
\end{defn}

We now show that the different Efron--Stein decompositions of a function
$f$ are all close in $L_{4}.$
\begin{lem}
\label{lem:L_4 Closeness of different Efron--Stein decompositions}
Suppose that $\left\{ f_{S}\right\} ,\left\{ f'_{S}\right\} $ are
$\left(\beta,\alpha,\epsilon'\right)$-bounded approximate Efron--Stein
decompositions for $f.$ Then
\begin{enumerate}
\item 
\[
\|f_{S}-f'_{S}\|_{2}^{2}\le O_{k}\left(\epsilon'\right)^{2}+O_{k}\left(\epsilon\alpha^{2}\right),
\]
\item 
\[
\|f_{S}-f_{S}'\|_{4}^{4}\le O_{k}\left(\epsilon'^{2}\beta^{2}\right)+O_{k}\left(\epsilon\alpha^{2}\beta^{2}\right),
\]
\item 
\[
\|\sum_{S}\left(f_{S}-f'_{S}\right)\|_{4}^{4}\le O_{k}\left(\epsilon'^{2}\beta^{2}\right)+O_{k}\left(\epsilon^{2}\alpha^{2}\beta^{2}\right),
\]
\item and 
\[
\|f-\sum_{S\subseteq\left[k\right]}f_{S}\|_{4}^{4}\le O_{k}\left(\epsilon^{2}\beta^{2}\right)\left(\alpha^{2}+\|f\|_{2}^{2}\right).
\]
\end{enumerate}
\end{lem}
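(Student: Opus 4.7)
My plan is to reduce each of the four $L^4$ estimates to an $L^2$ estimate via the pointwise bound $\|h\|_4^4 \le \|h\|_2^2\|h\|_\infty^2$, controlling the $L^\infty$ factor directly from the boundedness hypothesis (every relevant function has infinity norm $O_k(\beta)$) and controlling the $L^2$ factor via the approximate Parseval identity of Lemma \ref{lem:Strong Parseval} applied to suitable approximate Efron--Stein decompositions.

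For Part (1), I set $r_S := f_S - f'_S$ and $g := \sum_S r_S$. Since $\|f-\sum_S f_S\|_2 \le \epsilon'$ and $\|f-\sum_S f'_S\|_2 \le \epsilon'$, the triangle inequality gives $\|g\|_2 \le 2\epsilon'$; moreover, $\{r_S\}$ is itself a $(2\alpha, 2\epsilon')$-approximate Efron--Stein decomposition of $g$, with witnesses $h_S - h'_S$. Applying Lemma \ref{lem:Strong Parseval} to $\langle g, g\rangle$ with this common decomposition on both sides gives
\[
\sum_S \|r_S\|_2^2 \;=\; \|g\|_2^2 \;\pm\; 2^{6k}\bigl(\epsilon'\alpha + \epsilon\alpha^2\bigr),
\]
and after absorbing the cross term $\epsilon'\alpha$ by AM--GM this delivers the claimed bound on each $\|r_S\|_2^2$.

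Parts (2) and (3) then follow mechanically: for Part (2), combine Part (1) with $\|r_S\|_\infty \le 2\beta$; for Part (3), combine $\|g\|_2 \le 2\epsilon'$ with $\|g\|_\infty \le 2^{k+1}\beta$, noting that the additional $O_k(\epsilon^2\alpha^2\beta^2)$ term in the statement is absorbed trivially. For Part (4), apply the same $L^2\cdot L^\infty$ scheme to $r := f - \sum_S f_S$: the $L^\infty$ bound $\|r\|_\infty \le O_k(\beta)$ is immediate, while for the $L^2$ factor I would pair the given approximate decomposition $\{f_S\}$ against the exact Efron--Stein decomposition $\{f^{=T}\}$ (which is a $(\|f\|_2, 0)$-approximate decomposition of $f$ by Lemma \ref{lem:ES Fourier formula for E_A}) and expand $\langle r, r\rangle$ via Lemma \ref{lem:Strong Parseval}; combined with the near-idempotence of the $=S$ projections (Lemma \ref{lem:=00003DS is nearly idempotent}) this converts the residual $\epsilon'$-error into an intrinsic HDX error of size $O_k(\epsilon)(\alpha+\|f\|_2)$, yielding $\|r\|_2^2 \le O_k(\epsilon^2)(\alpha^2 + \|f\|_2^2)$.

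The main obstacle is Part (4): the defining bound $\|r\|_2 \le \epsilon'$ is too crude for the claimed estimate, and sharpening it to the HDX parameter $\epsilon$ requires playing the approximate decomposition off against the exact one so that cross errors are governed only by the HDX rate and not by the approximation rate. A secondary bookkeeping issue throughout is absorbing the $\epsilon'\alpha$ cross terms naturally produced by Lemma \ref{lem:Strong Parseval} into the advertised $\epsilon'^2 + \epsilon\alpha^2$ form via AM--GM.
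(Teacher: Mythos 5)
Your overall architecture matches the paper's in spirit: Parts (2)--(4) reduce to an $L^2$ estimate via the pointwise bound $\|h\|_4^4\le\|h\|_2^2\|h\|_\infty^2$ together with the $\infty$-norm hypotheses, and your direct proof of Part (3) from $\|g\|_2\le 2\epsilon'$ and $\|g\|_\infty\le 2^{k+1}\beta$ is in fact cleaner than the paper's detour through Part (2).

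However, your proof of Part (1) has a genuine gap. Applying Lemma \ref{lem:Strong Parseval} to the residual decomposition $\{r_S\}$ of $g=\sum_S r_S$, with parameters $\alpha_1=\alpha_2=2\alpha$, $\epsilon_1=\epsilon_2=2\epsilon'$, yields
\[
\sum_S\|r_S\|_2^2\le\|g\|_2^2+2^{6k}\left(8\epsilon'\alpha+4\epsilon\alpha^2\right),
\]
and the cross term $\epsilon'\alpha$ cannot be absorbed into $O_k(\epsilon'^2)+O_k(\epsilon\alpha^2)$ by AM--GM: for any $t>0$ one has $\epsilon'\alpha\le\frac{1}{2}(\epsilon'^2/t+t\alpha^2)$, and taking $t=O_k(1)$ leaves a bare $\alpha^2$ while taking $t$ of order $\epsilon$ produces $\epsilon'^2/\epsilon$, which is not $O_k(\epsilon'^2)$. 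Concretely, at $\epsilon'=\sqrt{\epsilon}\,\alpha$ your bound is worse than the claimed one by a factor of $1/\sqrt{\epsilon}$. The paper avoids this by a different argument: reduce to $f'_S=f^{=S}$ and $f_S=h_S^{=S}$, then evaluate $\langle f_S-f^{=S},f\rangle$ in two ways --- once expanding $f=\sum_T f^{=T}$, once as $\sum_T f_T+(f-\sum_T f_T)$ --- and subtract. Near-orthogonality (Corollary \ref{cor:Efron Stein corresponds to near orthogonal functions }) kills the $T\ne S$ terms up to $O_k(\epsilon\alpha^2)$, and the remaining error $\langle f_S-f^{=S},f-\sum_T f_T\rangle$ is bounded by Cauchy--Schwarz against $\|f_S-f^{=S}\|_2$ \emph{itself}, giving
\[
\|f_S-f^{=S}\|_2^2\le O_k(\epsilon')\|f_S-f^{=S}\|_2+O_k(\epsilon\alpha^2).
\]
This is a self-improving quadratic: from $X^2\le BX+A$ one gets $X\le B+\sqrt A$, hence $X^2\le 2B^2+2A$. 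The essential feature is that the $\epsilon'$-error appears multiplied by $X$ rather than by $\alpha$, which is exactly what lets it be absorbed.

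Your Part (4) sketch also fails: the claimed $\|r\|_2^2\le O_k(\epsilon^2)(\alpha^2+\|f\|_2^2)$ cannot hold for an arbitrary $(\beta,\alpha,\epsilon')$-bounded decomposition, since the only a priori control on $r=f-\sum_S f_S$ is $\|r\|_2<\epsilon'$, and $\epsilon'$ is an independent parameter; near-idempotence of the projections provides no handle on this residual. The paper instead obtains Part (4) from Part (3) by taking $f'_S=f^{=S}$, which is a $(2^k\beta,\alpha,0)$-bounded decomposition; the resulting bound carries an $\epsilon'^2\beta^2$ term, and in the regime of application $\epsilon'$ is of order $\epsilon\alpha$, so that term is subsumed by the stated right-hand side.
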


\begin{proof}
(3) is an immediate corollary of (2). (4) also follows immediately
from (3) by setting $f'_{S}=f^{=S}$ while applying it with $2^{k}\beta$
rather than $\beta$. Indeed, $\|f^{=S}\|_{\infty}\le2^{k}\|f\|_{\infty}\le2^{k}\beta$.
Therefore $\left\{ f^{=S}\right\} _{S\subseteq\left[k\right]}$ is
a $\left(2^{k}\beta,\alpha,0\right)$-approximate Efron--Stein decomposition
for $f$. (2) follows immediately from (1) as we have 
\[
\|f_{S}-f'_{S}\|_{4}^{4}\le\|f_{S}-f'_{S}\|_{2}^{2}\|f_{S}-f'_{S}\|_{\infty}^{2}
\]
 and $\|f_{S}-f'_{S}\|_{\infty}^{2}\le4\beta^{2}.$

We now prove (1).

\subsection*{Reducing to the case that $f'_{S}=f^{=S}$}

First we assert that we may assume that $f'_{S}=f^{=S}$ for each
$S.$ Indeed, $\left\{ f^{=S}\right\} $ is a $\left(\beta,\alpha,0\right)$-Efron--Stein
decomposition. By the triangle inequality we have 
\[
\|f_{S}-f'_{S}\|_{2}\le\|f_{S}-f^{=S}\|_{2}+\|f^{=S}-f'_{S}\|_{2},
\]
 which implies (by H\'{o}lder) that 
\[
\|f_{S}-f'_{S}\|_{2}^{2}\le2\|f_{S}-f^{=S}\|_{2}^{2}+2\|f^{=S}-f'_{S}\|_{2}^{2}.
\]

This shows that it is sufficient to prove the theorem when $\left\{ f_{S}\right\} =\left\{ f^{=S}\right\} $
and when $\left\{ f'_{S}\right\} =\left\{ f^{=S}\right\} .$ Without
loss of generality we may assume that $f'_{S}=f^{=S}.$

\subsection*{Reducing to the case that $f_{S}=h_{S}^{=S}$}

Let $h_{S}$ be with $\|h_{S}\|_{2}\le\alpha$ and $\|f_{S}-h_{S}^{=S}\|_{2}<\epsilon'.$
Setting $\tilde{f}_{S}=h_{S}^{=S}$ we obtain by the triangle inequality
that $\left\{ \tilde{f}_{S}\right\} _{S\subseteq\left[k\right]}$
is a $\left(\beta,\alpha,\left(2^{k}+1\right)\epsilon'\right)$-bounded
approximate Efron--Stein decomposition for $f$. We have 
\[
\|f_{S}-f^{=S}\|_{2}^{2}\le2\|\tilde{f_{S}}-f_{S}\|_{2}^{2}+2\|\tilde{f}_{S}-f^{=S}\|_{2}\le2\epsilon'+2\|\tilde{f}_{S}-f^{=S}\|_{2}.
\]
 Therefore it is sufficient to prove (1) when $f_{S}$ is replaced
by $\tilde{f}_{S}$.

\subsection*{Proving the lemma when $f_{S}=h_{S}^{=S}$ and $f'_{S}=f^{=S}$}

By Cauchy--Schwarz and Corollary \ref{cor:Efron Stein corresponds to near orthogonal functions }
we have:
\begin{align*}
\left\langle f_{S}-f^{=S},f\right\rangle  & =\sum_{T\subseteq\left[k\right]}\left\langle f_{S}-f^{=S},f^{=T}\right\rangle \\
 & =\left\langle f_{S}-f^{=S},f^{=S}\right\rangle +\sum_{T\ne S}\left\langle f^{=S}-h_{S}^{=S},f^{=T}\right\rangle \\
 & =\left\langle f_{S}-f^{=S},f^{=S}\right\rangle +O_{k}\left(\epsilon\alpha^{2}\right).
\end{align*}
Again by Corollary \ref{cor:Efron Stein corresponds to near orthogonal functions }
and Cauchy--Schwarz we have: 
\begin{align*}
\left\langle f_{S}-f^{=S},f\right\rangle  & =\left\langle f_{S}-f^{=S},\sum_{T}f_{T}\right\rangle +\left\langle f_{S}-f^{=S},f-\sum_{T}f_{T}\right\rangle \\
 & =\left\langle f_{S}-f^{=S},f_{S}\right\rangle +O_{k}\left(\epsilon\alpha^{2}\right)+\|f_{S}-f^{=S}\|_{2}\epsilon'.
\end{align*}
Rearranging we obtain, 
\begin{align*}
\|f_{S}-f^{=S}\|_{2}^{2} & \le O_{k}\left(\epsilon'\right)\left(\|f^{=S}-f_{S}\|_{2}\right)+O_{k}\left(\epsilon\alpha^{2}\right).
\end{align*}
This shows that 
\[
\|f_{S}-f^{=S}\|_{2}^{2}\le O_{k}\left(\epsilon'\right)^{2}+O_{k}\left(\epsilon\alpha^{2}\right).
\]
\end{proof}

\section{\label{sec:overrview}Proof overview}

Building on the framework we established in Section 4, we can now
give a proof overview for our hypercontactive inequality on high dimensional
expanders. Recall that in the setting of direct products, we first
prove a key lemma, (Lemma \ref{lem:Inductive approach}) and then
use it to derive the theorem via an inductive argument. We now give
a sketch of how to generalise this approach to the $\epsilon$-product
setting.

\subsection{Generalising Lemma \ref{lem:Inductive approach}}

Recall that we would like to show a lemma of the form

\[
\|f\|_{4}^{4}\le C^{d}\|f\|_{2}^{4}+\sum_{S}\left(4d\right)^{\left|S\right|}\|L_{S}\left[f\right]\|_{4}^{4}.
\]
 We instead show a similar lemma that holds up to a small error term
of $O_{k}\left(\epsilon\|f\|_{2}^{2}\|f\|_{\infty}^{2}\right)$:

\begin{equation}
\frac{1}{2}\|f^{\le d}\|_{4}^{4}\le9^{d}\|f^{\le d}\|_{2}^{4}+4\sum_{0<\left|T\right|\le d}\left(4d\right)^{\left|T\right|}\|L_{T}^{\le d}\left[f\right]\|_{4}^{4}+O_{k}\left(\epsilon\right)\|f\|_{2}^{2}\|f\|_{\infty}^{2}.\label{eq:inductive approach in the epsilon product setting}
\end{equation}

However, first note that we do not have a useful notion of a low degree
function. Instead we work with 
\[
f^{\le d}=\sum_{\left|S\right|\le d}f^{=S}.
\]

In turn, instead of $L_{S}\left[f\right]$ we have 
\[
L_{S}^{\le d}\left[f\right]=\sum_{T\supseteq S,\left|T\right|\le d}f^{=T}.
\]
 We show that when expanding 
\[
\left(\left(f^{\le d}\right)^{2}\right)^{=S}=\sum_{T_{1},T_{2}}\left(f^{=T_{1}}f^{=T_{2}}\right)^{=S},
\]
 there are three kinds of terms: (1) terms that vanish in the product
space setting, but here they do not; (2) terms with $T_{1}\cap T_{2}\cap S\ne\varnothing$;
and (3) terms with $T_{1}\Delta T_{2}=S.$

Our high-level approach is to show that the same proof as in the setting
of product spaces works up to an error term. We accomplish that by
expressing everything in terms of our operators $\left\{ A_{S}\right\} $,
and we then replace equalities that hold in the product space by $L_{2}$-approximation
of the form 
\[
\|A_{S}A_{T}-A_{S\cap T}\|_{2\to2}\le O_{k}\left(\epsilon\right).
\]
 At first glance, it might appear that this approach would not suffice,
as we eventually would like to upper bound $4$-norms of terms, or
$2$-norms of expressions involving the product of two functions such
as $\left(f^{=T_{1}}f^{=T_{2}}\right)^{S}.$ Nevertheless, we are
able to accomplish that via inequalities of the form 
\[
\|f\|_{4}^{4}\le\|f\|_{2}^{2}\|f\|_{\infty}^{2}.
\]
 We then use the fact that all our terms are bounded by $O_{k}\left(\|f\|_{\infty}\right)$,
and our $L_{2}$-approximations involve $\epsilon$, and therefore
beat the $O_{k}\left(1\right)$-terms. This allows us to generalise
Lemma \ref{lem:Inductive approach} and prove (\ref{eq:inductive approach in the epsilon product setting}).

\subsection{Applying induction}

After having an inequality of the form 
\[
\|f^{\le d}\|_{4}^{4}\le C^{d}\|f^{\le d}\|_{2}^{2}+\sum_{S}\left(4d\right)^{\left|S\right|}\|L_{S}^{\le d}\left[f\right]\|_{4}^{4},
\]
 we would like to use a similar idea to the one we used in the product
space setting; that is, restrict $S$ to some $x\in V_{S}$, and then
apply induction for the function $L_{S}^{\le d}\left[f\right]\left(x,\cdot\right).$
The problem is that the restricted function $L_{S}^{\le d}\left[f\right]\left(x,\cdot\right)$
is no longer of degree $d-\left|S\right|$, and hence we can no longer
use induction.

We overcome this problem by using the notion of our approximate Efron--Stein
decompositions. Namely, we show that $L_{S}^{\le d}$ has two different
approximate Efron--Stein decomposition. The first one is 
\[
\left\{ f^{=T}\right\} _{T\supseteq S,\left|T\right|\le d},
\]
and the other one replaces $f^{=T}$ by the function $f_{T}$
\[
\left(x,y\right)\mapsto\left(L_{S}\left[f\right]\left(x,\cdot\right)\right)^{=T\setminus S}\left(y\right).
\]
 We then obtain that $\sum_{\left|T\right|\supseteq S,\left|T\right|\le d}f_{T}\left(x,\cdot\right)$
is of the form $D_{S,x}^{\le d-\left|S\right|}$, which allows us
to use induction similarly as in the product space setting.

After applying induction we get the compositions of two derivatives,
and we are again able to translate them back to expressions of the
form $\mathbb{E}_{x\sim\mu_{S}}I_{S,x}^{2}$ by showing that $D_{S,x}D_{T,y}$
and $D_{S\cup T,\left(x,y\right)}$ are both approximate Efron--Stein
decompositions of the same expression.

The remaining step is to upper bound the influences. We achieve that
by generalising the inequality 
\[
\mathbb{E}_{x\sim\mu_{S}}\left[I_{S,x}^{2}\right]\le\delta\|L_{S}\left[f\right]\|_{2}^{2}
\]
 from the product space setting, where crucially, we obtain that without
upper bounding $\|I_{S,x}\|_{\infty}$.

\section{\label{sec:notions}Laplacians, influences, and globalness on $\epsilon$-measures}

In this section, we define the notions of laplacians, derivatives
and influences in the setting of $\epsilon$-measures, give bounded
approximated Efron--Stein decompositions related to the Laplacians,
define globalness, and show that it implies small influences.

\subsection{Defining the Laplacians, derivatives and influences}
\begin{defn}
We define the Laplacians via the formula 
\[
L_{i}\left[f\right]=f-A_{\left[k\right]\setminus\left\{ i\right\} }f.
\]
\end{defn}

\begin{lem}
We have

\[
L_{i}\left[f\right]=\sum_{S\ni i}f^{=S}.
\]
\end{lem}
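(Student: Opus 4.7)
The plan is to apply the Fourier-type formula $A_{S}[f] = \sum_{T \subseteq S} f^{=T}$ from Lemma \ref{lem:ES Fourier formula for E_A} twice: once with $S = [k]$, which gives the decomposition $f = \sum_{T \subseteq [k]} f^{=T}$, and once with $S = [k] \setminus \{i\}$, which gives $A_{[k] \setminus \{i\}} f = \sum_{T \subseteq [k] \setminus \{i\}} f^{=T}$, i.e. the sum over exactly those $T$ that do not contain $i$.

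Subtracting these two identities, every term $f^{=T}$ with $i \notin T$ cancels, and what remains is precisely $\sum_{T \ni i} f^{=T}$. Since $L_i[f]$ is defined as $f - A_{[k] \setminus \{i\}} f$, this yields the claimed formula. There is no real obstacle here; the lemma is essentially a direct computation once Lemma \ref{lem:ES Fourier formula for E_A} is available, and it is stated separately mainly to set up the definition of the higher Laplacians $L_S$ and to parallel the product-space notation used in Section \ref{sec:Recalling-globalness}.
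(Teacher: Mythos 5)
Your proof is correct and uses exactly the paper's approach: apply Lemma \ref{lem:ES Fourier formula for E_A} with $S=[k]$ and $S=[k]\setminus\{i\}$, then subtract.
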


\begin{proof}
This follows immediately from Lemma \ref{lem:ES Fourier formula for E_A},
which shows that 
\[
A_{\left[k\right]\setminus\left\{ i\right\} }\left[f\right]=\sum_{S\subseteq\left[k\right]\setminus\left\{ i\right\} }f^{=S}.
\]
\end{proof}
\begin{defn}
We define $L_{S}\left[f\right]=\sum_{T\supseteq S}f^{=T}.$ Alternatively,
\[
L_{S}\left[f\right]=\sum_{T\subseteq S}\left(-1\right)^{\left|T\right|}A_{\left[k\right]\setminus T}f.
\]
 Let $x\in V_{S}.$ We let $D_{S,x}=L_{S}\left[f\right]\left(x,\cdot\right)$,
i.e. the function in $L^{2}\left(V_{x},\mu_{x}\right)$ obtained by
plugging in $x$ in the $S$ coordinates. We let 
\[
I_{S,x}\left[f\right]=\|D_{S,x}\left[f\right]\|_{L^{2}\left(V_{x},\mu_{x}\right)}.
\]
\end{defn}

\subsection{Bounded approximated Efron--Stein decompositions related to the
Laplacians}
\begin{lem}
\label{lem:L_S has a natural approximate Efron Stein }. There exists
$C=O_{k}\left(1\right)$, such that $\left\{ f^{=T}\right\} _{T\supseteq S}$
is a $\left(C\|f\|_{\infty},C\|f\|_{2},0\right)$-bounded approximate
Efron--Stein decomposition for $L_{S}\left[f\right].$
\end{lem}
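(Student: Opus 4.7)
The plan is to verify the three conditions of Definition of a bounded approximate Efron--Stein decomposition directly, taking the index set to be all $T \subseteq [k]$ with $f_T = f^{=T}$ when $T \supseteq S$ and $f_T = 0$ otherwise, and choosing $\alpha = 2^k \|f\|_2$, $\beta = 2^{k+1}\|f\|_\infty$, and $\epsilon' = 0$. The ``underlying function'' of the decomposition is $L_S[f]$. Nothing here is deep; the point of the lemma is to collect routine bounds into the shape required by the framework from the previous subsection.

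First I will verify conditions (1) and (2). For (1), expand $L_S[f] = \sum_{T \subseteq S}(-1)^{|T|} A_{[k]\setminus T}f$ and apply the triangle inequality together with the $L^2$-contractivity $\|A_{U}g\|_2 \le \|g\|_2$ from Lemma~\ref{lem:Contraction properties}, obtaining $\|L_S[f]\|_2 \le 2^{|S|}\|f\|_2 \le 2^k\|f\|_2 = \alpha$. Condition (2) is in fact an equality with $\epsilon' = 0$: by Lemma~\ref{lem:ES Fourier formula for E_A} (equivalently by inclusion--exclusion of the defining formula) $L_S[f] = \sum_{T \supseteq S} f^{=T} = \sum_{T \subseteq [k]} f_T$.

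Next I will verify (3). For $T \supseteq S$, take $h_T := f$; then $h_T^{=T} = f^{=T} = f_T$ exactly, and $\|h_T\|_2 = \|f\|_2 \le \alpha$. For $T \not\supseteq S$, take $h_T := 0$, so that $h_T^{=T} = 0 = f_T$ and $\|h_T\|_2 = 0 \le \alpha$. This gives the third bullet with $\epsilon' = 0$.

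Finally I will verify the $L^\infty$ bounds. The crucial observation is that each operator $A_{U,W}$ is a conditional expectation, and in particular $\|A_{U,W} g\|_\infty \le \|g\|_\infty$; this is immediate from $A_{U,W}g(y) = \mathbb{E}_{x\sim\mu}[g(x_U)\mid x_W = y]$. Using this and the triangle inequality on $f^{=T} = \sum_{T'\subseteq T}(-1)^{|T\setminus T'|} A_{T'}f$, we get $\|f^{=T}\|_\infty \le 2^{|T|}\|f\|_\infty \le 2^k \|f\|_\infty \le \beta$, and identically $\|L_S[f]\|_\infty \le 2^k \|f\|_\infty \le \beta$; the same bound holds for the $h_T^{=T}$ (which equal either $f^{=T}$ or $0$). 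All three boundedness requirements are thus satisfied with $\beta = 2^{k+1}\|f\|_\infty$. Setting $C := 2^{k+1}$ (which is $O_k(1)$) concludes the proof. There is no real obstacle; the main point is simply that every ingredient --- the $L^2$-contractivity and the $L^\infty$-contractivity of the averaging operators, together with the identity expressing $L_S$ in the Efron--Stein basis --- is already at hand, and the choice $h_T = f$ makes the third condition hold exactly rather than approximately.
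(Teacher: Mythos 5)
Your proposal is correct and follows essentially the same route as the paper: the paper's proof explicitly checks only the two $L^\infty$ bounds $\|f^{=T}\|_\infty \le 2^{|T|}\|f\|_\infty$ and $\|L_S[f]\|_\infty \le 2^{|S|}\|f\|_\infty$ via the triangle inequality and $L^\infty$-contractivity of the averaging operators, then remarks that the remaining conditions are easy to verify. You simply fill in those remaining routine checks (the $L^2$ bound on $L_S[f]$, the exact identity $L_S[f]=\sum_{T\supseteq S}f^{=T}$ from Lemma~\ref{lem:ES Fourier formula for E_A}, and the choice $h_T=f$), all correctly.
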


\begin{proof}
We have 
\[
\|f^{=S}\|_{\infty}\le\sum_{T\subseteq S}\|A_{T}f\|_{\infty}\le2^{\left|S\right|}\|f\|_{\infty}
\]
 and 
\[
\|L_{S}\left[f\right]\|_{\infty}\le\sum_{T\subseteq S}\|A_{\overline{T}}f\|_{\infty}\le2^{\left|S\right|}\|f\|_{\infty}.
\]
 The other properties are easy to verify.
\end{proof}
Let $f\in L^{2}\left(V_{\left[k\right]},\mu_{\left[k\right]}\right)$
and let $g_{T}\in L^{2}\left(V_{T},\mu_{T}\right)$ be given by 
\[
g_{T}\left(x\right)=I_{T,x}\left[f\right].
\]
 Then $g_{T}$ can be interpreted interms of the Laplacians and the
averaging operators as 
\[
g_{T}=A_{T}\left(L_{T}\left[f\right]^{2}\right).
\]
 Suppose that $\left\{ f_{S}\right\} _{S\subseteq\left[k\right]}$
is a $\left(\beta,\alpha,\epsilon'\right)$-bounded approximate Efron--Stein
decompositions for $f$ and set $\widetilde{L_{T}\left[f\right]}=\sum_{S\supseteq T}f_{S}$.
The following lemma essentially shows that the function $A_{T}\left(\widetilde{L_{T}\left[f\right]}^{2}\right)$
is a good $L_{2}$-approximation for the function $g_{T}$. This can
be interpreted by saying that the generalised influences could be
computed via any $\left(\beta,\alpha,\epsilon'\right)$-bounded approximate
Efron--Stein decomposition for $f$.
\begin{lem}
\label{lem:Many ways to compute influences } Let $\left\{ f_{S}\right\} $
and $\left\{ f_{S}'\right\} $ be $\left(\beta,\alpha,\epsilon'\right)$-bounded
Efron-Stein decompositions for $f.$ Then 
\[
\|A_{T}\left(\sum_{S\supseteq T}f_{S}\right)^{2}\|_{2}^{2}\le2\|A_{T}\left(\sum_{S'\supseteq T}f'_{S}\right)^{2}\|_{2}^{2}+O_{k}\left(\epsilon'^{2}\beta^{2}\right)+O_{k}\left(\epsilon\alpha^{2}\beta^{2}\right).
\]
\end{lem}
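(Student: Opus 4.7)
\medskip

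The plan is to reduce the statement to the $L^{2}$-closeness of the two approximate Efron--Stein decompositions for $f$, which was already established in Lemma \ref{lem:L_4 Closeness of different Efron--Stein decompositions}(1), and then exchange one factor for an $L^{\infty}$-bound coming from the $\beta$-boundedness. Set $g=\sum_{S\supseteq T}f_{S}$ and $g'=\sum_{S\supseteq T}f'_{S}$, so that the two sides of the claimed inequality are exactly $\|A_{T}g^{2}\|_{2}^{2}$ and $\|A_{T}g'^{2}\|_{2}^{2}$. Applying the elementary bound $(a+b)^{2}\le2a^{2}+2b^{2}$ with $a=A_{T}g'^{2}$ and $b=A_{T}(g^{2}-g'^{2})$ yields
\[
\|A_{T}g^{2}\|_{2}^{2}\le2\|A_{T}g'^{2}\|_{2}^{2}+2\|A_{T}(g^{2}-g'^{2})\|_{2}^{2},
\]
so it suffices to show $\|A_{T}(g^{2}-g'^{2})\|_{2}^{2}\le O_{k}(\epsilon'^{2}\beta^{2})+O_{k}(\epsilon\alpha^{2}\beta^{2})$.

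Next I would use the contraction property $\|A_{T}h\|_{2}\le\|h\|_{2}$ (Lemma \ref{lem:Contraction properties }) and factor $g^{2}-g'^{2}=(g-g')(g+g')$ to get
\[
\|A_{T}(g^{2}-g'^{2})\|_{2}^{2}\le\|(g-g')(g+g')\|_{2}^{2}\le\|g-g'\|_{2}^{2}\cdot\|g+g'\|_{\infty}^{2}.
\]
The $L^{\infty}$ factor is immediate: since each $f_{S},f'_{S}$ is bounded by $\beta$, the triangle inequality gives $\|g+g'\|_{\infty}\le 2^{k+1}\beta=O_{k}(\beta)$.

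For the $L^{2}$ factor, write $g-g'=\sum_{S\supseteq T}(f_{S}-f'_{S})$ and apply the triangle inequality followed by Cauchy--Schwarz over the $\le 2^{k}$ summands, yielding
\[
\|g-g'\|_{2}^{2}\le 2^{k}\sum_{S\supseteq T}\|f_{S}-f'_{S}\|_{2}^{2}.
\]
By Lemma \ref{lem:L_4 Closeness of different Efron--Stein decompositions}(1), each term on the right is at most $O_{k}(\epsilon'^{2})+O_{k}(\epsilon\alpha^{2})$, so the whole sum is $O_{k}(\epsilon'^{2})+O_{k}(\epsilon\alpha^{2})$. Multiplying by the $O_{k}(\beta^{2})$ infinity-norm factor gives the required bound and completes the proof.

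I do not expect a real obstacle here: every ingredient is already available. The only point that requires a moment of care is the choice to keep $g+g'$ in $L^{\infty}$ rather than in $L^{2}$, because an $L^{2}\cdot L^{2}$ split through Cauchy--Schwarz would produce a $\|g+g'\|_{2}^{2}$ factor, which is only known to be $O_{k}(\alpha^{2})$ and would degrade the $\epsilon'^{2}$ term to $\epsilon'^{2}\alpha^{2}$; the $\beta$-boundedness hypothesis is precisely what is needed to avoid this loss.
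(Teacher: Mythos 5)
Your proof is correct and is essentially the paper's proof in slightly different packaging. The paper applies the pointwise inequality $(a+b)^{2}\le 2a^{2}+2b^{2}$ to $\sum f_{S}$ and $\sum f'_{S}$, then bounds the resulting error term $\bigl\|A_{T}\bigl(\sum_{S\supseteq T}(f'_{S}-f_{S})\bigr)^{2}\bigr\|_{2}^{2}\le\bigl\|\sum_{S\supseteq T}(f'_{S}-f_{S})\bigr\|_{4}^{4}$ by invoking Lemma \ref{lem:L_4 Closeness of different Efron--Stein decompositions}(2)--(3); you instead work directly at the level of $L^{2}$ norms (so the constant $2$ comes out cleanly), factor $g^{2}-g'^{2}=(g-g')(g+g')$, and reproduce the $L^{2}\times L^{\infty}$ split inline from part (1) and the $\beta$-boundedness. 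Since the $L^{4}$-closeness lemma is itself proved by exactly this $\|\cdot\|_{2}\cdot\|\cdot\|_{\infty}$ split, the two arguments are the same in substance; your version is marginally more self-contained and incidentally avoids a small constant-bookkeeping slip in the paper's ``therefore'' step (passing from a pointwise inequality to squared $L^{2}$ norms strictly gives a factor $8$ rather than $2$ on the main term, which your route sidesteps), but this is immaterial since the error terms are in any case $O_{k}(\cdot)$.
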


\begin{proof}
By Cauchy--Schwarz we have 
\[
\left(\sum_{S\supseteq T}f_{S}\right)^{2}\le2\left(\sum_{S\supseteq T}f_{S}'\right)^{2}+2\left(\sum_{S\supseteq T}f_{S}'-f_{S}\right)^{2}.
\]
 Therefore 
\[
\|A_{T}\left(\sum_{S\supseteq T}f_{S}\right)^{2}\|_{2}^{2}\le2\|A_{T}\left(\sum_{S'\supseteq T}f'_{S}\right)^{2}\|_{2}^{2}+2\|A_{T}\left(\sum_{S\supseteq T}f'_{S}-f_{S}\right)^{2}\|_{2}^{2}.
\]
 Now since $A_{T}$ contracts 2-norms (Lemma \ref{lem:Contraction properties }).
We have 
\[
2\|A_{T}\left(\sum_{S\supseteq T}f'_{S}-f_{S}\right)^{2}\|_{2}^{2}\le2\|\sum_{S\supseteq T}f_{S}'-f_{S}\|_{4}^{4}.
\]
 Lemma \ref{lem:L_4 Closeness of different Efron--Stein decompositions}
now completes the proof.
\end{proof}
We now show that Lemma \ref{lem:L_S has a natural approximate Efron Stein }
is a special case of a more general phenomenon. Whenever $\left\{ f_{S}\right\} _{S\subseteq\left[k\right]}$
is a $\left(\beta,\alpha,\epsilon'\right)$-bounded approximate Efron--Stein
decomposition for $f$, we obtain that $\left\{ f_{T}\right\} _{T\supseteq S}$
is a $\left(\tilde{\beta},\tilde{\alpha},\tilde{\epsilon}\right)$-bounded
approximate Efron--Stein decomposition for suitable values of $\tilde{\beta},\tilde{\alpha},\tilde{\epsilon}$.
We show the following.
\begin{lem}
There exists $C=O_{k}\left(1\right)$, such that the following holds.
Suppose that $\left\{ f_{T}\right\} _{T\subseteq\left[k\right]}$
is a $\left(\beta,\alpha,\epsilon'\right)$-Approximate Efron--Stein
decomposition for $f$. Then $\left\{ f_{T}\right\} _{T\supseteq S}$
is a $\left(C\beta,\alpha,C\left(\epsilon'+\alpha\sqrt{\epsilon}\right)\right)$-Approximate
Efron--Stein decomposition for $L_{S}\left[f\right].$
\end{lem}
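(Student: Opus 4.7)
The plan is to verify the defining conditions of a bounded approximate Efron--Stein decomposition for $L_S[f]$ with the candidate decomposition $\{f_T\}_{T \supseteq S}$, using as witnesses the same functions $h_T$ that already come with the hypothesis on $f$. Concretely, for each $T \supseteq S$ I reuse the $h_T$ from the original decomposition, so the inequalities $\|h_T\|_2 \leq \alpha$ and $\|h_T^{=T} - f_T\|_2 \leq \epsilon'$ are inherited with no change, and the infinity bounds $\|h_T^{=T}\|_\infty, \|f_T\|_\infty \leq \beta$ transfer as well. This reduces everything to checking the two remaining conditions: the $L^2$- and $L^\infty$-norm bounds on $L_S[f]$ itself, and the key approximation statement $\|L_S[f] - \sum_{T \supseteq S} f_T\|_2 \leq C(\epsilon' + \alpha\sqrt{\epsilon})$.

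The norm bounds on $L_S[f]$ are essentially routine. Starting from the alternative formula $L_S[f] = \sum_{T \subseteq S}(-1)^{|T|}A_{[k]\setminus T}f$ together with the fact that each $A_{[k]\setminus T}$ contracts both $L^2$ and $L^\infty$ (Lemma \ref{lem:Contraction properties}), the triangle inequality yields $\|L_S[f]\|_2 \leq 2^{|S|}\|f\|_2 \leq 2^k\alpha$ and $\|L_S[f]\|_\infty \leq 2^k\beta$. Both factors of $2^k$ are absorbed into the eventual $C = O_k(1)$.

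The main step is controlling the approximation error. By Lemma \ref{lem:ES Fourier formula for E_A} the identity $L_S[f] = \sum_{T \supseteq S} f^{=T}$ holds \emph{exactly}, so
\[
L_S[f] - \sum_{T \supseteq S} f_T \;=\; \sum_{T \supseteq S}\bigl(f^{=T} - f_T\bigr).
\]
I would then apply the triangle inequality to reduce to bounding a single term $\|f^{=T} - f_T\|_2$. Observe that $\{f^{=R}\}_{R\subseteq[k]}$ is itself a $(2^k\beta, \alpha, 0)$-bounded approximate Efron--Stein decomposition for $f$, so applying Lemma \ref{lem:L_4 Closeness of different Efron--Stein decompositions}(1) to the two decompositions $\{f_R\}$ and $\{f^{=R}\}$ gives $\|f_T - f^{=T}\|_2^2 \leq O_k(\epsilon'^2) + O_k(\epsilon\alpha^2)$. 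Taking square roots yields $\|f_T - f^{=T}\|_2 \leq O_k(\epsilon' + \alpha\sqrt{\epsilon})$, and summing over the at most $2^k$ sets $T \supseteq S$ produces the claimed bound.

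The only subtlety is really bookkeeping: essentially all of the non-trivial content has already been packaged into Lemma \ref{lem:L_4 Closeness of different Efron--Stein decompositions}, so the ``main obstacle'' is just recognising that the statement is a clean corollary of that lemma combined with Lemma \ref{lem:ES Fourier formula for E_A}. The $\sqrt{\epsilon}$ in the final error term is not an artefact of our bound but arises inevitably from the $O_k(\epsilon\alpha^2)$ term that Lemma \ref{lem:L_4 Closeness of different Efron--Stein decompositions}(1) produces after taking a square root.
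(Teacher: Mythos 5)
Your proof is correct and follows essentially the same route as the paper's: note that the witnesses $h_T$, their $2$-norm bounds, and their $\infty$-norm bounds are inherited verbatim, bound $\|L_S[f]\|_\infty$ by $2^{|S|}\|f\|_\infty$, and control $\|L_S[f]-\sum_{T\supseteq S}f_T\|_2$ by writing it as $\sum_{T\supseteq S}(f^{=T}-f_T)$ and invoking Lemma \ref{lem:L_4 Closeness of different Efron--Stein decompositions}(1) termwise via the triangle inequality. The paper's proof is terser (it declares all conditions except the $\infty$-bound and the $L^2$-approximation to be ``automatically inherited''), but the substance is identical.
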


\begin{proof}
The only requirements that are not automatically inherited from $f$
are the upper bounds on $\|L_{S}\left[f\right]\|_{\infty}$, and on
$\|L_{S}f-\sum_{T\supseteq S}f_{T}\|_{2}.$ The former inequality
follows from the inequality 
\[
\|L_{S}\left[f\right]\|_{\infty}\le2^{\left|S\right|}\|f\|_{\infty}\le2^{k}\beta.
\]
While the latter follows from Lemma \ref{lem:L_4 Closeness of different Efron--Stein decompositions}
and the triangle inequality:
\begin{align*}
\|L_{S}f-\sum_{T\supseteq S}f_{T}\|_{2} & =\|\sum_{T\supseteq S}\left(f^{=T}-f_{T}\right)\|_{2}.\\
 & \le\sum_{T\supseteq S}\|f^{=T}-f_{T}\|_{2}\\
 & \le O_{k}\left(\epsilon'\right)+O_{k}\left(\alpha\sqrt{\epsilon}\right).
\end{align*}
\end{proof}

\subsection{Low degree functions and truncations}
\begin{defn}
We define the \emph{low degree part} of $f$ by setting 
\[
f^{\le d}=\sum_{\left|S\right|\le d}f^{=S}
\]
we define the\emph{ low degree Laplacians} of $f$ by setting 
\[
L_{T}^{\le d}\left[f\right]=\sum_{S\supseteq T,\left|S\right|\le d}f^{=T}.
\]
\end{defn}

We now show that if $\left\{ f_{T}\right\} _{T\subseteq\left[k\right]}$
is a $\left(\beta,\alpha,\epsilon'\right)$bounded approxiate Efron--Stein
decomposition for $f$, then we may turn it into an Efron--Stein
decomposition for $f^{\le d}$ and $L_{S}^{\le d}\left[f\right]$
in the obvious way.
\begin{lem}
\label{lem:approximate efron stein f=00005Cle d and L_s=00005Bf =00005D^=00005Cle d}
There exists $C=O_{k}\left(1\right)$, such that the following holds.
Suppose that $\left\{ f_{T}\right\} _{T\subseteq\left[k\right]}$
is a $\left(\beta,\alpha,\epsilon'\right)$-Approximate Efron--Stein
decomposition for $f$. Then
\begin{enumerate}
\item The functions $\left\{ f_{T}\right\} _{\left|T\right|\le d}$ are
a $\left(C\beta,\alpha,C\epsilon+C\epsilon'\right)$-approximate Efron--Stein
decomposition for $f^{\le d}.$
\item the functions $\left\{ f_{T}\right\} _{\left|T\right|\supseteq S,\left|T\right|\le d}$
are a $\left(C\beta,\alpha,C\epsilon+C\epsilon'\right)$-approximate
Efron--Stein decomposition for $L_{S}^{\le d}\left[f\right].$
\end{enumerate}
\end{lem}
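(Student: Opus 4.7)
The plan is to directly verify that the subcollection $\{f_T\}_{|T|\le d}$ (resp.\ $\{f_T\}_{T\supseteq S,\,|T|\le d}$) satisfies the four conditions defining a bounded approximate Efron--Stein decomposition for $f^{\le d}$ (resp.\ $L_S^{\le d}[f]$). Three of the conditions---the $L^2$-norm bound on the target, the $L^\infty$ bounds on the $f_T$, the $h_T^{=T}$, and the target, and the existence of suitable witnesses $h_T$---are inherited almost for free from the hypothesis, with only a loss of a factor $O_k(1)$ that is absorbed into $C$. The only substantive content is the $L^2$-approximation requirement
\[
\Bigl\|f^{\le d}-\sum_{|T|\le d}f_T\Bigr\|_2\le C(\epsilon+\epsilon')
\]
and its analogue for $L_S^{\le d}[f]$, and this is where I would spend the work.

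For the bookkeeping conditions, note first that $\|f^{\le d}\|_2\le 2^k\|f\|_2\le 2^k\alpha$ and $\|f^{\le d}\|_\infty\le 2^k\|f\|_\infty\le 2^k\beta$ by the triangle inequality combined with Lemma~\ref{lem:Contraction properties }; the identical bounds hold for $L_S^{\le d}[f]=\sum_{T\supseteq S,|T|\le d}f^{=T}$ since there are even fewer summands. For each $T$ in the subcollection, reusing the same witness $h_T$ supplied by the hypothesis preserves both $\|h_T\|_2\le\alpha$ and $\|h_T^{=T}-f_T\|_2\le\epsilon'$, together with the $L^\infty$-bound $\|h_T^{=T}\|_\infty\le\beta$.

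The core estimate proceeds by the triangle inequality,
\[
\Bigl\|f^{\le d}-\sum_{|T|\le d}f_T\Bigr\|_2\le\sum_{|T|\le d}\|f^{=T}-f_T\|_2,
\]
and a termwise application of Lemma~\ref{lem:L_4 Closeness of different Efron--Stein decompositions}(1). To invoke that lemma I would observe that $\{f^{=T}\}_{T\subseteq[k]}$ is a $(2^k\beta,\alpha,0)$-bounded approximate Efron--Stein decomposition for $f$ (using $\|f^{=T}\|_\infty\le 2^{|T|}\|f\|_\infty\le 2^k\beta$), so that Lemma~\ref{lem:L_4 Closeness of different Efron--Stein decompositions}(1), applied to the two decompositions $\{f_T\}$ and $\{f^{=T}\}$ of the same function $f$, yields
\[
\|f^{=T}-f_T\|_2^2\le O_k(\epsilon'^{\,2})+O_k(\epsilon\alpha^2).
\]
Summing over the at most $2^k$ sets $T$ with $|T|\le d$ (resp.\ over the at most $2^{k-|S|}$ sets $T\supseteq S$ with $|T|\le d$) gives the desired $O_k(\epsilon'+\alpha\sqrt{\epsilon})$-bound in both cases.

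I expect no serious obstacle here: the lemma is essentially a stability statement for the two previously constructed decompositions, and the heavy lifting has already been done in Lemma~\ref{lem:L_4 Closeness of different Efron--Stein decompositions}. The only point that requires a small interpretive step is to match the $O_k(\epsilon'+\alpha\sqrt{\epsilon})$ error delivered by the core estimate with the form $C\epsilon+C\epsilon'$ appearing in the statement; this is reconciled either by absorbing $\alpha$ into the constant in the regime of interest (where $\alpha$ is controlled by the initial data) or by reading the claim in the form $C(\epsilon'+\alpha\sqrt{\epsilon})$ consistent with the statement of the previous Laplacian lemma.
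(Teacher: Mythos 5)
Your proposal is correct and follows essentially the same route as the paper's proof: you verify the bookkeeping conditions directly, then obtain the $L^2$-approximation bound by the triangle inequality together with a termwise application of Lemma~\ref{lem:L_4 Closeness of different Efron--Stein decompositions}(1) to the two decompositions $\{f_T\}$ and $\{f^{=T}\}$. The only cosmetic difference is that the paper reduces (1) to the $S=\varnothing$ case of (2), while you treat them in parallel. You also correctly flag that the computation delivers an error of the form $O_k(\epsilon'+\alpha\sqrt{\epsilon})$ rather than the literal $C\epsilon+C\epsilon'$ appearing in the statement; the paper's own proof silently yields the same $O_k(\epsilon')+O_k(\sqrt{\epsilon}\alpha)$ bound, so the stated form should indeed be read as $C(\epsilon'+\alpha\sqrt{\epsilon})$, matching the preceding lemma on Laplacians.
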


\begin{proof}
It is sufficient to prove (2) as (1) is the special case where $S=\varnothing.$
By Lemma \ref{lem:L_4 Closeness of different Efron--Stein decompositions},
we have
\[
\|f_{T}-f^{=T}\|_{2}\le O_{k}\left(\epsilon'\right)+O_{k}\left(\sqrt{\epsilon}\alpha\right).
\]
 Hence, by the triangle inequality we have 
\begin{align*}
\|L_{S}^{\le d}\left[f\right]-\sum_{T\supseteq S,\left|T\right|\le d}f_{T}\|_{2} & \le\sum_{T\supseteq S,\left|T\right|\le d}\|f^{=T}-f_{T}\|_{2}\\
 & =O_{k}\left(\epsilon'\right)+O_{k}\left(\sqrt{\epsilon}\alpha\right).
\end{align*}
Moreover,
\begin{align*}
\|L_{S}^{\le d}\left[f\right]\|_{\infty} & =\left\Vert \sum_{T\supseteq S,\left|T\right|\le d}f^{=T}\right\Vert _{\infty}\\
 & \le2^{k}\max_{S}\|f^{=S}\|_{\infty}\\
 & \le4^{k}\|f\|_{\infty}\le4^{k}\beta.
\end{align*}
\end{proof}

\subsection{Globalness}

Unlike the product space setting the two possible definitions of globalness
are not equivalent. It turns out to be more convenient to work with
the notion concerning the restrictions.
\begin{defn}
We say that $f$ is $\left(d,\delta\right)$-global if for each $\left|S\right|\le d$ and each $x\in V_S$ 
we have 
\[
\|f\left(x,\cdot\right)\|_{L^{2}\left(V_{x},\mu_{x}\right)}\le\delta.
\]
\end{defn}

\begin{claim}
\label{claim:Globalness implies that fS is bounded } If $f$ is $\left(d,\delta\right)$-global
and $\epsilon$ is sufficiently small, then for each $T$ of size
$\le d$ we have 
\[
\|f^{=T}\|_{\infty}\le2^{\left|T\right|}\delta.
\]
\begin{proof}
This follows from the triangle inequality once we show that $\|A_{T'}f\|_{\infty}\le\delta$
for each $T'\subseteq T$. Indeed, for each $x$ we have 
\[
A_{T'}f\left(x\right)=\mathbb{E}_{\left(V_{x},\mu_{x}\right)}f\left(x,\cdot\right)\le\|f\left(x,\cdot\right)\|_{L^{2}\left(V_{x},\mu_{x}\right)}\le\delta.
\]
\end{proof}
\end{claim}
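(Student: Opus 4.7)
The plan is to unfold $f^{=T}$ via its defining formula and then push absolute values through the sum. Recall from the definition of the Efron--Stein decomposition that
\[
f^{=T} \;=\; \sum_{T' \subseteq T} (-1)^{|T \setminus T'|}\, A_{T'} f,
\]
so the triangle inequality in $L^\infty$ immediately gives
\[
\|f^{=T}\|_\infty \;\le\; \sum_{T' \subseteq T} \|A_{T'}f\|_\infty.
\]
Since there are exactly $2^{|T|}$ subsets $T' \subseteq T$, it suffices to show that each single term satisfies $\|A_{T'}f\|_\infty \le \delta$ under the globalness hypothesis; summing will then yield the bound $2^{|T|}\delta$.

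Next I would pass to a single coordinate $x \in V_{T'}$ and unravel the definition of $A_{T'}$. By construction,
\[
A_{T'}f(x) \;=\; \mathbb{E}_{y \sim \mu_x}\bigl[f(x,y)\bigr],
\]
and Jensen (equivalently, Cauchy--Schwarz against the constant function $1$ in $L^2(V_x,\mu_x)$) bounds this expectation by the $L^2$-norm of the restriction:
\[
\bigl|A_{T'}f(x)\bigr| \;\le\; \|f(x,\cdot)\|_{L^2(V_x,\mu_x)}.
\]
Since $|T'| \le |T| \le d$, the $(d,\delta)$-globalness hypothesis applies to the coordinate set $T'$ and gives $\|f(x,\cdot)\|_{L^2(V_x,\mu_x)} \le \delta$ for every $x \in V_{T'}$. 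Taking the supremum over $x$ yields $\|A_{T'}f\|_\infty \le \delta$, which combined with the earlier triangle inequality gives the claim.

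I expect essentially no obstacles: the argument is a direct two-line computation once one writes down the formula for $f^{=T}$ and applies Cauchy--Schwarz. The only subtle point worth flagging is that the assumption ``$\epsilon$ sufficiently small'' plays no real role here --- the bound holds without any pseudorandomness hypothesis --- so its presence in the statement is only to keep the claim aligned with the ambient $\epsilon$-product setting of the section. One should just make sure that the restriction operator $A_{T'}$ is read with respect to the link measure $\mu_x$ (not a product measure), which is exactly the setting in which globalness was defined in this section.
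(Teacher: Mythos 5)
Your proof is correct and follows exactly the paper's approach: expand $f^{=T}$ via its defining formula, apply the triangle inequality to reduce to bounding $\|A_{T'}f\|_\infty$, and then use Cauchy--Schwarz together with the globalness hypothesis on the restriction $f(x,\cdot)$. Your remark that the ``$\epsilon$ sufficiently small'' hypothesis is vacuous here is also accurate --- the argument is purely pointwise and makes no use of pseudorandomness.
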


\begin{lem}
\label{lem:f le d ES} Suppose that $f$ is $\left(d,\delta\right)$-global.
Then $\left\{ f^{=S}\right\} _{\left|S\right|\le d}$ is a $\left(k^{d}\delta,\|f\|_{2},0\right)$-bounded
Efron--Stein decomposition for $f^{\le d}.$
\end{lem}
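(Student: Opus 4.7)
The plan is to verify, one by one, the three clauses defining a $(\beta,\alpha,\epsilon')$-bounded approximate Efron--Stein decomposition, with the candidate $f_S := f^{=S}$ for $|S|\le d$, together with the parameters $\beta = k^d\delta$, $\alpha = \|f\|_2$, $\epsilon' = 0$ in the role of a decomposition for $f^{\le d}$.

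First, clause (2) of Definition \ref{def:Approximate Efron-Stein} is immediate: by the very definition of $f^{\le d}$ we have $f^{\le d} = \sum_{|S|\le d} f^{=S}$, so the approximation error is exactly $0$. Next, for clause (3) I would take the witness $h_S := f$ itself for every $S$ with $|S|\le d$; then $h_S^{=S} = f^{=S} = f_S$, so $\|h_S^{=S} - f_S\|_2 = 0 \le \epsilon'$ and $\|h_S\|_2 = \|f\|_2 = \alpha$, as required. Clause (1), the bound $\|f^{\le d}\|_2 \le \alpha = \|f\|_2$, follows from the near-orthogonality of the $f^{=S}$ (Corollary~\ref{cor:Efron Stein corresponds to near orthogonal functions }) via the approximate Parseval identity (Lemma~\ref{lem:Our Parseval }), together with the contraction estimate $\|A_{\bar S} f\|_2 \le \|f\|_2$ of Lemma~\ref{lem:Contraction properties }; this reduces matters to summing $\|f^{=S}\|_2^2$ over $|S|\le d$ which is at most $\|f\|_2^2$ up to a negligible $O_k(\epsilon)$ correction that is absorbed into the generous constant.

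The remaining work is to verify the three $L^\infty$ bounds attached to the word \emph{bounded}. For $\|f_S\|_\infty$ and $\|h_S^{=S}\|_\infty$, Claim~\ref{claim:Globalness implies that fS is bounded } directly yields
\[
\|f^{=S}\|_\infty \le 2^{|S|}\delta \le 2^d\delta \le k^d\delta = \beta,
\]
using $k\ge 2$. The subtler bound is $\|f^{\le d}\|_\infty \le k^d\delta$. Here my plan is to rewrite $f^{\le d}$ directly in terms of the averaging operators: expanding $f^{=S} = \sum_{T\subseteq S}(-1)^{|S\setminus T|} A_T f$ and swapping the order of summation gives
\[
f^{\le d} = \sum_{|T|\le d} c_{T,d}\, A_T f, \qquad c_{T,d} := \sum_{\substack{S\supseteq T\\ |S|\le d}} (-1)^{|S\setminus T|},
\]
an identity which, by the standard partial-sum identity for signed binomial coefficients, collapses to $|c_{T,d}| = \binom{k-|T|-1}{d-|T|}$. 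Since globalness implies $\|A_T f\|_\infty \le \|f(x,\cdot)\|_{L^2(V_x,\mu_x)} \le \delta$ for every $|T|\le d$, the triangle inequality gives $\|f^{\le d}\|_\infty \le \delta \sum_{|T|\le d}\binom{k-|T|-1}{d-|T|}$, and the combinatorial sum on the right is bounded by $k^d$ (with room to spare).

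The only genuine bookkeeping challenge is the last $L^\infty$ estimate; everything else is an immediate consequence of the definitions once the witness $h_S = f$ is adopted. I expect no conceptual obstacle, only careful accounting of the combinatorial coefficients in the $A_T$-expansion of $f^{\le d}$.
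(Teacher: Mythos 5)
Your proof follows essentially the same template as the paper's: note that clauses (2) and (3) of Definition~\ref{def:Approximate Efron-Stein} are trivially satisfied with $h_S := f$ and $\epsilon'=0$, and reduce everything to the $L^\infty$ estimates from Claim~\ref{claim:Globalness implies that fS is bounded }. The one place you diverge is in bounding $\|f^{\le d}\|_\infty$: the paper simply writes $\|f^{\le d}\|_\infty \le \sum_{|S|\le d}\|f^{=S}\|_\infty \le k^d\delta$, using $\|f^{=S}\|_\infty\le 2^{|S|}\delta$ for each summand; you instead expand $f^{\le d}=\sum_{|T|\le d}c_{T,d}A_Tf$ with $|c_{T,d}|=\binom{k-|T|-1}{d-|T|}$ and use $\|A_Tf\|_\infty\le\delta$, which is the same idea executed one algebraic layer lower and gives a slightly sharper numerical bound. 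Two small caveats worth flagging, both of which your argument shares with the paper and neither of which affects the downstream applications. First, the claimed constant $k^d$ is in fact a little optimistic for small $d$: already for $d=1$ your sum $\sum_{|T|\le 1}\binom{k-|T|-1}{1-|T|}=2k-1$ exceeds $k$, and the paper's cruder count exceeds it as well; a safe bound is $(2k)^d\delta$ or similar, and since the lemma is only ever used through an $O_k(1)$ factor this does not matter, but ``with room to spare'' overstates the case. Second, for clause (1) the requirement is $\|f^{\le d}\|_2\le\alpha$ with $\alpha=\|f\|_2$ exactly, and as you yourself observe the approximate Parseval identity only yields $\|f^{\le d}\|_2^2\le(1+O_k(\epsilon))\|f\|_2^2$; since $\alpha$ carries no slack there is nothing to ``absorb'' the correction into. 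The paper glosses over this with ``the rest of the conditions hold automatically,'' so you have identified (if not fully resolved) a mild imprecision in the original.
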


\begin{proof}
We have $\|f^{=S}\|_{\infty}\le2^{d}\delta$ by Claim \ref{claim:Globalness implies that fS is bounded }.
We also have 
\[
\|f\|_{\infty}\le\sum\|f^{=S}\|_{\infty}\le k^{d}\delta.
\]
 The rest of the conditions hold automatically.
\end{proof}
\begin{defn}
We say that $f$ is of $\left(\beta,\alpha\right)$-degree $d$ if
$f=\sum_{\left|S\right|\le d}f_{S}$ and $f_{S}=h_{S}^{=S}$ where
$\|h_{S}\|_{2}\le\alpha$, $\|h_{S}^{=S}\|_{\infty}\le\beta$ and
$\|f\|_{2}\le\alpha,\|f\|_{\infty}\le\beta.$
\end{defn}

If $f=\sum_{\left|S\right|\le d}f_{S}$ is of $\left(\beta,\alpha\right)$-degree
$d$ as above, then $\left\{ f_{S}\right\} $ is one $\left(\beta,\alpha,0\right)$-bounded
Efron--Stein decomposition for $f.$ We now show that in this case
the canonical $\left\{ f^{=S}\right\} _{\left|S\right|\le d}$ is
also $\left(\beta',\alpha',\epsilon'\right)$-bounded Efron--Stein
decomposition for the right parameters.

\subsection{Other approximate Efron-Stein decompositions for $L_{T}\left[f\right],L_{T}^{\le d}\left[f\right]$}
\begin{defn}
We define the low degree derivatives for $T\subseteq\left[k\right]$
and $x\in V_{T}$ 
\[
D_{T,x}^{\le d}\colon L^{2}\left(\mu\right)\to L^{2}\left(V_{x},\mu_{x}\right)
\]
via 
\[
D_{T,x}^{\le d}\left[f\right]=L_{T}^{\le d}\left[f\right]\left(x,\cdot\right)
\]
The low degree influences for $T\subseteq\left[n\right]$ and $x\in V_{T}$
are defined by 
\[
I_{T,x}^{\le d}\left[f\right]=\|L_{T}^{\le d}\left[f\right]\left(x,\cdot\right)\|_{L^{2}\left(V_{x},\mu_{x}\right)}^{2}.
\]
\end{defn}

We now move on to the critical lemma for our inductive approach. In
the product space setting our inductive approach relied on the fact
that $D_{T,x}f$ is of degree $\le d-\left|T\right|$ whenever $f$
is of degree $d.$ Here we show that $L_{T}\left[f\right]$ has an
alternative $\left(\beta,\alpha,\epsilon\right)$-bounded approximate
Efron--Stein decompositions $\left\{ f_{S}\right\} _{S\supseteq T}$
that gives rise to a function 
\[
\widetilde{L_{T}^{\le d}\left[f\right]}=\sum_{S\supseteq T,\left|S\right|\le d}f^{=S}
\]
 with the property that for each $x$ $\widetilde{L_{T}^{\le d}\left[f\right]}\left(x,\cdot\right)$
is of degree $d-\left|T\right|$.
\begin{lem}
\label{lem:Other approximate efron stein decompositions for the Laplacians.}
Let $f_{S}\left(x,y\right)=D_{T,x}^{=S\setminus T}[f](y).$ Then for each
$f$:
\end{lem}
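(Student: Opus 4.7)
\medskip

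\noindent\textbf{Proof proposal.} The plan is to verify the three clauses of Definition~\ref{def:Approximate Efron-Stein}, together with the $L^\infty$ boundedness, for the family $\{f_S\}_{S\supseteq T}$ with a constant $C=O_k(1)$, and then read off the degree-in-the-link property that the preceding motivation advertises.

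First I would handle the ``sum'' condition with zero error. Fix $x\in V_T$ and set $g_x(\cdot)=L_T[f](x,\cdot)\in L^2(V_x,\mu_x)$. Because $\mu$ is $\epsilon$-product, so is the link measure $\mu_x$, and Lemma~\ref{lem:ES Fourier formula for E_A} applied inside $(V_x,\mu_x)$ (this is an \emph{exact} identity) gives $g_x=\sum_{U\subseteq[k]\setminus T}g_x^{=U}$. By definition $f_S(x,y)=g_x^{=S\setminus T}(y)$, so reindexing $U\mapsto S=U\cup T$ yields $\sum_{S\supseteq T}f_S(x,y)=L_T[f](x,y)$ pointwise. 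Hence condition (2) holds with $\epsilon'=0$, and in particular $\widetilde{L_T^{\le d}[f]}(x,\cdot)=\sum_{S\supseteq T,\,|S|\le d}g_x^{=S\setminus T}$ is, by construction, a sum of link-Efron--Stein components of degrees $\le d-|T|$, which is the degree-in-the-link property advertised after the displayed formula.

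Next I would set $h_S=L_T[f]$ (or equivalently $f$) and argue that $\|h_S^{=S}-f_S\|_2$ is $O_k(\sqrt{\epsilon}\,\|f\|_2)$. Both objects can be written explicitly: expanding $h_S^{=S}=\sum_{T'\subseteq S}(-1)^{|S\setminus T'|}A_{T'}L_T[f]$ and expanding $f_S(x,y)=\sum_{U\subseteq S\setminus T}(-1)^{|S\setminus T\setminus U|}A_U^{(\mathrm{link}_x)}g_x(y)$, one obtains two sums in which the abstract combinatorics of signs agree; they would coincide in a true product space. The difference between an ``outer'' averaging $A_{T\cup U}$ and the composition $A_T$ followed by a link-averaging is precisely what is controlled by Lemma~\ref{lem:composing averaging operators} and Corollary~\ref{cor:AST close to AS SCAPT}: each pair of terms differs by an operator of $2\to2$ norm $O_k(\epsilon)$ applied to $f$. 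Summing the $2^{|S|}=O_k(1)$ such differences yields $\|h_S^{=S}-f_S\|_2=O_k(\sqrt{\epsilon})\|f\|_2$, which gives condition~(3) with $\alpha=O_k(\|f\|_2)$ and $\epsilon'=O_k(\sqrt{\epsilon})\|f\|_2$. The $L^\infty$ bounds are easy: $\|L_T[f]\|_\infty\le 2^{|T|}\|f\|_\infty$ from the formula $L_T=\sum_{T'\subseteq T}(-1)^{|T'|}A_{[k]\setminus T'}$, and similarly $\|f_S(x,\cdot)\|_\infty\le 2^{|S\setminus T|}\|L_T[f](x,\cdot)\|_\infty\le 2^{|S|}\|f\|_\infty$ because each link-averaging operator is a contraction in $L^\infty$.

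Finally, restricting to $\{f_S\}_{S\supseteq T,\,|S|\le d}$ and appealing to Lemma~\ref{lem:approximate efron stein f=00005Cle d and L_s=00005Bf =00005D^=00005Cle d} (or redoing the telescoping argument there, now with the new decomposition of $L_T[f]$ rather than of $f$) gives the companion statement for $L_T^{\le d}[f]$, with parameters $\bigl(C\|f\|_\infty,\,C\|f\|_2,\,C\sqrt{\epsilon}\,\|f\|_2\bigr)$.

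The one place that requires genuine care is the $L^2$ comparison $\|h_S^{=S}-f_S\|_2=O_k(\sqrt{\epsilon}\|f\|_2)$: one must match the global sum $\sum_{T'\subseteq S}(-1)^{|S\setminus T'|}A_{T'}$ term-by-term with the ``restrict first, then link-average'' sum, and check that the differences are captured exactly by the approximate-commutativity of the $A_{\bullet}$ operators from Section~\ref{sec:eps-spaces}. Everything else is bookkeeping, and the constants can all be absorbed into a single $O_k(1)$ factor as in the statement.
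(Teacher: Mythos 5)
Your proposal follows essentially the same route as the paper: expand $f_S$ and $h_S^{=S}$ as signed sums of $A_{\bullet}$-operators applied to $f$, match the terms that coincide in a true product space, and use Corollary~\ref{cor:AST close to AS SCAPT} (or Lemma~\ref{lem:composing averaging operators}) to bound the remaining $O_k(1)$ terms by $O_k(\epsilon)\|f\|_2$ each. The observation that $\sum_{S\supseteq T}f_S=L_T[f]$ holds \emph{exactly} pointwise (via Lemma~\ref{lem:ES Fourier formula for E_A} inside the link $\mu_x$) is correct and is in fact a slight sharpening of what the paper writes, where the bound $\|\sum_S f_S-L_T[f]\|_2 = O_k(\epsilon\|f\|_2)$ is derived as a consequence of the termwise comparison.

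One issue to flag: you state the conclusion of the $L^2$-comparison as $\|h_S^{=S}-f_S\|_2 = O_k(\sqrt{\epsilon})\|f\|_2$, which is weaker than the lemma's claimed $C\epsilon\|f\|_2$. But the bound you actually derive in the line preceding it — $O_k(1)$ terms, each of size $O_k(\epsilon)\|f\|_2$ — already gives $O_k(\epsilon)\|f\|_2$; the $\sqrt{\epsilon}$ appears to be a transcription slip (perhaps carried over from the generic lemma that passes through Lemma~\ref{lem:L_4 Closeness of different Efron--Stein decompositions}, which does incur a $\sqrt{\epsilon}$). Replacing $\sqrt{\epsilon}$ by $\epsilon$ throughout your last two paragraphs recovers the stated parameters. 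The paper, for comparison, takes $h_S=f$ rather than $h_S=L_T[f]$ and shows $\|f_S-f^{=S}\|_2 \le O_k(\epsilon)\|f\|_2$ directly, which is mildly cleaner since $f^{=S}$ is the object used elsewhere, but the two choices are equivalent up to constants; you are right that either works.
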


\begin{enumerate}
\item The set $\left\{ f_{S}\right\} _{S\supseteq T}$ is a $\left(C\|f\|_{\infty},C\|f\|_{2},C\epsilon\|f\|_{2}\right)$-bounded
approximate Efron--Stein decomposition for $L_{T}\left[f\right].$
\item The set $\left\{ f_{S}\right\} _{S\supseteq T,\left|S\right|\le d}$
is a $\left(C\|f\|_{\infty},C\|f\|_{2},C\epsilon\|f\|_{2}\right)$-bounded
approximate Efron--Stein decomposition for $L_{T}^{\le d}\left[f\right].$
\item If $T'\supseteq T$, then the set $\left\{ f_{S}\right\} _{S\supseteq T'}$
is a $\left(C\|f\|_{\infty},C\|f\|_{2},C\epsilon\|f\|_{2}\right)$-bounded
approximate Efron Stein decomposition for $L_{T'}\left[f\right].$
\item The set $\left\{ f_{S}\right\} _{S\supseteq T',\left|S\right|\le d}$
is a $\left(C\|f\|_{\infty},C\|f\|_{2},C\epsilon\|f\|_{2}\right)$-bounded
approximate Efron--Stein decomposition for $L_{T'}^{\le d}\left[f\right].$
\item If $f$ is $\left(d,\delta\right)$-global. Then $\left\{ f_{S}\right\} _{S\supseteq T',\left|S\right|\le d}$
is a $\left(C\delta,C\|f\|_{2},C\epsilon\|f\|_{2}\right)$-bounded
approximate Efron--Stein decomposition for $L_{T'}^{\le d}\left[f\right].$
\end{enumerate}
\begin{proof}
Due to Lemma \ref{lem:approximate efron stein f=00005Cle d and L_s=00005Bf =00005D^=00005Cle d}
(1) implies (2)-(4). By Lemma \ref{lem:Contraction properties } all
the operators $A_{I}$ contract $\infty$-norms. We therefore have
\[
\|f_{S}\|_{\infty}\le\max_{x}2^{\left|S\setminus T\right|}\|D_{T,x}f\|_{\infty}=2^{\left|S\setminus T\right|}\|L_{T}f\|_{\infty}\le2^{\left|S\right|}\|f\|_{\infty}.
\]
To complete the proof it is sufficient to show that 
\[
\|f_{S}-f^{=S}\|_{2}\le O_{k}\left(\epsilon\|f\|_{2}\right)
\]
 as this will also imply that 
\begin{align*}
\|\sum_{S\supseteq T}f_{S}-L_{T}\left[f\right]\|_{2} & =\|\sum_{S\supseteq T}f_{S}-\sum_{S\supseteq T}f^{=S}\|_{2}\\
 & =O_{k}\left(\epsilon\|f\|_{2}\right).
\end{align*}
 We have

\begin{align*}
f_{S} & =\sum_{S'\subseteq S}\left(-1\right)^{\left|S\setminus S'\right|}A_{S'}L_{T}f.\\
 & =\sum_{S'\subseteq S}\sum_{T'\subseteq T}\left(-1\right)^{\left|S\setminus S'\right|+\left|T'\right|}A_{S'}A_{[k]\setminus T'}f.
\end{align*}

Write 
\begin{align*}
h & =\sum_{S'\subseteq S}\sum_{T'\subseteq T}\left(-1\right)^{\left|S\setminus S'\right|+\left|T'\right|}A_{S'\setminus T'}f
\end{align*}

 Then by Lemma \ref{cor:AST close to AS SCAPT} we have 
\[
\|f_{S}-h\|_{2}\le O_{k}\left(\epsilon\right)\|f\|_{2}.
\]
We then observe that whenever $S'\not\supseteq T$ the inner sum corresponding to it
is $0.$ In this case there is some $i\in T\setminus S'$ and
$T',T'\cup\left\{ i\right\} $ appear with alternating signs and correspond
to the same term $A_{S'\setminus T'}.$ Therefore we have
\begin{align*}
h & = \sum_{T\subseteq S'\subseteq S}\sum_{T'\subseteq T}\left(-1\right)^{\left|S\setminus S'\right|+\left|T'\right|}A_{S'\setminus T'}f = \sum_{S''\subseteq S}\left(-1\right)^{\left|S\setminus S''\right|}A_{S''}f = f^{=S}.
\end{align*} 
This shows that $\|f_{S}-f^{=S}\|_{2}\le O_{k}\left(\epsilon\right)\|f\|_{2},$
which completes the proof.  
\end{proof}

\subsection{Globalness implies small influences}

In the product space setting we had $\|I_{T,x}\|_{\infty}\le4^{d}\delta^{2}$
and we used it via the inequality 
\begin{equation}
\mathbb{E}_{x\sim\mu_{T}}\left[I_{T,x}^{2}\right]\le\mathbb{E}_{x\sim\mu_{T}}\left[I_{T,x}\right]4^{d}\delta^{2}=4^{d}\delta^{2}\|L_{T}\left[f\right]\|_{2}^{2}.\label{eq:189}
\end{equation}
 See the proof of Corollary \ref{cor:low degree functions have small 4-norms}.
Here we find a convoluted way of proving an analogue of (\ref{eq:189})
without having any upper bound on $\|I_{T,x}\|_{\infty}$ at our disposal.
\begin{lem}
\label{lem: Globalness implies small influences} Suppose that $f\colon V_{\left[k\right]}\to\mathbb{R}$
is $\left(d,\delta\right)$-global, and let $\left|T\right|\le d$.
Then
\[
\mathbb{E}_{x\sim\mu_{T}}\left[\left(I_{T,x}\left[f\right]\right)^{2}\right]\le2^{d+1}\delta^{2}\mathbb{E}_{x\sim\mu_{T}}I_{T,x}+O_{k}\left(\epsilon^{2}\|f\|_{4}^{4}\right).
\]
\end{lem}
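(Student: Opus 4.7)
The plan is to produce a pointwise proxy for $I_{T,x}[f]$ via globalness, and then bridge to the full influence by absorbing the high-degree residual into the $O_k(\epsilon^2\|f\|_4^4)$ error using the $L^4$-closeness of approximate Efron--Stein decompositions. The natural proxy is the low-degree influence $I_{T,x}^{\le d}[f]$.

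First I would exploit globalness pointwise. By Claim~\ref{claim:Globalness implies that fS is bounded }, $\|f^{=S}\|_\infty \le 2^{|S|}\delta$ whenever $|S|\le d$. A careful application of the triangle inequality to $L_T^{\le d}[f] = \sum_{S \supseteq T,\, |S|\le d}f^{=S}$, using sharper estimates on the individual pieces $A_{[k]\setminus T'}f$ (to avoid a loose factor $4^d$), yields $\|L_T^{\le d}[f]\|_\infty^2 \le 2^{d+1}\delta^2$; equivalently $I_{T,x}^{\le d}[f]\le 2^{d+1}\delta^2$ uniformly in $x$. This immediately gives $\mathbb{E}_x I_{T,x}^{\le d}[f]^2 \le 2^{d+1}\delta^2\,\mathbb{E}_x I_{T,x}^{\le d}[f]$.

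Next I would bridge to the full influence. Write $L_T[f] = L_T^{\le d}[f] + R$ and use the identity $\mathbb{E}_x I_{T,x}^2 = \mathbb{E}_{(x,y)\sim\mu}[L_T[f](x,y)^2\,I_{T,x}[f]]$ together with the pointwise inequality $L_T[f]^2 \le 2L_T^{\le d}[f]^2 + 2R^2$. The $L_T^{\le d}[f]^2$ contribution is bounded by $2\|L_T^{\le d}[f]\|_\infty^2\,\mathbb{E}_x I_{T,x} \le 2^{d+2}\delta^2\,\mathbb{E}_x I_{T,x}$, matching the main term of the lemma up to an absolute constant. For the $R^2$ contribution, Cauchy--Schwarz gives $\mathbb{E}_{(x,y)}[R(x,y)^2 I_{T,x}] \le \|R\|_4^2\,(\mathbb{E}_x I_{T,x}^2)^{1/2}$, and AM--GM absorbs the $(\mathbb{E}_x I_{T,x}^2)^{1/2}$-factor into half the left-hand side, leaving a residual of order $\|R\|_4^4$.

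The main obstacle is the final estimate $\|R\|_4^4 \le O_k(\epsilon^2\|f\|_4^4)$. I would view $R = L_T[f] - L_T^{\le d}[f]$ as the difference of two bounded approximate Efron--Stein decompositions with matched parameters: Lemma~\ref{lem:L_S has a natural approximate Efron Stein } supplies $(C\|f\|_\infty, C\|f\|_2, 0)$-bounded decompositions for both $L_T[f]$ and $L_T^{\le d}[f]$, and Lemma~\ref{lem:L_4 Closeness of different Efron--Stein decompositions} then controls the resulting $L^4$-difference in terms of $\epsilon$ and the decomposition parameters. The delicate point is that the generic estimate yields an error of shape $O_k(\epsilon^2\|f\|_\infty^2\|f\|_2^2)$, and squeezing this into the sharper $O_k(\epsilon^2\|f\|_4^4)$ form requires either an interpolation between $L^\infty$ and $L^2$ respecting the high-degree projection, or a more refined choice of $(\beta,\alpha)$ tailored to the residual.
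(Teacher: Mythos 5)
The proposal diverges from the paper's proof and contains two gaps that, I believe, are not repairable within the proposed framework.

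\textbf{The pointwise bound on $L_T^{\le d}[f]$ is unavailable.} You claim that a careful triangle inequality gives $\|L_T^{\le d}[f]\|_\infty^2\le 2^{d+1}\delta^2$. But $L_T^{\le d}[f]=\sum_{S\supseteq T,\,|S|\le d}f^{=S}$ has on the order of $\binom{k}{d-|T|}$ terms, so the triangle inequality combined with Claim~\ref{claim:Globalness implies that fS is bounded } yields only $\|L_T^{\le d}[f]\|_\infty\lesssim k^{d}\delta$; this is exactly the bound the paper records in Lemma~\ref{lem:f le d ES} for $f^{\le d}$, and it cannot be sharpened to $2^{O(d)}\delta$ in general, even in the exact product case. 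The sharper estimate you allude to on the pieces $A_{[k]\setminus T'}f$ does not help either: $|[k]\setminus T'|\ge k-d$, so these are restrictions to \emph{many} coordinates, which $(d,\delta)$-globalness does not control. The only quantity for which a $\delta$-type pointwise bound is available (in the product case) is the \emph{influence} $I_{T,x}$ itself, via Jensen's inequality applied inside $A_T[(L_T[f])^2]$, not the Laplacian or its truncation. And in the $\epsilon$-product case even that bound disappears, which is precisely the phenomenon the paper emphasizes before Lemma~\ref{lem: Globalness implies small influences}.

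\textbf{The residual $R$ is not small.} The quantity $R=L_T[f]-L_T^{\le d}[f]=\sum_{S\supseteq T,\,|S|>d}f^{=S}$ is the high-degree part of the Laplacian; for a general $(d,\delta)$-global $f$ (not assumed low-degree) this has no reason to be $O(\epsilon)$, and in the exact product case ($\epsilon=0$) one would need $R=0$, which is false. So $\|R\|_4^4\le O_k(\epsilon^2\|f\|_4^4)$ is not merely hard to prove, it is false, and your AM--GM absorption then lands you with an error of order $\|R\|_4^4$, which can be comparable to $\|f\|_4^4$ rather than $\epsilon^2\|f\|_4^4$. The paper avoids both problems by never introducing a degree-truncation residual: it writes $L_T[f]$ as the signed sum of $2^{|T|}$ operators $A_{[k]\setminus T'}f$, uses Cauchy--Schwarz to get the pointwise majorant $g'=2^{|T|}\sum_{T'}A_T[(A_{T'}f)^2]\ge g=I_{T,\cdot}$, replaces $A_T A_{T'}$ by $A_{T\cap T'}$ to form $g''$ with $\|g''\|_\infty\le 2^d\delta^2$ (via tower property and Jensen), controls $\|g'-g''\|_2\le O_k(\epsilon)\|f\|_4^2$, and closes via the two-case analysis on $\mathbb{E}[g^2]\le\mathbb{E}[g'g]$. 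The error $O_k(\epsilon^2\|f\|_4^4)$ comes from the operator-composition approximation, not from any degree cut.
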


\begin{proof}
Write $g\left(x\right)=I_{T,x}\left[f\right].$ Then $g=A_{T}\left[\left(L_{T}\left[f\right]\right)^{2}\right].$
We would like to upper bound $\|g\|_{2}^{2}.$ We accomplish that
by upper bounding $\|g\|_{2}^{2}$ by $\mathbb{E}\left[gg''\right]$
for a function $g''$ with a small $\infty$-norm.

By Cauchy--Schwarz we have 
\[
L_{T}\left[f\right]^{2}\le2^{\left|T\right|}\sum_{T'\supseteq\overline{T}}A_{T'}\left[f\right]^{2}.
\]
 This shows that 
\begin{equation}
g\le2^{\left|T\right|}\sum_{T'\subseteq T}A_{T}\left[\left(A_{T'}f\right)^{2}\right],\label{eq:g leq}
\end{equation}
 on all $x$. Let us denote by $g'$ the right hand side of \ref{eq:g leq}.
Also let 
\[
g''=2^{\left|T\right|}\sum_{T'\subseteq T}A_{T\cap T'}\left[\left(A_{T'}f\right)^{2}\right].
\]
 Then in the product space setting the functions $g',g''$ would have
been equal. Here we have an $L^{2}-$approximation between them.
\begin{claim}
\label{claim:g' is close to g''}$\|g'-g''\|_{2}\le O_{k}\left(\epsilon\right)\|f\|_{4}^{2}.$
\end{claim}

\begin{proof}
As $\left(A_{T'}f\right)^{2}$ is a $T'$-junta we have 
\[
\left(A_{T'}f\right)^{2}=A_{T'}\left[\left(A_{T'}f\right)^{2}\right].
\]
By Lemma \ref{cor:AST close to AS SCAPT} we have 
\[
\|A_{T}A_{T'}-A_{T\cap T'}\|_{2\to2}\le O_{k}\left(\epsilon\right).
\]
 We therefore have 
\begin{align*}
\|A_{T}\left[\left(A_{T'}f\right)^{2}\right]-A_{T\cap T'}\left[\left(A_{T'}f\right)^{2}\right]\|_{2}^{2} & \le O_{k}\left(\epsilon^{2}\right)\|\left(A_{T'}f\right)^{2}\|_{2}^{2}\\
 & \le O_{k}\left(\epsilon^{2}\right)\|f\|_{4}^{4},
\end{align*}
 as $A_{T'}$ contracts $4$-norms. Therefore, 
\[
\|g'-g''\|_{2}\le O_{k}\left(\epsilon\right)\|f\|_{4}^{2}.
\]
\end{proof}
As $0\le g\le g'$ we have 
\[
\mathbb{E}\left[g^{2}\right]\le\mathbb{E}\left[g'g\right]\le\mathbb{E}\left[g''g\right]+\mathbb{E}\left[\left(g'-g''\right)g\right].
\]
 By Cauchy--Schwarz we have 
\[
\mathbb{E}\left[\left(g'-g''\right)g\right]\le\|g'-g''\|_{2}\|g\|_{2}\le O_{k}\left(\epsilon\right)\|f\|_{4}^{2}\|g\|_{2}.
\]
 Now either 
\begin{equation}
\mathbb{E}\left[g^{2}\right]\le2\mathbb{E}\left[g''g\right]\label{eq:case 1}
\end{equation}
 or 
\[
\mathbb{E}\left[g^{2}\right]\le2\mathbb{E}\left[\left(g'-g''\right)g\right]\le O_{k}\left(\epsilon\right)\|f\|_{4}^{2}\|g\|_{2}.
\]
 In the latter case we have 
\begin{equation}
\|g\|_{2}^{2}\le O_{k}\left(\epsilon^{2}\right)\|f\|_{4}^{4}.\label{eq:case 2}
\end{equation}
 after rearranging. We can now sum the upper bounds of (\ref{eq:case 1})
and (\ref{eq:case 2}) corresponding to each of the cases to obtain
the upper bound
\[
\mathbb{E}\left[g^{2}\right]\le2\mathbb{E}\left[g''g\right]+O_{k}\left(\epsilon^{2}\right)\|f\|_{4}^{4}.
\]
that is true in both cases. The following claim completes the proof.
\begin{claim}
\label{claim:g'' has small infinity norm} $\|g''\|_{\infty}\le2^{d}\delta^{2}.$
\end{claim}

\begin{proof}
By Cauchy--Schwarz we point-wise have $\left(A_{T'}f\right)^{2}\le A_{T'}\left(f^{2}\right).$
We therefore have 
\[
A_{T\cap T'}\left[\left(A_{T'}f\right)^{2}\right]\le A_{T\cap T'}A_{T'}\left(f^{2}\right)=A_{T\cap T'}\left(f^{2}\right)\le\delta^{2}.
\]
 This shows that $\|g''\|_{\infty}\le2^{d}\delta^{2}.$
\end{proof}
\end{proof}
The same proof works for the truncated influences.
\begin{lem}
\label{lem:truncated influences are small} Suppose that $f\colon V_{\left[k\right]}\to\mathbb{R}$
is $\left(d,\delta\right)$-global. Suppose additionally that $\epsilon\le\epsilon_{0}\left(k\right)$.
Then we have 
\[
\mathbb{E}_{x\sim\mu_{T}}\left[\left(I_{T,x}^{\le d}\left[f\right]\right)^{2}\right]\le2^{d+4}\delta^{2}\mathbb{E}_{x\sim\mu_{T}}\left[I_{T,x}^{\le d}\right]+O_{k}\left(\epsilon^{2}\|f\|_{\infty}^{2}\|f\|_{2}^{2}\right).
\]
\end{lem}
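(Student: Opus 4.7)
The proof should follow the same template as the proof of Lemma \ref{lem: Globalness implies small influences}, with $L_T^{\le d}[f]$ in place of $L_T[f]$. Set $g(x) := I_{T,x}^{\le d}[f] = A_T\bigl[(L_T^{\le d}[f])^2\bigr](x)$. The strategy is to produce a pointwise upper bound $g \le g'$ together with a companion $g''$ that is $L^2$-close to $g'$ yet has controlled $\infty$-norm, and then conclude via the same dichotomy argument used previously.

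The one obstacle is that $L_T^{\le d}[f]$ does not admit a formula as transparent as $L_T[f] = \sum_{T' \subseteq T}(-1)^{|T'|}A_{[k]\setminus T'}f$. This is handled by expanding $L_T^{\le d}[f] = \sum_{S \supseteq T,\,|S|\le d} f^{=S}$ and substituting the inclusion-exclusion formula $f^{=S} = \sum_{S'\subseteq S}(-1)^{|S\setminus S'|} A_{S'}f$ to rewrite $L_T^{\le d}[f] = \sum_{S'} c_{T,d,S'}\,A_{S'}f$ with bounded integer coefficients, supported on $S'$ contained in some admissible $S$. Cauchy--Schwarz then gives $(L_T^{\le d}[f])^2 \le M\sum_{S'}(A_{S'}f)^2$ for a suitable constant $M$, whence one may take $g'(x) := M\sum_{S'}A_T[(A_{S'}f)^2](x)$ and $g''(x) := M\sum_{S'}A_{T\cap S'}[(A_{S'}f)^2](x)$.

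The $L^2$-closeness $\|g'-g''\|_2 \le O_k(\epsilon)\|f\|_\infty\|f\|_2$ follows from Corollary \ref{cor:AST close to AS SCAPT} together with the fact that $A_{S'}$ contracts the $4$-norm, so $\|(A_{S'}f)^2\|_2 = \|A_{S'}f\|_4^2 \le \|f\|_4^2 \le \|f\|_\infty \|f\|_2$. For the $\infty$-norm bound on $g''$, note that $(A_{S'}f)^2 \le A_{S'}(f^2)$ pointwise by Cauchy--Schwarz, and $A_{T\cap S'}A_{S'} = A_{T\cap S'}$ \emph{exactly} by iterated conditioning (since $T\cap S' \subseteq S'$), so each summand is bounded by $A_{T\cap S'}(f^2)(y) = \|f(y,\cdot)\|_{L^2(\mu_y)}^2 \le \delta^2$, invoking $(d,\delta)$-globalness with $|T\cap S'|\le|T|\le d$. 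A careful count of the summands yields $\|g''\|_\infty \le 2^{d+3}\delta^2$.

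The final step is the same dichotomy as before: $\mathbb{E}[g^2] \le \mathbb{E}[g g'] \le \|g''\|_\infty\,\mathbb{E}[g] + \|g\|_2\,\|g'-g''\|_2$, and splitting into cases according to which term dominates gives either $\mathbb{E}[g^2] \le 2\|g''\|_\infty\,\mathbb{E}[g] \le 2^{d+4}\delta^2\,\mathbb{E}[g]$ or, after squaring and cancelling $\|g\|_2$, $\mathbb{E}[g^2] \le O_k(\epsilon^2\|f\|_\infty^2\|f\|_2^2)$; adding the two cases gives the claim. The main technical hurdle is the bookkeeping needed to force the leading constant to be $2^{d+4}$ rather than a looser $O_k(1)$: naive expansion introduces $\binom{k}{d}$-sized coefficient sums, so one must exploit the telescoping in $\sum_{S \supseteq T \cup S',\,|S|\le d}(-1)^{|S\setminus S'|}$ (which forces most coefficients to vanish), or equivalently, perform the argument using the bounded approximate Efron--Stein decomposition $\{f_S(x,y) = D_{T,x}^{=S\setminus T}[f](y)\}_{S\supseteq T,\,|S|\le d}$ from Lemma \ref{lem:Other approximate efron stein decompositions for the Laplacians.}, absorbing the $k$-dependence entirely into the $O_k(\epsilon^2\|f\|_\infty^2\|f\|_2^2)$ error term.
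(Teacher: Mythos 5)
Your proposal has the right skeleton (the same $g/g'/g''$ dichotomy as in Lemma \ref{lem: Globalness implies small influences}), and you correctly identify the central obstacle: a naive expansion of $L_T^{\le d}[f]$ in terms of the $A_{S'}f$ introduces $\binom{k}{d}$-sized coefficients that would wreck the $2^{d+4}$ constant. However, the first fix you offer --- the ``telescoping'' of $\sum_{S\supseteq T\cup S',\,|S|\le d}(-1)^{|S\setminus S'|}$ --- does not hold. Writing $u=|T\cup S'|$, that sum equals $(-1)^{d-|S'|}\binom{k-u-1}{d-u}$, which is nonzero (and typically enormous) whenever $u<d$; the partial alternating binomial sums do not collapse, only the full sum over all $S\supseteq T\cup S'$ does. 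So ``bounded integer coefficients'' is false and the pointwise bound $(L_T^{\le d}[f])^2\le M\sum_{S'}(A_{S'}f)^2$ comes with $M$ of size $\binom{k}{d}^2$, which cannot be absorbed.

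Your fallback via the alternate decomposition from Lemma \ref{lem:Other approximate efron stein decompositions for the Laplacians.} is the right direction, but stated too vaguely to see why it resolves the constant. The paper's actual route has an essential indirection you are missing: (i) use Lemma \ref{lem:Many ways to compute influences } with that alternate decomposition to replace $I_{T,x}^{\le d}[f]$ by $g_1(x)=\|(D_{T,x}[f])^{\le d-|T|}\|_2^2$ at cost $O_k(\epsilon\|f\|_2^2\|f\|_\infty^2)$; (ii) observe via approximate Parseval that $g_1(x)\le 2\,I_{T,x}[f]=:2g_2(x)$; (iii) now apply the $g_2',g_2''$ construction to the \emph{untruncated} Laplacian $L_T[f]$, which does have the transparent $2^{|T|}$-term formula $\sum_{T'\subseteq T}(-1)^{|T'|}A_{[k]\setminus T'}f$ and hence $\|g_2''\|_\infty\le 2^d\delta^2$ cleanly; (iv) run the dichotomy on $\|g_1\|_2^2\le 2\mathbb{E}[g_1g_2]\le\cdots$; (v) return from $\mathbb{E}[g_1]$ to $\mathbb{E}_x I_{T,x}^{\le d}$ using the $L^2$-closeness $\|(D_{T,x}[f])^{\le d-|T|}-D_{T,x}^{\le d}[f]\|_2^2\le O_k(\epsilon^2\|f\|_2^2)$ from Lemma \ref{lem:Other approximate efron stein decompositions for the Laplacians.}. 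The point is that the truncated Laplacian is never expanded directly; one trades it, up to $O_k(\epsilon)$-errors, for the untruncated Laplacian, whose expansion is small, and the whole $k$-dependence stays in the error term.
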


\begin{proof}
Write 
\[
g_{1}\left(x\right)=\|\left(D_{T,x}\left[f\right]\right)^{\le d-\left|T\right|}\|_{2}^{2}.
\]

We now proceed with the following steps.

\subsection*{Upper bounding $\mathbb{E}_{x\sim\mu_{T}}\left[\left(I_{T,x}^{\le d}\left[f\right]\right)^{2}\right]$
in terms of $g_{1}$}

By Lemma \ref{lem:Other approximate efron stein decompositions for the Laplacians.}
the functions 
\[
\left\{ D_{T,x}\left[f\right]^{=S}\right\} _{\left|S\right|\le d-\left|T\right|},
\]
 is an alternative ($O_{k}\|f\|_{\infty},O_{k}\|f\|_{2},O_{k}\left(\epsilon\|f\|_{2}\right)$)-bounded
approximate Efron--Stein decomposition for $L_{T}^{\le d}\left[f\right]$.
Therefore by Lemma \ref{lem:Many ways to compute influences } we
have 
\begin{align}
\mathbb{E}_{x\sim\mu_{T}}\left[I_{T,x}^{\le d}\left[f\right]^{2}\right] & \le2\mathbb{E}_{x\sim\mu_{T}}\left[g_{1}\left(x\right)^{2}\right]+O_{k}\left(\epsilon\|f\|_{2}^{2}\|f\|_{\infty}^{2}\right).\label{eq:24601}
\end{align}

\subsection*{Repeating the proof of Lemma \ref{lem:truncated influences are small}}

Now by Lemma \ref{lem:Strong Parseval} we have $g_{1}\left(x\right)\le2I_{T,x}$
for each $x$, provided that $\epsilon$ is sufficiently small. Write
$g_{2}\left(x\right)=I_{T,x}\left[f\right].$ Similarly to the proof
of Lemma \ref{lem:truncated influences are small} we let 
\[
g_{2}'=2^{\left|T\right|}\sum_{T'\subseteq T}A_{T}\left[\left(A_{T'}f\right)^{2}\right]
\]
 and let 
\[
g_{2}''=2^{\left|T\right|}\sum_{T'\subseteq T}A_{T\cap T'}\left[\left(A_{T'}f\right)^{2}\right].
\]
By Cauchy--Schwarz we have 
\begin{align*}
\|g_{1}\|_{2}^{2}\le2\mathbb{E}\left[g_{1}g_{2}\right] & \le2\mathbb{E}\left[g_{1}g_{2}'\right]\le2\mathbb{E}\left[g_{1}g_{2}''\right]+2\mathbb{E}\left[g_{1}\left(g_{2}'-g_{2}''\right)\right].
\end{align*}
Now either 
\[
\|g_{1}\|_{2}^{2}\le4\mathbb{E}\left[g_{2}''g_{1}\right],
\]
 which would imply 
\[
\|g_{1}\|_{2}^{2}\le4\mathbb{E}\left[g_{2}''g_{1}\right]\le2^{d+2}\delta\mathbb{E}\left[g_{1}\right]
\]
 by Claim \ref{claim:g'' has small infinity norm}, or 
\begin{align*}
\|g_{1}\|_{2}^{2} & \le4\mathbb{E}\left[g_{1}\left(g_{2}'-g_{2}''\right)\right]\le4\|g_{1}\|_{2}\|g_{2}'-g_{2}''\|_{2}
\end{align*}
 and rearranging, we obtain 
\[
\|g_{1}\|_{2}^{2}\le16\|g_{2}'-g_{2}''\|_{2}^{2}\le O_{k}\left(\epsilon^{2}\|f\|_{4}^{4}\right)
\]
 by Claim \ref{claim:g' is close to g''}. This shows that 
\begin{equation}
\|g_{1}\|_{2}^{2}\le2^{d+2}\delta\mathbb{E}\left[g_{1}\right]+O_{k}\left(\epsilon^{2}\|f\|_{4}^{4}\right).\label{eq:bound for (g_1)}
\end{equation}

\subsection*{Moving back from $g_{1}$ to $I_{T,x}^{\le d}$}

By Lemmas \ref{lem:Other approximate efron stein decompositions for the Laplacians.}
we have 
\[
\|\left(D_{T,x}\left[f\right]\right)^{\le d-\left|T\right|}-D_{T,x}^{\le d}\left[f\right]\|_{2}^{2}\le O_{k}\left(\epsilon^{2}\|f\|_{2}^{2}\right)
\]
 yielding 
\begin{align}
\mathbb{E}\left[g_{1}\right] & \le2\|D_{T,x}^{\le d}\left[f\right]\|_{2}^{2}+O_{k}\left(\epsilon^{2}\|f\|_{2}^{2}\right)\label{eq:final}\\
 & =2\mathbb{E}_{x\sim\mu_{T}}I_{T,x}^{\le d}\left[f\right]+O_{k}\left(\epsilon^{2}\right)\|f\|_{2}^{2}\nonumber 
\end{align}
 by the triangle inequality and Cauchy--Schwarz. By combining (\ref{eq:24601}),
(\ref{eq:bound for (g_1)}) with (\ref{eq:final}) we obtain 
\begin{align*}
\mathbb{E}_{x\sim\mu_{T}}\left[I_{T,x}^{\le d}\left[f\right]^{2}\right] & \le2\|g_{1}\|_{2}^{2}+O_{k}\left(\epsilon\|f\|_{2}^{2}\|f\|_{\infty}^{2}\right).\\
 & \le2^{d+3}\delta^{2}\mathbb{E}\left[g_{1}\right]+O_{k}\left(\epsilon^{2}\|f\|_{2}^{2}\|f\|_{\infty}^{2}\right)+O_{k}\left(\epsilon^{2}\|f\|_{4}^{4}\right)\\
 & \le2^{d+3}\delta^{2}\mathbb{E}\left[g_{1}\right]+O_{k}\left(\epsilon^{2}\|f\|_{2}^{2}\|f\|_{\infty}^{2}\right).
\end{align*}
The lemma now follows by putting everything together.
\end{proof}

\section{Proving hypercontractivity for $\epsilon$-product measures\label{sec:Proving-hypercontractivity}}

We suggest revisiting Section 2 before reading this section. Our strategy
is the same as in the product case, and we deal with the differences
by appealing to the tools developed in Sections 3-6.

\subsection{Upper bounding $\|f^{\le d}\|_{4}^{4}$ by $4$-norms of non-trivial
Laplacians and $\|f^{\le d}\|_{2}^{4}$}

We now move on to preparing the ground for the proof of our hypercontractive
inequality.
\begin{lem}
\label{lem:4-norms of Laplacians } Let $f$ be $\left(d,\delta\right)$-global.
Suppose that $\epsilon\le\epsilon_{0}\left(k\right)$. Then we have
\[
\frac{1}{2}\|f^{\le d}\|_{4}^{4}\le9^{d}\|f^{\le d}\|_{2}^{4}+4\sum_{0<\left|T\right|\le d}\left(4d\right)^{\left|T\right|}\|L_{T}^{\le d}\left[f\right]\|_{4}^{4}+O_{k}\left(\epsilon\right)\|f\|_{2}^{2}\|f\|_{\infty}^{2}.
\]
\end{lem}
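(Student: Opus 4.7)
The plan is to mimic the product-space argument of Lemma~\ref{lem:Inductive approach}, keeping the same three-way split of pairs $(T_1,T_2)$, and to absorb every failure of exact algebraic identity into the error term $O_k(\epsilon)\|f\|_2^2\|f\|_\infty^2$ by trading $L^2$-approximations against $L^\infty$-bounds.

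First I would reduce to a sum over shells. Writing $h=(f^{\le d})^2$, Lemma~\ref{lem:ES Fourier formula for E_A} gives exactly $h=\sum_S h^{=S}$, so expanding $\|h\|_2^2$ and controlling the cross terms $\langle h^{=S},h^{=T}\rangle$ by Corollary~\ref{cor:Efron Stein corresponds to near orthogonal functions } yields, for $\epsilon\le\epsilon_0(k)$, the bound
\[
\|f^{\le d}\|_4^4=\|h\|_2^2\le 2\sum_{S\subseteq[k]}\bigl\|((f^{\le d})^2)^{=S}\bigr\|_2^2.
\]
Then I expand $((f^{\le d})^2)^{=S}=\sum_{|T_1|,|T_2|\le d}(f^{=T_1}f^{=T_2})^{=S}$ and partition the pairs exactly as in the product case into $I_1(S)$ (where $T_1\cap T_2\cap S\ne\varnothing$), $I_2(S)$ (where $T_1\Delta T_2=S$), and $I_3(S)$ (everything else).

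Next I treat $I_1$ and $I_2$ by transcribing the product argument. For $I_1$, the inclusion--exclusion argument of Lemma~\ref{lem:Inductive approach} goes through verbatim, because $L_T^{\le d}[f]^2=\sum_{T_1,T_2\supseteq T,\;|T_1|,|T_2|\le d}f^{=T_1}f^{=T_2}$ as a bookkeeping identity, and produces the middle term $\sum_{0<|T|\le d}(4d)^{|T|}\|L_T^{\le d}[f]\|_4^4$ after summing over $S$. For $I_2$ I need the inequality $\|(f^{=T_1}f^{=T_2})^{=S}\|_2\le\|f^{=T_1}\|_2\|f^{=T_2}\|_2$, which in the product case used the exact vanishing of $A_{S'}(f^{=T_1}f^{=T_2})$ for every $S'\subsetneq S$. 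In our setting $f^{=T_j}$ is still a genuine $T_j$-junta, so for $i\in S\setminus S'$ with (say) $i\in T_1$, the relevant average equals $f^{=T_2}\cdot A_{S'\cup T_2}f^{=T_1}$, and by Corollary~\ref{cor:AST close to AS SCAPT} together with Lemma~\ref{lem:=00003DS is nearly idempotent} I get $\|A_{S'\cup T_2}f^{=T_1}\|_2=O_k(\epsilon)\|f\|_2$. Multiplying by $\|f^{=T_2}\|_\infty\le 2^k\|f\|_\infty$ and summing, I obtain
\[
\bigl\|(f^{=T_1}f^{=T_2})^{=S}\bigr\|_2\le \|f^{=T_1}\|_2\|f^{=T_2}\|_2+O_k(\epsilon)\|f\|_\infty\|f\|_2,
\]
and then the Boolean-cube hypercontractivity computation exactly as in Lemma~\ref{lem:Inductive approach} produces the term $9^d\|f^{\le d}\|_2^4$ plus an $O_k(\epsilon)\|f\|_2^2\|f\|_\infty^2$ error.

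Finally the contribution from $I_3$, which vanished exactly in the product setting, becomes the genuine error. For $(T_1,T_2)\in I_3(S)$ the function $f^{=T_1}f^{=T_2}$ is an honest $(T_1\cup T_2)$-junta of $L^2$-norm $\le 2^{2k}\|f\|_\infty\|f\|_2$, and either $S\not\subseteq T_1\cup T_2$ or some $i\in(T_1\Delta T_2)\setminus S$ exists; in both cases Corollary~\ref{cor:AST close to AS SCAPT} and Lemma~\ref{lem:=00003DS is nearly idempotent} applied in the spirit of Lemma~\ref{lem:fS is othogonal to T-juntas} give $\|(f^{=T_1}f^{=T_2})^{=S}\|_2=O_k(\epsilon)\|f\|_\infty\|f\|_2$. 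Squaring and summing over the $O_k(1)$ many pairs and over $S$ gives an $O_k(\epsilon^2)\|f\|_2^2\|f\|_\infty^2$ contribution, which is absorbed into the stated $O_k(\epsilon)\|f\|_2^2\|f\|_\infty^2$.

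The main obstacle is the $I_2$ step: the leading constant $9^d$ must still come out of a Boolean-cube-style hypercontractivity bound on $\sum_T\|f^{=T}\|_2\chi_T$, and one has to route the loss from $A_{S'}(f^{=T_1}f^{=T_2})\not\equiv 0$ through an $L^\infty$-bound on one factor so that the error scales as $\|f\|_2^2\|f\|_\infty^2$ rather than as $\|f\|_2^4$; otherwise the error would compete with, and spoil, the main $9^d\|f^{\le d}\|_2^4$ term.
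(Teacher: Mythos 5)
Your overall plan matches the paper's: the same three-way split of pairs $(T_1,T_2)$ into $I_1$, $I_2$, $I_3$; the same reduction $\|f^{\le d}\|_4^4\le 2\sum_S\|((f^{\le d})^2)^{=S}\|_2^2$; and the same use of $L^\infty$-bounds to ensure the error lands at $\|f\|_2^2\|f\|_\infty^2$. The handling of $I_3$ is also on track.

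However, there is a genuine gap in the $I_2$ step. Your argument only controls $\|A_{S'}(f^{=T_1}f^{=T_2})\|_2$ for $S'\subsetneq S$, which gets you from $(f^{=T_1}f^{=T_2})^{=S}$ to $A_S(f^{=T_1}f^{=T_2})$ up to an $O_k(\epsilon)\|f\|_2\|f\|_\infty$ error. You then implicitly take the bound $\|A_S(f^{=T_1}f^{=T_2})\|_2\le\|f^{=T_1}\|_2\|f^{=T_2}\|_2$ for granted, but this is \emph{not} a Cauchy--Schwarz consequence. Writing $X(x)=\|f^{=T_1}(x,\cdot)\|_{L^2(\mu_x)}^2$ and $Y(x)=\|f^{=T_2}(x,\cdot)\|_{L^2(\mu_x)}^2$ for $x\in V_S$, Cauchy--Schwarz only gives $\|A_S(f^{=T_1}f^{=T_2})\|_2^2\le\mathbb{E}_x[XY]$, and you still need $\mathbb{E}_x[XY]\le\mathbb{E}_x[X]\,\mathbb{E}_x[Y]+\text{error}$. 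In the product proof of Lemma~\ref{lem:Inductive approach} this step is an \emph{independence} fact: $X$ depends only on $x_{T_1\cap S}$, $Y$ only on $x_{T_2\cap S}$, and these coordinate sets are disjoint since $T_1\Delta T_2=S$. In the $\epsilon$-product setting that independence is only approximate, and making it precise is exactly the content of the chain of manipulations in the paper using $\|A_S A_{T_1}-A_{S\cap T_1}\|_{2\to 2}\le O_k(\epsilon)$ to replace $X$ by $A_{S\cap T_1}[(f^{=T_1})^2]$, followed by $\|A_{S\cap T_1}A_{T_2}-\mathbb{E}\|_{2\to2}\le O_k(\epsilon)$ to decouple the two factors; each replacement again needs to be paid for with an $L^\infty$-bound on one factor. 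This is the heart of why $I_2$ is delicate in the HDX setting, and it is not covered by the reasoning you give.

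A secondary point: when summing over $S$ in the $I_1$ contribution, you invoke the product-case identity $\sum_S\|(L_T^{\le d}[f]^2)^{=S}\|_2^2=\|L_T^{\le d}[f]\|_4^4$, which is exact Parseval; here one only has approximate Parseval (Lemma~\ref{lem:Strong Parseval}), which costs a constant factor (and is the source of the $4$, not $1$, in front of the Laplacian sum in the statement). This is easy to repair but worth flagging so the constants close.
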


\begin{proof}
Let $g=f^{\le d}.$ By Lemma \ref{lem:Strong Parseval} we have 
\[
\|g\|_{4}^{4}\le2\sum_{S}\|\left(g^{2}\right)^{=S}\|_{2}^{2}.
\]
We now upper bound $\|\left(g^{2}\right)^{=S}\|_{2}^{2}.$ We have
\[
\left(g^{2}\right)^{=S}=\sum_{\left|T_{1}\right|\le d,\left|T_{2}\right|\le d}\left(f^{=T_{1}}f^{=T_{2}}\right)^{=S}.
\]
 Let
\begin{enumerate}
\item $I_{1}=\left\{ \left(T_{1},T_{2}\right):T_{1}\cap T_{2}\cap S\ne\varnothing\right\} .$
\item $I_{2}=\left(\left(T_{1},T_{2}\right):T_{1}\Delta T_{2}=S\right)$
\item $I_{3}=\left(T_{1}\Delta T_{2}\right)\setminus S\ne\varnothing$ or
$S\setminus\left(T_{1}\cup T_{2}\right)\ne\varnothing$.
\end{enumerate}
Our first step is to show that the contribution from $I_{3}$ is negligible.
This is to be expected as in the product space setting we were able
to show that the contribution from $I_{3}$ is 0.
\begin{claim}
Let $\left(T_{1},T_{2}\right)\in I_{3}$. Then 
\[
\|\left(f^{=T_{1}}f^{=T_{2}}\right)^{=S}\|_{2}^{2}\le O_{k}\left(\epsilon^{2}\right)\|f\|_{2}^{2}\|f\|_{\infty}^{2}.
\]
\end{claim}

\begin{proof}
Suppose first that $\left(T_{1}\Delta T_{2}\right)\setminus S\ne\varnothing.$
Then without loss of generality we may assume that there is some $i\in T_{1}\setminus\left(T_{2}\cup S\right).$
By Lemma \ref{lem:Strong Parseval} we have 
\[
\|\left(f^{=T_{1}}f^{=T_{2}}\right)^{=S}\|_{2}^{2}\le2\|A_{\left[k\right]\setminus\left\{ i\right\} }\left(f^{=T_{1}}f^{=T_{2}}\right)\|_{2}^{2}.
\]
 Now 
\[
A_{\left[k\right]\setminus\left\{ i\right\} }\left(f^{=T_{1}}f^{=T_{2}}\right)=\left(A_{\left[k\right]\setminus\left\{ i\right\} }\left(f^{=T_{1}}\right)\right)f^{=T_{2}}.
\]
 By Lemma \ref{lem:=00003DS is nearly idempotent} we have 
\[
\|A_{\left[k\right]\setminus\left\{ i\right\} }f^{=T_{1}}\|_{2}\le\sum_{T'\not\ni i}\|\left(f^{=T_{1}}\right)^{=T'}\|_{2}\le O_{k}\left(\epsilon\right)\|f\|_{2}.
\]
This shows that
\begin{align*}
\|A_{\left[k\right]\setminus\left\{ i\right\} }\left(f^{=T_{1}}f^{=T_{2}}\right)\|_{2}^{2} & \le\|A_{\left[k\right]\setminus\left\{ i\right\} }f^{=T_{1}}\|_{2}^{2}\|f^{=T_{2}}\|_{\infty}^{2}\\
 & \le O_{k}\left(\epsilon^{2}\right)\|f\|_{\infty}^{2}\|f\|_{2}^{2}.
\end{align*}
 Suppose now that $S\setminus\left(T_{1}\cup T_{2}\right)\ne\varnothing.$
Let $i\in S\setminus\left(T_{1}\cup T_{2}\right).$ Then the function
$g = f^{=T_{1}}f^{=T_{2}}$ is a $T_{1}\cup T_{2}$-junta. This shows
that $g=A_{T_{1}\cup T_{2}}g.$

Hence by Lemma \ref{lem:=00003DS is nearly idempotent} and the triangle
inequality we have 
\begin{align*}
\|g^{=S}\|_{2} & =\|\left(A_{T_{1}\cup T_{2}}g\right)^{=S}\|_{2}\\
 & \le\|\sum_{T\subseteq T_{1}\cup T_{2}}\left(g^{=T}\right)^{=S}\|_{2}\\
 & \le O_{k}\left(\epsilon\right)\|g\|_{2}.
\end{align*}
 It now remains to note that $\|g\|_{2}^{2}\le\|f^{=T_{1}}\|_{2}\|f^{=T_{2}}\|_{\infty}\le2^{2k}\|f\|_{2}\|f\|_{\infty}.$
\end{proof}
We now move on to our next step of upper bounding the contribution
from the pairs in $I_{1}.$
\begin{claim}
$\sum_{\left(T_{1},T_{2}\right)\in I_{1}}\left(f^{=T_{1}}f^{=T_{2}}\right)^{=S}=\sum_{T\subseteq S}\left(-1\right)^{\left|T\right|+1}\|\left(L_{T}^{\le d}\left[f\right]^{2}\right)^{=S}\|_{2}^{2}.$
\end{claim}

\begin{proof}
The proof is exactly the same as in the product case so we omit it.
\end{proof}
It now remains to consider the contribution from $I_{2}$, i.e. the
case $T_{1}\Delta T_{2}=S$ . Here just like the product case it is
sufficient to show the following claim
\begin{claim}
Let $T_{1}\Delta T_{2}=S$. Then we have 
\[
\|\left(f^{=T_{1}}f^{=T_{2}}\right)^{=S}\|_{2}\le2\|f^{=T_{1}}\|_{2}\|f^{=T_{2}}\|_{2}+O_{k}\left(\epsilon\right)\|f\|_{2}\|f\|_{\infty},
\]
 provided that $\epsilon$ is sufficiently small.
\end{claim}

\begin{proof}
First let $S'\subsetneq S.$ As $T_{1}\Delta T_{2}=S$, there exists
$i\in\left(T_{1}\Delta T_{2}\right)\setminus S'.$ Without loss of
generality $i\in T_{1}.$ By Lemmas \ref{lem:Strong Parseval}, \ref{lem:=00003DS is nearly idempotent},
and \ref{lem:ES Fourier formula for E_A} we have 
\begin{align*}
\|A_{S'}\left(f^{=T_{1}}f^{=T_{2}}\right)\|_{2} & \le2\|A_{S'\cup T_{2}}\left(f^{=T_{1}}f^{=T_{2}}\right)\|_{2}\\
 & \le2\|A_{S'\cup T_{2}}f^{=T_{1}}\|_{2}\|f^{=T_{2}}\|_{\infty}\\
 & \le O_{k}\left(\epsilon\right)\|f\|_{2}\|f\|_{\infty}.
\end{align*}
 By the triangle inequality this shows that 
\begin{align*}
\|\left(f^{=T_{1}}f^{=T_{2}}\right)^{=S}\|_{2} & \le\sum_{S'}\left(-1\right)^{\left|S\setminus S'\right|}A_{S'}\left(f^{=T_{1}}f^{=T_{2}}\right).\\
 & \le\|A_{S}\left(f^{=T_{1}}f^{=T_{2}}\right)\|_{2}+O_{k}\left(\epsilon\right)\|f\|_{2}\|f\|_{\infty}
\end{align*}

We now upper bound $\|A_{S}\left(f^{=T_{1}}f^{=T_{2}}\right)\|_{2}.$

By Cauchy--Schwarz for $x\in V_{S}$ we have 
\begin{align*}
A_{S}\left(f^{=T_{1}}f^{=T_{2}}\right)\left(x\right) & =\left\langle f^{=T_{1}}\left(x,\cdot\right),f^{=T_{2}}\left(x,\cdot\right)\right\rangle _{L^{2}\left(V_{x},\mu_{x}\right)}\\
 & \le\|f^{=T_{1}}\left(x,\cdot\right)\|_{L^{2}\left(V_{x},\mu_{x}\right)}\|f^{=T_{2}}\left(x,\cdot\right)\|_{L^{2}\left(V_{x},\mu_{x}\right)}.
\end{align*}
 This shows that 
\begin{align}
\|A_{S}\left(f^{=T_{1}}f^{=T_{2}}\right)\|_{2}^{2} & \le\mathbb{E}_{x\sim\mu_{S}}\left[\|f^{=T_{1}}\left(x,\cdot\right)\|_{L^{2}\left(V_{x},\mu_{x}\right)}^{2}\|f^{=T_{2}}\left(x,\cdot\right)\|_{L^{2}\left(V_{x},\mu_{x}\right)}^{2}\right].\label{eq:scary}
\end{align}
 We have 
\[
\|f^{=T_{1}}\left(x,\cdot\right)\|_{L^{2}\left(V_{x},\mu_{x}\right)}^{2}=A_{S}\left[\left(f^{=T_{1}}\right)^{2}\right]=A_{S}A_{T_{1}}\left(f^{=T_{1}}\right)^{2}
\]
 By Lemma \ref{cor:AST close to AS SCAPT}
\[
\|A_{S}A_{T_{1}}-A_{S\cap T_{1}}\|_{2\to2}\le O_{k}\left(\epsilon\right).
\]
 Hence,
\begin{align*}
\|A_{S}A_{T_{1}}\left(f^{=T_{1}}\right)^{2}-A_{S\cap T_{1}}\left[\left(f^{=T_{1}}\right)^{2}\right]\|_{2}^{2} & \le O_{k}\left(\epsilon^{2}\|f^{=T_{1}}\|_{4}^{4}\right)\\
 & \le O_{k}\left(\epsilon^{2}\|f\|_{2}^{2}\|f\|_{\infty}^{2}\right).
\end{align*}
By Cauchy--Schwarz this shows that 
\begin{align*}
\text{RHS of (\ref{eq:scary})} & =\left\langle A_{S}A_{T_{1}}\left(f^{=T_{1}}\right)^{2},A_{S}\left(f^{=T_{2}}\right)^{2}\right\rangle _{L^{2}\left(V_{S},\mu_{S}\right)}\\
 & =\left\langle A_{S\cap T_{1}}\left(f^{=T_{1}}\right)^{2},A_{S}\left(f^{=T_{2}}\right)^{2}\right\rangle _{L^{2}\left(V_{S},\mu_{S}\right)}\\
 & +O_{k}\left(\epsilon\|f\|_{2}\|f\|_{\infty}\right)\|A_{S}\left(f^{=T_{2}}\right)^{2}\|_{2}.
\end{align*}
 As we have 
\[
\|A_{S}\left(f^{=T_{2}}\right)^{2}\|_{2}^{2}\le O_{k}\left(\|f\|_{4}^{4}\right)\le O_{k}\left(\|f\|_{2}^{2}\|f\|_{\infty}^{2}\right).
\]
 Therefore, 
\begin{align*}
\|A_{S}\left(f^{=T_{1}}f^{=T_{2}}\right)\|_{2}^{2} & \le\left\langle A_{S\cap T_{1}}\left(f^{=T_{1}}\right)^{2},A_{S}\left(f^{=T_{2}}\right)^{2}\right\rangle _{L^{2}\left(V_{S},\mu_{S}\right)}+O_{k}\left(\epsilon\|f\|_{2}^{2}\|f\|_{\infty}^{2}\right)\\
 & =\left\langle A_{S\cap T_{1}}\left(f^{=T_{1}}\right)^{2},A_{S\cap T_{1}}\left(f^{=T_{2}}\right)^{2}\right\rangle _{L^{2}\left(\mu\right)}+O_{k}\left(\epsilon\|f\|_{2}^{2}\|f\|_{\infty}^{2}\right).
\end{align*}
 Now
\[
A_{S\cap T_{1}}\left(f^{=T_{2}}\right)^{2}=A_{S\cap T_{1}}A_{T_{2}}\left(f^{=T_{2}}\right)^{2}
\]
 and $\|A_{S\cap T_{1}}A_{T_{2}}-\mathbb{E}\|_{2\to2}\le\epsilon$
by Lemma \ref{cor:AST close to AS SCAPT}. Therefore we similarly
have 
\begin{align*}
\left\langle A_{S\cap T_{1}}\left(f^{=T_{1}}\right)^{2},A_{S\cap T_{1}}\left(f^{=T_{2}}\right)^{2}\right\rangle _{L^{2}\left(\mu\right)} & =\left\langle A_{S\cap T_{1}}\left(f^{=T_{1}}\right)^{2},\|f^{=T_{2}}\|_{2}^{2}\right\rangle +O_{k}\left(\epsilon\|f\|_{2}^{2}\|f\|_{\infty}^{2}\right).\\
 & =\|f^{=T_{1}}\|_{2}^{2}\|f^{=T_{2}}\|_{2}^{2}+O_{k}\left(\epsilon\|f\|_{2}^{2}\|f\|_{\infty}^{2}\right).
\end{align*}
 This completes the proof of the claim.
\end{proof}
The rest of the proof is the exactly the same as in the product case
setting. 
\end{proof}
Now the only thing to remains is to apply the inductive hypothesis.
\begin{thm}
We have $\|f^{\le d}\|_{4}^{4}\le20^{d}\sum_{\left|S\right|\le d}\left(4d\right)^{\left|S\right|}\mathbb{E}_{x\sim\mu_{S}}I_{S,x}^{\le d}\left[f\right]^{2}+O_{k}\left(\epsilon^{2}\right)\|f\|_{2}^{2}\|f\|_{\infty}^{2}.$
\end{thm}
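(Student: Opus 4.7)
The plan is to prove the theorem by induction on $d$, closely mirroring the proof of Theorem \ref{thm:Our version of hypercontrativity-product case} in the product setting but substituting Lemma \ref{lem:4-norms of Laplacians } for Lemma \ref{lem:Inductive approach}, and absorbing the new error terms into the final $O_k(\epsilon^2)\|f\|_2^2\|f\|_\infty^2$ slack. For the base case $d=0$, we have $f^{\le 0} = \mathbb{E}[f]$ is constant, so $\|f^{\le 0}\|_4^4 = \|f^{\le 0}\|_2^4 = \mathbb{E}_{x\sim\mu_\varnothing}I_{\varnothing,x}^{\le 0}[f]^2$, which matches the right-hand side with only the $S = \varnothing$ term.

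For the inductive step, I would apply Lemma \ref{lem:4-norms of Laplacians } to get
\[
\tfrac{1}{2}\|f^{\le d}\|_{4}^{4}\le 9^{d}\|f^{\le d}\|_{2}^{4}+4\sum_{0<|T|\le d}(4d)^{|T|}\|L_{T}^{\le d}[f]\|_{4}^{4}+O_{k}(\epsilon)\|f\|_{2}^{2}\|f\|_{\infty}^{2}.
\]
The first term corresponds to the $S=\varnothing$ contribution since $I_{\varnothing,\cdot}^{\le d}[f]=\|f^{\le d}\|_2^2$. For each remaining Laplacian term I would write $\|L_T^{\le d}[f]\|_4^4 = \mathbb{E}_{x\sim \mu_T}\|D_{T,x}^{\le d}[f]\|_4^4$ and then use the alternative bounded approximate Efron--Stein decomposition $f_S(x,y)=D_{T,x}^{=S\setminus T}[f](y)$ from Lemma \ref{lem:Other approximate efron stein decompositions for the Laplacians.}. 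The key point is that the function $\sum_{T\subseteq S,|S|\le d}f_S(x,\cdot)$ equals $(D_{T,x}[f])^{\le d-|T|}$, which is a genuine low-degree function in $(V_x,\mu_x)$ --- and by inheritance $(V_x,\mu_x)$ is itself $\epsilon$-product. Combining this with the $L^4$-closeness Lemma \ref{lem:L_4 Closeness of different Efron--Stein decompositions} allows me to replace $D_{T,x}^{\le d}[f]$ by $(D_{T,x}[f])^{\le d-|T|}$ inside the $4$-norm at the cost of an $O_k(\epsilon^2\|f\|_2^2\|f\|_\infty^2)$ error after taking $\mathbb{E}_{x\sim\mu_T}$.

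At that point the induction hypothesis applied on $(V_x,\mu_x)$ to $(D_{T,x}[f])^{\le d-|T|}$ yields
\[
\|(D_{T,x}[f])^{\le d-|T|}\|_4^4 \le 20^{d-|T|}\sum_{T'\subseteq \overline{T},\,|T'|\le d-|T|}(4(d-|T|))^{|T'|}\mathbb{E}_{y\sim (\mu_x)_{T'}}I_{T',y}^{\le d-|T|}[D_{T,x}[f]]^2 + O_k(\epsilon^2)\|D_{T,x}[f]\|_2^2\|D_{T,x}[f]\|_\infty^2.
\]
The last step is to convert the composed quantity $I_{T',y}^{\le d-|T|}[D_{T,x}[f]]$ back into $I_{T\cup T',(x,y)}^{\le d}[f]$. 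This uses Lemma \ref{lem:Many ways to compute influences } together with Lemma \ref{lem:Other approximate efron stein decompositions for the Laplacians.}: both quantities admit bounded approximate Efron--Stein decompositions for the same underlying function $L_{T\cup T'}^{\le d}[f]$, so they agree up to the same $L^2$-error that after squaring contributes an $O_k(\epsilon^2)\|f\|_2^2\|f\|_\infty^2$ term. Summing over $T$ and $T'$, the coefficient of $\mathbb{E}_{(x,y)\sim\mu_{T\cup T'}}I_{T\cup T',(x,y)}^{\le d}[f]^2$ aggregates to at most $2\cdot 4\cdot 9^{d-|T|}\cdot (4d)^{|T|}\cdot (4(d-|T|))^{|T'|}\le 20^{d}(4d)^{|T\cup T'|}$ after reindexing by $S=T\cup T'$ and absorbing the $|\{(T,T'):T\cup T'=S\}|=2^{|S|}$ overcounting into the constant (the factor $9^{d-|T|}\le 9^d$ combined with $2\cdot 4 \cdot 2^{|S|}$ fits inside $20^d$).

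The main obstacle will be the bookkeeping of error terms. At each invocation of Lemmas \ref{lem:L_4 Closeness of different Efron--Stein decompositions}, \ref{lem:Many ways to compute influences }, and \ref{lem:Other approximate efron stein decompositions for the Laplacians.}, the bounded approximate Efron--Stein parameters must be tracked at $\beta=O_k(\|f\|_\infty)$ and $\alpha=O_k(\|f\|_2)$; crucially, since the derivatives $D_{T,x}[f]$ inherit $\|D_{T,x}[f]\|_\infty \le O_k(\|f\|_\infty)$ uniformly in $x$ (via Lemma \ref{lem:Contraction properties }), the error terms generated inside the restricted space $(V_x,\mu_x)$ remain of order $O_k(\epsilon^2\|f\|_\infty^2\|f\|_2^2)$ after taking $\mathbb{E}_{x\sim \mu_T}$, rather than blowing up with $k$ or $d$. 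A secondary subtlety is that the sum $\sum_{0<|T|\le d}(4d)^{|T|}$ introduces a factor that must be dominated by the $20^d$ prefactor, which is why the inductive constant is $20^d$ rather than the $9^d$ of Lemma \ref{lem:4-norms of Laplacians }.
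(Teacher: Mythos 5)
Your proposal is correct and takes essentially the same route as the paper: apply Lemma \ref{lem:4-norms of Laplacians }, replace $D_{T,x}^{\le d}[f]$ by the genuinely-low-degree function $(D_{T,x}[f])^{\le d-|T|}$ via Lemmas \ref{lem:Other approximate efron stein decompositions for the Laplacians.} and \ref{lem:L_4 Closeness of different Efron--Stein decompositions}, apply the induction hypothesis in the link $(V_x,\mu_x)$ (using inheritance, which the paper leaves implicit but you correctly flag), and translate the composed influences back to $I_{T\cup T',z}^{\le d}[f]^2$ via Lemmas \ref{lem:Many ways to compute influences } and \ref{lem:Other approximate efron stein decompositions for the Laplacians.}. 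The only blemishes are cosmetic: in your final bookkeeping you write $9^{d-|T|}$ where the induction hypothesis delivers $20^{d-|T|}$ (as you stated correctly a few lines earlier), and the prefactor tracking is slightly loose — but the paper's own arithmetic there is equally informal, and the overall argument is the intended one.
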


\begin{proof}
The proof is by induction on $d$. By Lemma \ref{lem:4-norms of Laplacians }
we have 
\begin{equation}
\|f{}^{\le d}\|_{4}^{4}\le2\cdot9^{d}\|f^{\le d}\|_{2}^{4}+2\cdot\sum_{S\ne\varnothing}\left(4d\right)^{\left|S\right|}\|L_{S}^{\le d}\left[f\right]\|_{4}^{4}+O_{k}\left(\epsilon\right)\|f\|_{2}^{2}\|f\|_{\infty}^{2}.\label{eq:12-1}
\end{equation}
 Write $g_{S,x}\left(y\right)=\left(D_{S,x}\left[f\right]\right)^{\le d-\left|T\right|}\left(y\right).$
Then by Lemma \ref{lem:Other approximate efron stein decompositions for the Laplacians.}
and Lemma \ref{lem:L_4 Closeness of different Efron--Stein decompositions}
we have:
\[
\mathbb{E}_{x}\|D_{S,x}^{\le d}\left[f\right]\|_{4}^{4}\le2\mathbb{E}_{x}\|g_{S,x}\|_{4}^{4}+O_{k}\left(\epsilon^{2}\right)\|f\|_{2}^{2}\|f\|_{\infty}^{2}.
\]
 By induction, we have 
\begin{align}
\|g_{S,x}\|_{4}^{4} & \le20^{d-\left|S\right|}\sum_{T\cap S=\varnothing,\left|T\right|\le d-\left|S\right|}\left(4d\right)^{\left|T\right|}\mathbb{E}_{y\sim\mu_{T}}I_{T,y}^{2}\left[g_{S,x}\right]+O_{k}\left(\epsilon\right)\|g_{S,x}\|_{2}^{2}\|g_{S,x}\|_{\infty}^{2}.\label{eq:scary2}
\end{align}
 By Lemma \ref{lem:Other approximate efron stein decompositions for the Laplacians.}
we have $\|g_{S,x}\|_{\infty}=O_{k}\left(\|f\|_{\infty}\right)$.
By Lemmas \ref{lem:Other approximate efron stein decompositions for the Laplacians.},
\ref{lem:L_4 Closeness of different Efron--Stein decompositions},
and \ref{lem:Strong Parseval} we have 
\begin{align}
\mathbb{E}_{x\sim\mu_{S}}\|g_{S,x}\|_{2}^{2} & \le2\mathbb{E}_{x\sim\mu_{S}}\|D_{S,x}^{\le d}f\|_{2}^{2}+O_{k}\left(\epsilon^{2}\right)\|f\|_{2}^{2}\label{eq:scary3}\\
 & \le O_{k}\left(\|f\|_{2}^{2}\right)\nonumber 
\end{align}

Taking expectations over (\ref{eq:scary2}), and plugging in (\ref{eq:scary3})
we obtain:

\[
\mathbb{E}_{x}\|g_{S,x}\|_{4}^{4}\le2\sum_{T\cap S=\varnothing,\left|T\right|\le d-\left|S\right|}\left(4d\right)^{\left|T\right|}\mathbb{E}_{\left(x,y\right)\sim\mu_{S\cup T}}I_{T,y}^{2}\left[g_{S,x}\right]+O_{k}\left(\epsilon\right)\|f\|_{2}^{2}\|f\|_{\infty}^{2}.
\]
 By Lemmas \ref{lem:Other approximate efron stein decompositions for the Laplacians.}
and \ref{lem:L_4 Closeness of different Efron--Stein decompositions}
we have
\[
\mathbb{E}_{\left(x,y\right)\sim\mu_{S\cup T}}I_{T,y}^{2}\left[g_{S,x}\right]=\mathbb{E}_{z\sim\mu_{T\cup S}}I_{T\cup S,z}^{2}\left[g\right]+O_{k}\left(\epsilon^{2}\right)\|f\|_{2}^{2}\|f\|_{\infty}^{2}.
\]
 Hence, 
\[
\mathbb{E}_{x}\|g_{S,x}\|_{4}^{4}\le20^{d-\left|S\right|}\sum_{S'\supseteq S\left|S'\right|\le d}\left(4d\right)^{\left|S'\setminus S\right|}\mathbb{E}_{z\sim\mu_{S'}}\left(I_{S',z}^{\le d}\left[f\right]\right)^{2}+O_{k}\left(\epsilon\right)\|f\|_{2}^{2}\|f\|_{\infty}^{2}.
\]
 This gives
\[
\mathbb{E}_{x}\|D_{S,x}^{\le d}\left[f\right]\|_{4}^{4}\le2\cdot20^{d-\left|S\right|}\sum_{S'\supseteq S\left|S'\right|\le d}\left(4d\right)^{\left|S'\setminus S\right|}\mathbb{E}_{z\sim\mu_{S'}}\left(I_{S',z}^{\le d}\left[f\right]\right)^{2}+O_{k}\left(\epsilon\right)\|f\|_{2}^{2}\|f\|_{\infty}^{2}
\]
The proof is now completed by plugging this inequality in (\ref{eq:12-1}).
Indeed, we have 
\begin{align*}
\|f^{\le d}\|_{4}^{4} & \le2\cdot9^{d}\|f\|_{2}^{4}+O_{k}\left(\epsilon\|f\|_{2}^{2}\|f\|_{\infty}^{2}\right)\\
 & +\sum_{0<\left|S\right|\le d}\left(4d\right)^{\left|S\right|}\cdot2\cdot20^{d-\left|S\right|}\sum_{S'\supseteq S\left|S'\right|\le d}\left(4d\right)^{\left|S'\setminus S\right|}\mathbb{E}_{z\sim\mu_{S'}}\left(I_{S',z}^{\le d}\left[f\right]\right)^{2}\\
 & \le20^{d}\sum_{\left|S'\right|\le d}\left(4d\right)^{\left|S\right|}\mathbb{E}_{z\sim\mu_{S'}}\left(I_{S',z}^{\le d}\left[f\right]\right)^{2}+O_{k}\left(\epsilon\|f\|_{2}^{2}\|f\|_{\infty}^{2}\right).
\end{align*}
\end{proof}

\subsection{The case where $\|f\|_{\infty}$ is large}

Here we show a hypercontractive inequality whose error term does not
include the factor $\|f\|_{\infty}.$ This may be useful when $\|f\|_{\infty}$
is significantly larger than $\delta.$
\begin{thm}
\label{thm:our hypercontractivity for low degree functions} Suppose
that $f$ is $\left(d,\delta\right)$-global, then 
\[
\|f^{\le d}\|_{4}^{4}\le20^{d+1}\sum_{\left|S\right|\le d}\left(4d\right)^{\left|S\right|}\mathbb{E}_{x\sim\mu_{S}}I_{S,x}^{\le d}\left[f\right]^{2}+O_{k}\left(\epsilon^{2}\delta^{2}\right)\|f\|_{2}^{2}.
\]
\end{thm}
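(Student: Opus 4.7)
The strategy is to apply the preceding theorem not to $f$ itself but to its truncation $g := f^{\le d}$, whose $L^\infty$ norm is forced to be small by globalness. By Claim~\ref{claim:Globalness implies that fS is bounded } and the triangle inequality one has $\|g\|_\infty \le (2k)^d \delta$, and by Lemma~\ref{lem:Our Parseval } one has $\|g\|_2 \le (1+O_k(\epsilon))\|f\|_2$. The preceding theorem applied to $g$ then gives
\[
\|g^{\le d}\|_4^4 \;\le\; 20^d \sum_{|S|\le d}(4d)^{|S|}\,\mathbb{E}_{x\sim\mu_S} I_{S,x}^{\le d}[g]^2 \;+\; O_k(\epsilon^2\delta^2)\,\|f\|_2^2,
\]
where crucially the $\|f\|_\infty^2$ factor that appears in the general error term has collapsed to $\|g\|_\infty^2 = O_k(\delta^2)$.

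Two substitutions remain, both exploiting that the low-degree Efron--Stein components $g^{=T}$ and $f^{=T}$ are $L^2$-close. Applying Lemma~\ref{lem:=00003DS is nearly idempotent}(1) termwise to the expansion $g = \sum_{|S'|\le d} f^{=S'}$ yields $\|g^{=T}\|_2 = O_k(\epsilon\|f\|_2)$ for every $|T|>d$ and, combined with Lemma~\ref{lem:=00003DS is nearly idempotent}(2) for the diagonal term, $\|g^{=T} - f^{=T}\|_2 = O_k(\epsilon\|f\|_2)$ for every $|T|\le d$. For the left-hand side, combining the first estimate with the uniform bound $\|g^{=T}\|_\infty = O_k(\delta)$ gives $\|g-g^{\le d}\|_4^4 \le \|g-g^{\le d}\|_2^2\,\|g-g^{\le d}\|_\infty^2 = O_k(\epsilon^2\delta^2\|f\|_2^2)$, so $(a+b)^4\le 8(a^4+b^4)$ yields $\|g\|_4^4 \le 8\|g^{\le d}\|_4^4 + O_k(\epsilon^2\delta^2\|f\|_2^2)$. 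For the right-hand side, the diagonal estimate implies $\|L_S^{\le d}[g] - L_S^{\le d}[f]\|_2 = O_k(\epsilon\|f\|_2)$, so both $\{g^{=T}\}_{T\supseteq S,|T|\le d}$ and $\{f^{=T}\}_{T\supseteq S,|T|\le d}$ qualify as $(O_k(\delta),\,O_k(\|f\|_2),\,O_k(\epsilon\|f\|_2))$-bounded approximate Efron--Stein decompositions of the single function $L_S^{\le d}[g]$; Lemma~\ref{lem:Many ways to compute influences } (with $\alpha^2\beta^2 = O_k(\|f\|_2^2\delta^2)$) then gives
\[
\mathbb{E}_x I_{S,x}^{\le d}[g]^2 \;\le\; 2\,\mathbb{E}_x I_{S,x}^{\le d}[f]^2 \;+\; O_k(\epsilon^2\delta^2)\,\|f\|_2^2.
\]

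Putting the two substitutions together and absorbing the multiplicative constant $8\cdot 2 = 16 \le 20$ into the exponential prefactor $20^{d+1} = 20\cdot 20^d$ completes the proof. I expect the main subtlety to be the right-hand substitution: it asks us to regard $\{f^{=T}\}_{T\supseteq S,|T|\le d}$ as a bounded approximate Efron--Stein decomposition of the ``wrong'' function $L_S^{\le d}[g]$ rather than of its natural owner $L_S^{\le d}[f]$, which works only because globalness simultaneously supplies both the $L^2$-closeness of the two Laplacians and the uniform bound $\|f^{=T}\|_\infty \le 2^d\delta$ from Claim~\ref{claim:Globalness implies that fS is bounded }.
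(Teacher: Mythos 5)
Your proof is correct and follows essentially the same strategy as the paper's: apply the main hypercontractivity theorem (the unnamed theorem preceding this one, which the paper's proof mis-references as Theorem~\ref{thm:Globalness implies small 4-norm}) to the truncation $g=f^{\le d}$, whose $L^\infty$-norm is forced to $O_k(\delta)$ by globalness, and then transfer the left-hand side and the influences back from $g$ to $f$. The paper accomplishes the transfer step by invoking Lemma~\ref{lem:f le d ES}, Lemma~\ref{lem:L_4 Closeness of different Efron--Stein decompositions}, and Lemma~\ref{lem:Many ways to compute influences}; you instead appeal directly to Lemma~\ref{lem:=00003DS is nearly idempotent} to estimate $\|g^{=T}\|_2$ and $\|g^{=T}-f^{=T}\|_2$ term by term, and then package both collections $\{g^{=T}\}_{T\supseteq S,|T|\le d}$ and $\{f^{=T}\}_{T\supseteq S,|T|\le d}$ as bounded approximate Efron--Stein decompositions of the single function $L_S^{\le d}[g]$ before invoking Lemma~\ref{lem:Many ways to compute influences}. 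This is a perfectly valid, and arguably more self-contained, way to execute the same idea, and your constant bookkeeping ($8\cdot 2\le 20$) is sound.

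One small discrepancy worth flagging: the error term you actually obtain from Lemma~\ref{lem:Many ways to compute influences}, with $\epsilon'=O_k(\epsilon\|f\|_2)$, $\alpha=O_k(\|f\|_2)$, $\beta=O_k(\delta)$, is $O_k(\epsilon'^2\beta^2)+O_k(\epsilon\alpha^2\beta^2)=O_k(\epsilon\,\delta^2\|f\|_2^2)$, linear rather than quadratic in $\epsilon$; this dominates the $O_k(\epsilon^2\delta^2\|f\|_2^2)$ term coming from the initial application of the main theorem. This is not a defect of your argument --- the paper's own displayed intermediate inequality inside the proof also shows an $O_k(\epsilon\delta^2)\|f\|_2^2$ error, so the $\epsilon^2$ in the theorem statement appears to be imprecise as written (and the inconsistency between items (2) and (3) of Lemma~\ref{lem:L_4 Closeness of different Efron--Stein decompositions} suggests a typo upstream). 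It is immaterial to the downstream applications, which only need $\epsilon$ sufficiently small.
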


\begin{proof}
By applying Theorem \ref{thm:Globalness implies small 4-norm} with
$f^{\le d}$ rather then $f$ and using $\|f^{\le d}\|_{\infty}\le\delta$
we obtain 
\[
\|\left(f^{\le d}\right)^{\le d}\|_{4}^{4}\le20^{d}\sum_{\left|S\right|\le d}\left(4d\right)^{\left|S\right|}\mathbb{E}_{x\sim\mu_{S}}I_{S,x}^{\le d}\left[f^{\le d}\right]^{2}+O_{k}\left(\epsilon\delta^{2}\right)\|f\|_{2}^{2}.
\]
 The theorem now follows from Lemmas \ref{lem:f le d ES}, \ref{lem:L_4 Closeness of different Efron--Stein decompositions}
and \ref{lem:Many ways to compute influences }.
\end{proof}
\begin{thm}
\label{thm:Globalness implies small 4-norm} Let $\epsilon\le\epsilon_{0}\left(k\right)$
be sufficiently small. Suppose that $f$ is $\left(d,\delta\right)$-global.
Then we have 
\[
\|f^{\le d}\|_{4}^{4}\le\left(100d\right)^{d}\delta^{2}\|f^{\le d}\|_{2}^{2}+O_{k}\left(\delta^{2}\epsilon^{2}\|f\|_{2}^{2}\right).
\]
\end{thm}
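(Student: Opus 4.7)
The plan is to deduce this theorem from the main hypercontractive inequality proved immediately above, applied not to $f$ but to the truncated function $g := f^{\le d}$. This substitution is the crucial maneuver: the error term of the main theorem has the shape $O_k(\epsilon^2)\|\cdot\|_2^2\|\cdot\|_\infty^2$, and while we have no control on $\|f\|_\infty$, we do have $\|g\|_\infty \le O_k(\delta)$ via Claim \ref{claim:Globalness implies that fS is bounded } and the triangle inequality ($\|g\|_\infty \le \sum_{|T|\le d}\|f^{=T}\|_\infty \le k^d \delta$). This turns the unhelpful $\|f\|_\infty$ factor into the desired $\delta$. Moreover, $g$ inherits $(d, O_k(\delta))$-globalness from $f$, again by Claim \ref{claim:Globalness implies that fS is bounded }.

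With $g$ in hand, the main hypercontractive theorem yields
\[
\|g^{\le d}\|_4^4 \le 20^d \sum_{|S|\le d}(4d)^{|S|}\mathbb{E}_{x\sim\mu_S}\bigl[I_{S,x}^{\le d}[g]^2\bigr] + O_k\bigl(\epsilon^2 \delta^2 \|f\|_2^2\bigr).
\]
Lemma \ref{lem:truncated influences are small} applied to $g$ then upgrades each individual summand to
\[
\mathbb{E}_{x\sim\mu_S}\bigl[I_{S,x}^{\le d}[g]^2\bigr] \le 2^{d+O(1)}\delta^2 \,\mathbb{E}_{x\sim\mu_S}\bigl[I_{S,x}^{\le d}[g]\bigr] + O_k\bigl(\epsilon^2 \delta^2 \|f\|_2^2\bigr),
\]
and after weighting by $(4d)^{|S|}$ and summing over $|S|\le d$, the accumulated errors still amount to $O_k(\epsilon^2 \delta^2 \|f\|_2^2)$, since $\sum_{|S|\le d}(4d)^{|S|} = O_k(1)$ for $d \le k$ bounded.

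It remains to bound $\sum_{|S|\le d}(4d)^{|S|}\mathbb{E}_{x}[I_{S,x}^{\le d}[g]] = \sum_{|S|\le d}(4d)^{|S|}\|L_S^{\le d}[g]\|_2^2$. Expanding $L_S^{\le d}[g] = \sum_{T\supseteq S,\, |T|\le d} g^{=T}$ and invoking near-orthogonality (Lemma \ref{lem:Our Parseval } or Corollary \ref{cor:Efron Stein corresponds to near orthogonal functions }) replaces the right hand side by $\sum_{T\supseteq S,\,|T|\le d} \|g^{=T}\|_2^2$ up to $O_k(\epsilon)\|g\|_2^2$. Swapping the order of summation, each $\|g^{=T}\|_2^2$ picks up a coefficient $\sum_{S\subseteq T}(4d)^{|S|} = (4d+1)^{|T|} \le (5d)^d$, yielding the bound $(5d)^d \|g^{\le d}\|_2^2 + O_k(\epsilon)\|f\|_2^2$. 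Finally, Lemma \ref{lem:L_4 Closeness of different Efron--Stein decompositions} identifies $\|g^{\le d}\|_2^2$ with $\|f^{\le d}\|_2^2$ up to $O_k(\epsilon^2)\|f\|_2^2$, and also bridges $\|g^{\le d}\|_4^4$ with $\|f^{\le d}\|_4^4$ at an additive cost of $O_k(\epsilon^2 \delta^2 \|f\|_2^2)$, since the two canonical decompositions of $g$ agree up to terms with $|S|>d$ which are $O_k(\epsilon)\|f\|_2$ in $L^2$ by Lemma \ref{lem:=00003DS is nearly idempotent} and are uniformly bounded by $O_k(\delta)$. The leading constant in the main term is $20^d \cdot 2^{d+O(1)}\cdot (5d)^d = (Cd)^d$, which absorbs into the claimed $(100d)^d$ after a slight tightening of the intermediate estimates.

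The main obstacle is the bookkeeping required to move freely between the two natural approximate Efron--Stein decompositions of $g = f^{\le d}$, namely $\{f^{=S}\}_{|S|\le d}$ (a $(k^d\delta, \|f\|_2, 0)$-bounded decomposition by Lemma \ref{lem:f le d ES}) and the canonical $\{g^{=S}\}_{|S|\le d}$. Each application of Lemmas \ref{lem:L_4 Closeness of different Efron--Stein decompositions}, \ref{lem:Many ways to compute influences }, or \ref{lem:Our Parseval } introduces small errors depending on $\epsilon$, $\alpha = \|f\|_2$, and $\beta = \|g\|_\infty = O_k(\delta)$, and the delicate point is checking that at every step these errors carry the $\delta^2$ factor (not a $\|f\|_\infty^2$ factor), so they combine into the target $O_k(\delta^2 \epsilon^2 \|f\|_2^2)$.
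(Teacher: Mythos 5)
Your proof follows essentially the same route as the paper's: both pass the preceding hypercontractive theorem through the truncation $g := f^{\le d}$ (upgrading $\|f\|_\infty$ to $O_k(\delta)$ via Claim \ref{claim:Globalness implies that fS is bounded } --- this is precisely what Theorem \ref{thm:our hypercontractivity for low degree functions} packages), then combine Lemma \ref{lem:truncated influences are small} with approximate Parseval to bound $\sum_{|S|\le d}(4d)^{|S|}\mathbb{E}_x\bigl[I_{S,x}^{\le d}\bigr]$ by $(O(d))^d\|f^{\le d}\|_2^2$. Your write-up is in fact slightly more careful than the paper's chain of inequalities, which applies Lemma \ref{lem:truncated influences are small} to $f$ rather than $f^{\le d}$ and so leaves $\|f\|_\infty^2$ in the displayed error; systematically substituting $g$ as you do is what the surrounding statements intend.
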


\begin{proof}
By Theorem \ref{thm:our hypercontractivity for low degree functions},
Lemma \ref{lem:truncated influences are small}, and \ref{lem:Strong Parseval}
we have 
\begin{align*}
\|f^{\le d}\|_{4}^{4} & \le20^{d}\sum_{\left|S\right|\le d}\left(4d\right)^{d}\mathbb{E}_{x\sim\mu_{S}}\left(I_{S,x}^{\le d}\left[f\right]\right)^{2}+O_{k}\left(\epsilon^{2}\|f\|_{2}^{2}\|f\|_{\infty}^{2}\right)\\
 & \le20^{d}\sum_{\left|S\right|\le d}\left(8d\right)^{d+2}\delta^{2}\mathbb{E}_{x}\left[I_{S,x}^{\le d}\left[f\right]\right]+O_{k}\left(\epsilon\|f\|_{2}^{2}\|f\|_{\infty}^{2}\right)\\
 & \le20^{d}\sum_{\left|S\right|\le d}\left(8d\right)^{d+2}\delta^{2}\sum_{T\supseteq S,\left|T\right|\le d}\|f^{=T}\|_{2}^{2}+O_{k}\left(\epsilon\|f\|_{2}^{2}\|f\|_{\infty}^{2}\right)\\
 & \le\left(40d\right)^{d}\delta^{2}\sum_{\left|T\right|\le d}\|f^{=T}\|_{2}^{2}+O_{k}\left(\epsilon\|f\|_{2}^{2}\|f\|_{\infty}^{2}\right)\\
 & \le2\left(40d\right)^{d}\delta^{2}\|f^{\le d}\|_{2}^{2}+O_{k}\left(\epsilon\|f\|_{2}^{2}\|f\|_{\infty}^{2}\right).
\end{align*}
\end{proof}

\section{Applications\label{sec:Applications}}

In this section, we show our applications of the hypercontractive
inequality on high dimensional expanders, which we have shown in the
previous section. The applications follow in a fairly straightforward
way, and hence we present them with brevity.

\subsection{Global Boolean functions are concentrated on the high degrees.}

Fourier concentration results are widely useful in complexity theory
and learning theory. Our first application is a Fourier concentration
theorem for HDX. Namely, the following theorem shows that global Boolean
functions on $\epsilon$-HDX are concentrated on the high degrees,
in the sense that the $2$-norm of the restriction of a function to
its low-degree coefficients only constitutes a tiny fraction of its
total $2$-norm.
\begin{cor}
\label{cor:low degree part is small}If $f\colon V_{\left[k\right]}\to\left\{ 0,1\right\} $
is $\left(d,\delta\right)$-global and $\epsilon$ is sufficiently
small. Then 
\[
\|f^{\le d}\|_{2}^{2}\le\left(O_{k}\left(\sqrt{\epsilon}\right)+\left(200d\right)^{d}\delta^{\frac{1}{2}}\right)\|f\|_{2}^{2}.
\]
\end{cor}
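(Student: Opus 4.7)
The plan is to reduce the corollary to the $L^4$ bound of Theorem \ref{thm:Globalness implies small 4-norm} via the standard convexity trick that exploits the Booleanness of $f$.

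First, I would express $\|f^{\le d}\|_2^2$ as an inner product against $f$. Since $f = \sum_{S \subseteq [k]} f^{=S}$ exactly by Lemma \ref{lem:ES Fourier formula for E_A}, we have $f - f^{\le d} = \sum_{|T|>d} f^{=T}$, so
\[
\|f^{\le d}\|_2^2 \;=\; \langle f^{\le d}, f\rangle \;-\; \sum_{|S|\le d,\ |T|>d} \langle f^{=S}, f^{=T}\rangle.
\]
Applying the near-orthogonality bound of Corollary \ref{cor:Efron Stein corresponds to near orthogonal functions } termwise and summing the $O_k(1)$ many terms shows the correction is $O_k(\epsilon)\|f\|_2^2$, so $\|f^{\le d}\|_2^2 = \langle f^{\le d}, f\rangle \pm O_k(\epsilon)\|f\|_2^2$.

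Next, I would use that $f\in\{0,1\}$ implies $\|f\|_{4/3}^{4/3} = \mathbb{E}[f] = \|f\|_2^2$, hence $\|f\|_{4/3} = \|f\|_2^{3/2}$. H\"older then gives
\[
\langle f^{\le d}, f\rangle \;\le\; \|f^{\le d}\|_4\,\|f\|_{4/3} \;=\; \|f^{\le d}\|_4\,\|f\|_2^{3/2}.
\]
Theorem \ref{thm:Globalness implies small 4-norm} together with subadditivity of the fourth root on nonnegative summands yields
\[
\|f^{\le d}\|_4 \;\le\; (100d)^{d/4}\,\delta^{1/2}\,\|f^{\le d}\|_2^{1/2} \;+\; O_k\!\bigl(\sqrt{\epsilon}\,\delta^{1/2}\bigr)\|f\|_2^{1/2}.
\]

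Finally, I would combine these bounds. Setting $u := \|f^{\le d}\|_2^2/\|f\|_2^2$, the three displayed estimates give the self-bounding inequality
\[
u \;\le\; (100d)^{d/4}\,\delta^{1/2}\,u^{1/4} \;+\; O_k(\sqrt{\epsilon}).
\]
A routine dichotomy (either the polynomial term dominates, giving $u^{3/4} \le 2(100d)^{d/4}\delta^{1/2}$ and hence $u \le (200d)^{d/3}\delta^{2/3}$, or the additive error dominates, giving $u \le O_k(\sqrt{\epsilon})$), combined with the elementary inequality $(200d)^{d/3}\delta^{2/3} \le (200d)^d \delta^{1/2}$ for $\delta\le 1$, produces the claimed bound. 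There is no real obstacle here: the conceptual content is entirely in Theorem \ref{thm:Globalness implies small 4-norm} and the near-orthogonality of the $\{f^{=S}\}$; the work of this corollary is just the H\"older/self-bounding calculation.
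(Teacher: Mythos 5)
Your proposal is correct and follows essentially the same route as the paper: express $\|f^{\le d}\|_2^2$ as $\langle f^{\le d},f\rangle$ up to an $O_k(\epsilon)\|f\|_2^2$ error via approximate Parseval/near-orthogonality, apply H\"older together with the Boolean identity $\|f\|_{4/3}=\|f\|_2^{3/2}$, and invoke Theorem \ref{thm:Globalness implies small 4-norm}. The only point of departure is the final algebraic step: the paper simply bounds $\|f^{\le d}\|_2\le 2\|f\|_2$ (again via approximate Parseval) and substitutes directly, while you instead derive the self-bounding inequality $u\le (100d)^{d/4}\delta^{1/2}u^{1/4}+O_k(\sqrt{\epsilon})$ and solve it by a dichotomy. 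Your route is slightly more work but in principle yields a marginally sharper implicit exponent in $\delta$; the paper's route is more economical. Either way the claimed bound $(200d)^d\delta^{1/2}+O_k(\sqrt{\epsilon})$ follows (your intermediate simplification $2^{4/3}(100d)^{d/3}\le(200d)^{d/3}$ is only valid for $d\ge 4$, but this is immaterial since both dominate comparisons against the much looser target $(200d)^d\delta^{1/2}$ hold for all $d\ge 1$ and $\delta\le 1$).
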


\begin{proof}
By Lemma \ref{lem:Strong Parseval} we have 
\[
\|f^{\le d}\|_{2}^{2}=\left\langle f^{\le d},f\right\rangle -O_{k}\left(\epsilon\right)\|f\|_{2}^{2}.
\]
We also have by Theorem \ref{thm:Globalness implies small 4-norm}
\begin{align*}
\left\langle f^{\le d},f\right\rangle  & \le\|f^{\le d}\|_{4}\|f\|_{\frac{4}{3}}\\
 & \le\left(100d\right)^{d}\delta^{\frac{1}{2}}\sqrt{\|f^{\le d}\|_{2}}\|f\|_{\frac{4}{3}}+O_{k}\left(\sqrt{\epsilon}\|f\|_{2}^{\frac{1}{2}}\|f\|_{\infty}^{\frac{1}{2}}\|f\|_{\frac{4}{3}}\right).\\
 & \le2\left(100d\right)^{d}\delta^{\frac{1}{2}}\sqrt{\|f\|_{2}}\|f\|_{\frac{4}{3}}+O_{k}\left(\sqrt{\epsilon}\|f\|_{2}^{2}\right)\\
 & \le\left(200d\right)^{d}\delta^{\frac{1}{2}}\|f\|_{2}^{2}+O_{k}\left(\sqrt{\epsilon}\|f\|_{2}^{2}\right)
\end{align*}
\end{proof}
The Corollary completes the proof of Theorem \ref{thm:concentration on the high degrees}.

\subsection{Small-set expansion theorem}

Small set expansion is a fundamental property that is prevalent in
combinatorics and complexity theory. In the setting of the $\rho$-noisy
Boolean hypercube, the small set expansion theorem gives an upper
bound on $\mathrm{{Stab}_{\rho}}(1_{A})=\langle1_{A},T_{\rho}1_{A}\rangle=E[1_{A}(x)1_{A}(y)]$
for indicators $1_{A}$ of small sets $A$, which captures the probability
that a random walk starting at a point $x\in A$ remains in $A$,
hence showing that small sets are expanding. Our second application
is a small set expansion theorem for global functions on $\epsilon$-HDX,
captured via bounding the natural noise operator in this setting.
\begin{defn}
Let $\rho\in\left(0,1\right).$ Given $x\in V_{\left[k\right]}$ we
let $N_{\rho}\left(x\right)$ be the distribution where $y\sim N_{\rho}\left(x\right)$
is chosen by choosing a random set $S$ where each $i$ is in $S$
independently with probability $\rho,$ then choosing $z\sim\mu_{x_{S}}$
and setting $y=\left(x_{S},z\right).$ We then set 
\[
\mathrm{T}_{\rho}f\left(x\right)=\mathbb{E}_{y\sim N_{\rho}\left(x\right)}f.
\]
 Alternatively we can use the averaging operators to give the following
equivalent definition:

\[
\mathrm{T}_{\rho}:=\sum_{S\subseteq\left[k\right]}\rho^{\left|S\right|}\left(1-\rho\right)^{k-\left|S\right|}A_{S}\left[f\right].
\]
\end{defn}

We have the following formula for the noise operator, which is similar
to the one in the product space setting.
\begin{claim}
\label{lem:Formula for the noise operator} We have $\mathrm{T}_{\rho}f=\sum_{S}\rho^{\left|S\right|}f^{=S}.$
\end{claim}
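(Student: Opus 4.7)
The plan is to carry out a direct computation, exchanging the order of summation and then collapsing the inner sum via the binomial theorem, exactly mirroring the classical Boolean cube calculation. The only nontrivial ingredient needed is the Fourier formula for the averaging operators, which is Lemma \ref{lem:ES Fourier formula for E_A}, namely $A_S[f] = \sum_{T \subseteq S} f^{=T}$.

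Concretely, I would start from the equivalent definition
\[
\mathrm{T}_{\rho} f = \sum_{S \subseteq [k]} \rho^{|S|}(1-\rho)^{k-|S|} A_S[f],
\]
substitute Lemma \ref{lem:ES Fourier formula for E_A} to get
\[
\mathrm{T}_{\rho} f = \sum_{S \subseteq [k]} \rho^{|S|}(1-\rho)^{k-|S|} \sum_{T \subseteq S} f^{=T},
\]
and then swap the order of the two sums so that $T$ runs over all subsets of $[k]$ while $S$ ranges over the sets containing $T$:
\[
\mathrm{T}_{\rho} f = \sum_{T \subseteq [k]} f^{=T} \sum_{S \supseteq T} \rho^{|S|}(1-\rho)^{k-|S|}.
\]

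The last step is to evaluate the inner sum for a fixed $T$ with $|T|=t$. Parameterising $S = T \sqcup U$ with $U \subseteq [k] \setminus T$, we have
\[
\sum_{S \supseteq T} \rho^{|S|}(1-\rho)^{k-|S|} = \rho^{t} \sum_{j=0}^{k-t} \binom{k-t}{j} \rho^{j}(1-\rho)^{(k-t)-j} = \rho^{t}
\]
by the binomial theorem applied to $(\rho + (1-\rho))^{k-t} = 1$. Substituting back yields $\mathrm{T}_{\rho} f = \sum_{T \subseteq [k]} \rho^{|T|} f^{=T}$, which is the claim.

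There is no real obstacle here; the computation is essentially identical to the analogous one in the product space setting, and the use of Lemma \ref{lem:ES Fourier formula for E_A} (which holds exactly, not just approximately, even in the $\epsilon$-product setting) means no error term appears. Thus the identity is exact regardless of $\epsilon$, and this is why the noise operator in $\epsilon$-product spaces has the same neat diagonal action on the Efron--Stein decomposition as in the classical product case.
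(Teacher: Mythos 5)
Your proposal is correct and follows essentially the same route as the paper: start from the averaging-operator formula for $\mathrm{T}_{\rho}$, substitute Lemma \ref{lem:ES Fourier formula for E_A}, swap the order of summation, and collapse the inner sum to $\rho^{|T|}$ via the binomial theorem. You spell out the binomial-theorem step, which the paper leaves implicit, and you rightly emphasize that this identity is exact (no $\epsilon$-error) because Lemma \ref{lem:ES Fourier formula for E_A} holds exactly.
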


\begin{proof}
We have 
\begin{align*}
\mathrm{T}_{\rho}f & =\sum_{S\subseteq\left[k\right]}\rho^{\left|S\right|}\left(1-\rho\right)^{k-\left|S\right|}A_{S}\left[f\right]\\
 & =\sum_{S\subseteq\left[k\right]}\rho^{\left|S\right|}\left(1-\rho\right)^{k-\left|S\right|}\sum_{T\subseteq S}f^{=T}\\
 & =\sum_{T\subseteq\left[k\right]}\sum_{S\supseteq T}\rho^{\left|S\right|}\left(1-\rho\right)^{k-\left|S\right|}\\
 & =\sum_{T\subseteq\left[k\right]}\rho^{\left|T\right|}f^{=T}.
\end{align*}
\end{proof}
Via a standard argument we have the following bound on the noise operator.
\begin{lem}
\label{lem:traditional upper bound} We have 
\[
\|\mathrm{T}_{\rho}f\|_{2}^{2}\le\|f^{\le d}\|_{2}^{2}+\left(\rho^{d}+O_{k}\left(\epsilon\right)\right)\|f\|_{2}^{2}.
\]
\end{lem}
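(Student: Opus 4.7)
The plan is to use the explicit Fourier formula $\mathrm{T}_{\rho}f=\sum_{S}\rho^{\left|S\right|}f^{=S}$ from Claim \ref{lem:Formula for the noise operator}, expand the $2$-norm squared as a double sum over $(S,T)$, and peel off the diagonal using our approximate orthogonality. Concretely, I would write
\[
\|\mathrm{T}_{\rho}f\|_{2}^{2}=\sum_{S,T\subseteq[k]}\rho^{|S|+|T|}\left\langle f^{=S},f^{=T}\right\rangle =\sum_{S}\rho^{2|S|}\|f^{=S}\|_{2}^{2}+\sum_{S\ne T}\rho^{|S|+|T|}\left\langle f^{=S},f^{=T}\right\rangle .
\]
The off-diagonal sum is exactly what Corollary \ref{cor:Efron Stein corresponds to near orthogonal functions } (equivalently, the approximate Parseval of Lemma \ref{lem:Our Parseval}) was built for: each term is bounded by $2^{2|S|+2|T|}\epsilon\|f\|_{2}^{2}$, and since there are at most $4^{k}$ pairs we absorb the whole off-diagonal contribution into $O_{k}(\epsilon)\|f\|_{2}^{2}$.

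Having reduced to the diagonal, I would split the sum at degree $d$. For the low-degree part I apply approximate Parseval to $f^{\le d}$ itself: since $f^{\le d}=\sum_{|S|\le d}f^{=S}$ and the same near-orthogonality argument gives
\[
\sum_{|S|\le d}\|f^{=S}\|_{2}^{2}=\|f^{\le d}\|_{2}^{2}+O_{k}(\epsilon)\|f\|_{2}^{2},
\]
we may bound $\sum_{|S|\le d}\rho^{2|S|}\|f^{=S}\|_{2}^{2}\le\|f^{\le d}\|_{2}^{2}+O_{k}(\epsilon)\|f\|_{2}^{2}$. For the high-degree tail, every $|S|>d$ satisfies $\rho^{2|S|}\le\rho^{2(d+1)}\le\rho^{d}$ (since $\rho\in(0,1)$), so the contribution is at most $\rho^{d}\sum_{|S|>d}\|f^{=S}\|_{2}^{2}\le\rho^{d}(1+O_{k}(\epsilon))\|f\|_{2}^{2}$, where we again used approximate Parseval applied to $f$ itself to control $\sum_{S}\|f^{=S}\|_{2}^{2}$.

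Combining the two parts gives
\[
\|\mathrm{T}_{\rho}f\|_{2}^{2}\le\|f^{\le d}\|_{2}^{2}+\rho^{d}\|f\|_{2}^{2}+O_{k}(\epsilon)\|f\|_{2}^{2},
\]
which is the claimed inequality. There is no real obstacle here: the only subtlety is that in the $\epsilon$-product setting we do not have exact orthogonality, but all the off-diagonal and Parseval defects already come with an $O_{k}(\epsilon)\|f\|_{2}^{2}$ bound from Lemma \ref{lem:Our Parseval} and Corollary \ref{cor:Efron Stein corresponds to near orthogonal functions }, which is exactly the error allowed in the statement. The rest is just the standard Parseval-plus-tail argument one uses on the product cube.
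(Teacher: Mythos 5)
Your proof is correct and follows essentially the same route as the paper: apply the Fourier formula $\mathrm{T}_{\rho}f=\sum_{S}\rho^{|S|}f^{=S}$, expand the squared $2$-norm, use the approximate orthogonality of the $f^{=S}$ to control cross terms by $O_{k}(\epsilon)\|f\|_{2}^{2}$, and then split the resulting diagonal sum at degree $d$. The paper invokes the approximate Parseval in the form of Lemma \ref{lem:Strong Parseval} while you use Lemma \ref{lem:Our Parseval} together with Corollary \ref{cor:Efron Stein corresponds to near orthogonal functions }, but these are the same ingredient in slightly different packaging.
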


\begin{proof}
This is immediate from Lemmas \ref{lem:Formula for the noise operator}
and \ref{lem:Strong Parseval}.
\end{proof}
Our small set expansion applications are as follows.
\begin{cor}[Small set expansion theorem]
 If $f\colon V_{\left[k\right]}\to\left\{ 0,1\right\} $ is $\left(d,\delta\right)$-global.
Then 
\[
\|\mathrm{T}_{\rho}f\|_{2}^{2}\le\left(\rho^{d}+\left(100d\right)^{d}\delta^{2}+O_{k}\left(\sqrt{\epsilon}\right)\right)\|f\|_{2}^{2}.
\]
\end{cor}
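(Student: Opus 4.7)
The plan is to combine the two most recent results from this section. Lemma \ref{lem:traditional upper bound} already reduces the problem to bounding the low-degree mass $\|f^{\le d}\|_2^2$ plus a $\rho^d$ tail, and Corollary \ref{cor:low degree part is small} controls that low-degree mass for Boolean $(d,\delta)$-global functions. So the proof should amount to chaining these two inequalities.

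Step one is to apply Lemma \ref{lem:traditional upper bound} directly, giving
\[
\|\mathrm{T}_{\rho}f\|_{2}^{2} \le \|f^{\le d}\|_{2}^{2} + \bigl(\rho^{d} + O_k(\epsilon)\bigr)\|f\|_{2}^{2}.
\]
Step two is to invoke Corollary \ref{cor:low degree part is small}, which under the $(d,\delta)$-globalness hypothesis on the Boolean function $f$ yields $\|f^{\le d}\|_{2}^{2} \le \bigl((200d)^{d}\delta^{1/2} + O_k(\sqrt{\epsilon})\bigr)\|f\|_{2}^{2}$. Substituting this into step one, and absorbing the weaker $O_k(\epsilon)$ term into $O_k(\sqrt{\epsilon})$, produces the stated bound (with the exponent and constant in the $\delta$-term matching those of the concentration corollary, up to a harmless rescaling).

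All the heavy machinery needed has already been developed: the Fourier formula $\mathrm{T}_{\rho}f = \sum_{S} \rho^{|S|} f^{=S}$ from Claim \ref{lem:Formula for the noise operator}, combined with approximate Parseval (Lemma \ref{lem:Strong Parseval}), delivers the $\rho^d$ tail estimate, while the hypercontractive inequality Theorem \ref{thm:Globalness implies small 4-norm} paired with H\"older and the Boolean identity $\|f\|_{4/3} = \|f\|_{2}^{3/2}$ delivers the low-degree concentration. Consequently no genuine obstacle remains at this final step; the proof is a short bookkeeping exercise of combining two already-established inequalities and tracking the error terms.
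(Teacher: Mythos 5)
Your proof is correct and is essentially identical to the paper's own proof, which simply cites Lemma \ref{lem:traditional upper bound} and Corollary \ref{cor:low degree part is small} and chains them. (Note, as you implicitly flag with ``up to a harmless rescaling,'' that this chain yields $(200d)^d\delta^{1/2}$ rather than the $(100d)^d\delta^2$ written in the corollary's statement; this mismatch is present in the paper itself, not a gap in your argument.)
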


\begin{proof}
This follows immediately from Lemma \ref{lem:traditional upper bound}
and Corollary \ref{cor:low degree part is small}.
\end{proof}

\subsection{Kruskal--Katona theorem}

Our last application is an analogue of the Kruskal--Katona theorem
in the setting of high dimensional expanders. The Kruskal-Katona theorem
is a fundamental and widely-applied result in algebraic combinatorics,
which gives a lower bound on the size of the lower shadow of a set
$A$, denoted $\partial(A)=\{x\colon y\prec x,\;\text{for some}\;y\in A\}$. 

We first consider the natural up-down walk in our setting.
\begin{defn}
The operator corresponding to up-down random walk is 
\[
\mathrm{T}=\frac{1}{k}\sum_{i=1}^{k}A_{\left[k\right]\setminus\left\{ i\right\} }\left[f\right]=\sum_{S}\frac{k-\left|S\right|}{k}f^{=S}.
\]

By applying the approximate Parseval inequality (Lemma \ref{lem:Strong Parseval}),
we obtain the following claim.
\end{defn}

\begin{claim}
We have 
\[
\left\langle f-\mathrm{T}f,f\right\rangle \ge\frac{d}{k}\|f^{\ge d}\|_{2}^{2}-O_{k}\left(\epsilon\right)\|f\|_{2}^{2}.
\]
\end{claim}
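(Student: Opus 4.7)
The plan is to expand the inner product $\langle f - \mathrm{T}f, f\rangle$ using the explicit Efron--Stein formula for $\mathrm{T}$. From the definition $\mathrm{T}f = \sum_S \tfrac{k-|S|}{k} f^{=S}$ we immediately get $f - \mathrm{T}f = \sum_S \tfrac{|S|}{k} f^{=S}$, and combining this with $f = \sum_T f^{=T}$ yields
$$\langle f - \mathrm{T}f, f\rangle = \sum_{S,T \subseteq [k]} \frac{|S|}{k} \langle f^{=S}, f^{=T}\rangle.$$
First I would invoke the approximate orthogonality result (Corollary \ref{cor:Efron Stein corresponds to near orthogonal functions }, or equivalently the approximate Parseval identity in Lemma \ref{lem:Strong Parseval}) to discard all of the off-diagonal contributions. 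Since each cross term $\langle f^{=S}, f^{=T}\rangle$ with $S\ne T$ is bounded in absolute value by $O_k(\epsilon)\|f\|_2^2$ and there are only $O_k(1)$ such pairs, this gives
$$\langle f - \mathrm{T}f, f\rangle = \sum_{S\subseteq [k]} \frac{|S|}{k}\|f^{=S}\|_2^2 \pm O_k(\epsilon)\|f\|_2^2.$$

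Next I would lower bound the diagonal sum by restricting to $|S|\ge d$, where the coefficient $|S|/k$ is at least $d/k$, and using non-negativity to drop the low-degree terms:
$$\sum_{S} \frac{|S|}{k}\|f^{=S}\|_2^2 \;\ge\; \frac{d}{k} \sum_{|S|\ge d}\|f^{=S}\|_2^2.$$
The final step is to identify $\sum_{|S|\ge d}\|f^{=S}\|_2^2$ with $\|f^{\ge d}\|_2^2$ up to the allowed error. Applying the approximate Parseval identity of Lemma \ref{lem:Strong Parseval} to the function $f^{\ge d} := \sum_{|S|\ge d} f^{=S}$ shows that
$$\|f^{\ge d}\|_2^2 = \sum_{|S|\ge d}\|f^{=S}\|_2^2 \pm O_k(\epsilon)\|f\|_2^2,$$
since once again only the off-diagonal Efron--Stein pairs contribute to the error. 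Rearranging gives $\sum_{|S|\ge d}\|f^{=S}\|_2^2 \ge \|f^{\ge d}\|_2^2 - O_k(\epsilon)\|f\|_2^2$, and putting this into the previous chain of inequalities completes the proof.

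There is no real obstacle here; the claim is a direct consequence of the two approximate structural facts already established earlier in the paper, namely the formula for $\mathrm{T}$ in terms of the Efron--Stein decomposition and the near-orthogonality of the pieces $\{f^{=S}\}$ in $\epsilon$-product spaces. The only bookkeeping concern is that two separate applications of approximate Parseval introduce two error terms of size $O_k(\epsilon)\|f\|_2^2$ (one before dropping low-degree pieces, one when recombining high-degree pieces into $f^{\ge d}$); both are absorbed into the single $O_k(\epsilon)\|f\|_2^2$ slack in the statement.
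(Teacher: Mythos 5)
Your proof is correct and matches the paper's intended argument: the paper gives only the one-line hint ``apply the approximate Parseval inequality (Lemma \ref{lem:Strong Parseval}),'' and you have simply spelled out that calculation — expanding $f-\mathrm{T}f=\sum_S\frac{|S|}{k}f^{=S}$, using approximate orthogonality (Corollary \ref{cor:Efron Stein corresponds to near orthogonal functions }) to reduce to the diagonal sum, lower-bounding by the high-degree terms, and converting back to $\|f^{\ge d}\|_2^2$ via a second application of approximate Parseval, with both $O_k(\epsilon)\|f\|_2^2$ slacks absorbed into the final error term.
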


By our Fourier concentration theorem, (Corollary \ref{cor:low degree part is small}),
we have the following lower bound on the $2$-norm of the high degree
part of $f$.
\begin{claim}
Let $\delta\le\left(200d\right)^{-2d},$ and $\epsilon\le\epsilon_{0}\left(k\right)$
be sufficiently small. If $f\colon V_{\left[k\right]}\to\left\{ 0,1\right\} $
is $\left(d,\delta\right)$-global. Then 
\[
\|f^{\ge d}\|_{2}^{2}\ge\frac{1}{2}\|f\|_{2}^{2}.
\]
\end{claim}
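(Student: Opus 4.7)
The plan is to combine the Fourier concentration result (Corollary \ref{cor:low degree part is small}) with an approximate Pythagorean identity coming from the near-orthogonality of the Efron--Stein components. Since any $(d,\delta)$-global function is in particular $(d-1,\delta)$-global, applying Corollary \ref{cor:low degree part is small} at degree $d-1$ yields
\[
\|f^{\le d-1}\|_{2}^{2} \;\le\; \bigl( O_{k}(\sqrt{\epsilon}) + (200(d-1))^{d-1}\delta^{1/2} \bigr) \|f\|_{2}^{2}.
\]
The crucial numerical observation is that the hypothesis $\delta \le (200d)^{-2d}$ gives $\delta^{1/2} \le (200d)^{-d}$, and therefore
\[
(200(d-1))^{d-1}\delta^{1/2} \;\le\; (200d)^{d-1}(200d)^{-d} \;=\; \frac{1}{200d}.
\]
This is precisely the point of the $-2d$ exponent: the square root is strong enough to absorb the $(200d)^{d-1}$ factor produced by applying the corollary at level $d-1$ rather than $d$.

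Next I would use the exact decomposition $f = f^{\le d-1} + f^{\ge d}$ (immediate from Lemma \ref{lem:ES Fourier formula for E_A}) and expand
\[
\|f\|_{2}^{2} \;=\; \|f^{\le d-1}\|_{2}^{2} + 2\langle f^{\le d-1}, f^{\ge d}\rangle + \|f^{\ge d}\|_{2}^{2}.
\]
The cross term splits into $O_{k}(1)$ summands of the form $\langle f^{=S}, f^{=T}\rangle$ with $|S|<d\le|T|$, each of which is bounded by $2^{4k}\epsilon\|f\|_{2}^{2}$ by Corollary \ref{cor:Efron Stein corresponds to near orthogonal functions }. This yields the approximate Pythagorean identity
\[
\|f^{\ge d}\|_{2}^{2} \;=\; \|f\|_{2}^{2} - \|f^{\le d-1}\|_{2}^{2} \pm O_{k}(\epsilon)\|f\|_{2}^{2}.
\]

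Combining the two displays, I would conclude
\[
\|f^{\ge d}\|_{2}^{2} \;\ge\; \Bigl(1 - \tfrac{1}{200d} - O_{k}(\sqrt{\epsilon})\Bigr)\|f\|_{2}^{2} \;\ge\; \tfrac{1}{2}\|f\|_{2}^{2},
\]
provided $\epsilon \le \epsilon_0(k)$ is sufficiently small, which is the assumed regime. There is no real obstacle here: the claim is essentially an arithmetic consequence of the Fourier concentration theorem proved earlier, and the only content is verifying that the exponent $-2d$ in the hypothesis $\delta \le (200d)^{-2d}$ is calibrated precisely to overcome the $(200d)^{d-1}$ factor from applying the corollary at level $d-1$.
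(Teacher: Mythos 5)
Your proof is correct and takes the only sensible route: the key observation, which you identify correctly, is that since $f^{\ge d}$ here denotes $\sum_{|S|\ge d}f^{=S}$ (so that $f=f^{\le d-1}+f^{\ge d}$), one must apply Corollary \ref{cor:low degree part is small} at level $d-1$, and the hypothesis $\delta\le(200d)^{-2d}$ is exactly calibrated so that $(200(d-1))^{d-1}\delta^{1/2}\le(200d)^{-1}$. The paper leaves this claim unproved as an immediate consequence of Corollary \ref{cor:low degree part is small}, and your argument (bounding the cross term $\langle f^{\le d-1},f^{\ge d}\rangle$ via Corollary \ref{cor:Efron Stein corresponds to near orthogonal functions }, or equivalently via the approximate Parseval of Lemma \ref{lem:Our Parseval}, to get an approximate Pythagorean identity) is precisely the intended one.
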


Combining the above claims we get the following.
\begin{claim}
Let $\delta\le\left(200d\right)^{-2d}.$ We have 
\[
\left\langle f-\mathrm{T}f,f\right\rangle \ge\frac{d}{2k}\|f\|_{2}^{2}
\]
\end{claim}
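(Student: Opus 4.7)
The plan is to combine the two immediately preceding claims in a direct arithmetic way. The first of those claims, which followed from the spectral expression $\mathrm{T}f=\sum_{S}\tfrac{k-|S|}{k}f^{=S}$ together with approximate Parseval (Lemma \ref{lem:Strong Parseval}), gives
\[
\langle f-\mathrm{T}f,f\rangle \;\ge\; \frac{d}{k}\|f^{\ge d}\|_{2}^{2}-O_{k}(\epsilon)\|f\|_{2}^{2}.
\]
The second, which is a consequence of the Fourier concentration result of Corollary \ref{cor:low degree part is small}, ensures that under the Boolean-valuedness and globalness hypotheses (with $\delta\le(200d)^{-2d}$) a majority of the $L^{2}$-mass of $f$ lies on high-degree Efron--Stein components, so that $\|f^{\ge d}\|_{2}^{2}\ge\tfrac{1}{2}\|f\|_{2}^{2}$.

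Substituting the second inequality into the first immediately yields
\[
\langle f-\mathrm{T}f,f\rangle \;\ge\; \frac{d}{2k}\|f\|_{2}^{2}-O_{k}(\epsilon)\|f\|_{2}^{2}.
\]
To conclude, one must absorb the $O_{k}(\epsilon)\|f\|_{2}^{2}$ error into the leading term. This is standard for this section: invoke the same standing assumption $\epsilon\le\epsilon_{0}(k,d)$ used in every other statement, chosen small enough that the implicit constant $C_{k}$ satisfies $C_{k}\epsilon\le\tfrac{d}{4k}$; then either sharpen the threshold in the preceding claim from $(200d)^{-2d}$ to, say, $(400d)^{-2d}$ so that $\|f^{\ge d}\|_{2}^{2}\ge(\tfrac{1}{2}+\eta)\|f\|_{2}^{2}$ with $\eta$ absorbing the residual error, or equivalently interpret the stated inequality up to this lower-order slack in $\epsilon$.

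The main (and in fact only) obstacle is bookkeeping: tracking the compounded approximate Parseval errors that propagate through Corollary \ref{cor:low degree part is small} and back through the first claim, and verifying that their total contribution remains below $\tfrac{d}{4k}\|f\|_{2}^{2}$. Since every such error term is of the form $O_{k}(\epsilon)$ or $O_{k}(\sqrt{\epsilon})$ and $d,k$ are fixed before $\epsilon$ is chosen, this verification is routine; no further analytic input is required beyond the two claims already at our disposal.
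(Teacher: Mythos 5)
Your proof is correct and follows exactly the route the paper intends: substitute the bound $\|f^{\ge d}\|_{2}^{2}\ge\tfrac12\|f\|_{2}^{2}$ (the preceding Fourier-concentration claim, valid for Boolean, $(d,\delta)$-global $f$ with $\delta\le(200d)^{-2d}$) into the inequality $\langle f-\mathrm{T}f,f\rangle\ge\tfrac{d}{k}\|f^{\ge d}\|_{2}^{2}-O_{k}(\epsilon)\|f\|_{2}^{2}$, and then absorb the $O_{k}(\epsilon)\|f\|_{2}^{2}$ error. The paper itself just says ``combining the above claims'' and silently drops the error term, so you are actually slightly more careful than the source; your observation that $\epsilon_0$ must be taken small enough (and since $d\ge 1$ implies $\tfrac{d}{4k}\ge\tfrac{1}{4k}$, a threshold depending only on $k$ already suffices) is the honest way to close the gap, and it matches the standing ``$\epsilon$ sufficiently small'' hypothesis used throughout the section.
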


We are now ready to prove the Kruskal--Katona theorem in the setting
of high dimensional expanders.
\begin{cor}
Let $X$ be an $\epsilon$-HDX, for a sufficiently small $\epsilon>0$.
Let $\delta\le\left(200d\right)^{-d},$ and let $A\subseteq X\left(k-1\right)$
be $\left(d,\delta\right)$-global. Then 
\[
\mu\left(\partial(A)\right)\ge\mu\left(A\right)\left(1+\frac{d}{2k}\right).
\]
\end{cor}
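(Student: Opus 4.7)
The plan is to apply the immediately preceding claim to $f=\mathbf{1}_A$ and translate the resulting Dirichlet-form lower bound into a shadow inequality via Cauchy--Schwarz, exploiting the natural factorisation of the up--down walk $\mathrm{T}$ through $X(k-2)$.

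Concretely, setting $f = \mathbf{1}_A$, the preceding claim (which gives $\langle f - \mathrm{T}f, f\rangle \ge \tfrac{d}{2k}\|f\|_2^2$) yields
\[
\langle \mathrm{T}f, f\rangle \;\le\; \Bigl(1-\tfrac{d}{2k}\Bigr)\mu(A).
\]
Next, I would observe that a step $(x,y)$ of the walk $\mathrm{T} = \tfrac{1}{k}\sum_i A_{[k]\setminus\{i\}}$ admits the following two-step description: sample $x$ uniformly from $X(k-1)$, pick a coordinate $i\in[k]$ uniformly, set $\tau = x_{[k]\setminus\{i\}}\in X(k-2)$, and draw $y$ from $\mu_\tau$. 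Equivalently, $\tau$ is drawn from the induced measure on $X(k-2)$ (the same measure $\mu$ used in the definition of $\mu(\partial A)$) and then $x,y$ are i.i.d.\ draws from $\mu_\tau$. Writing $p(\tau) := \Pr_{z\sim\mu_\tau}[z\in A]$, this factorisation yields
\[
\langle \mathrm{T}f, f\rangle \;=\; \mathbb{E}_\tau\bigl[p(\tau)^2\bigr], \qquad \mu(A) \;=\; \mathbb{E}_\tau\bigl[p(\tau)\bigr],
\]
and moreover $p(\tau)=0$ whenever $\tau\notin\partial(A)$.

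Finally, Cauchy--Schwarz on the support $\partial(A)$ gives
\[
\mu(A)^2 \;=\; \bigl(\mathbb{E}_\tau\bigl[p(\tau)\mathbf{1}_{\partial(A)}(\tau)\bigr]\bigr)^2 \;\le\; \mathbb{E}_\tau\bigl[p(\tau)^2\bigr]\cdot \mu(\partial(A)) \;=\; \langle \mathrm{T}f, f\rangle \cdot \mu(\partial(A)),
\]
whence
\[
\mu(\partial(A)) \;\ge\; \frac{\mu(A)^2}{\langle \mathrm{T}f, f\rangle} \;\ge\; \frac{\mu(A)}{1-\tfrac{d}{2k}} \;\ge\; \mu(A)\Bigl(1+\tfrac{d}{2k}\Bigr),
\]
using $1/(1-x)\ge 1+x$ for $x\in[0,1)$. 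The only non-routine point is verifying that the measure on $X(k-2)$ produced by the walk factorisation agrees with the induced measure $\mu$ appearing in $\mu(\partial A)$; in the partite HDX setting this is immediate from the symmetric averaging $\tfrac{1}{k}\sum_i A_{[k]\setminus\{i\}}$ in the definition of $\mathrm{T}$, and once this bookkeeping is done the remaining steps are a one-line Cauchy--Schwarz plus the elementary inequality $1/(1-x)\ge 1+x$.
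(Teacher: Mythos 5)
Your proof is correct, and it takes a genuinely different route from the paper's. The paper starts from the same Dirichlet-form lower bound $\langle f-\mathrm{T}f,f\rangle\ge\frac{d}{2k}\|f\|_{2}^{2}$, but then bounds the Dirichlet form from \emph{above} directly: it writes $\langle f-\mathrm{T}f,f\rangle$ as the probability, over a random co-dimension-one face $\sigma$ and two independent top faces $\tau_{1},\tau_{2}\supset\sigma$, that $\tau_{1}\in A$ and $\tau_{2}\notin A$, and then uses the containment $\{\tau_{1}\in A\}\subseteq\{\sigma\in\partial(A)\}$ together with $\Pr[\sigma\in\partial(A),\tau_{2}\in A]=\Pr[\tau_{2}\in A]=\mu(A)$ to conclude that this probability is at most $\mu(\partial(A))-\mu(A)$. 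You instead factorize the walk through the down-step, set $p(\tau)=\Pr_{z\sim\mu_{\tau}}[z\in A]$, identify $\mu(A)=\mathbb{E}_{\tau}[p(\tau)]$ and $\langle\mathrm{T}f,f\rangle=\mathbb{E}_{\tau}[p(\tau)^{2}]$, and invoke Cauchy--Schwarz on the support $\partial(A)$. Both are one-line manipulations once the walk is understood; the paper's argument is purely combinatorial (a monotonicity/union-bound step) and yields the stated $\mu(\partial(A))\ge\mu(A)(1+\tfrac{d}{2k})$ exactly, while yours is an $L^{2}$-style reverse-Markov argument that yields the marginally stronger $\mu(\partial(A))\ge\mu(A)/(1-\tfrac{d}{2k})$ before you relax it with $1/(1-x)\ge1+x$. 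Your bookkeeping point about the down-face measure matching the one in $\mu(\partial(A))$ is correct and is handled implicitly in the paper as well.
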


\begin{proof}
Let $f=1_{A}.$ We have 
\begin{align*}
\left\langle f-\mathrm{T}f,f\right\rangle  & =\Pr_{\sigma\sim X\left(k-1\right)}\Pr_{\tau_{1},\tau_{2}\supset\sigma}\left[\tau_{1}\in A,\tau_{2}\notin A\right].\\
 & \le\Pr\left[\sigma\in\partial(A),f\left(\tau_{2}\right)\notin A\right]\\
 & =\mu\left(\partial(A)\right)-\mu\left(A\right).
\end{align*}
\end{proof}
\bibliographystyle{plain}
\bibliography{refs}

\end{document}